\theoremstyle{acmplain}
\newtheorem{theorem}{Theorem}[section]
\newtheorem{fact}[theorem]{Fact}
\theoremstyle{acmdefinition}
\newtheorem{alg}[theorem]{Algorithm} 
 \newtheorem{rem}[theorem]{Remark} 
 \newcommand{\children}[1]{\mathsf{ch}(#1)} \newcommand{\Gf}{{G_f}}
 \newcommand{\Sigmavars}{V_\Sigma}
 \newcommand{\Sigmasubst}{\sigma_\Sigma} \newcommand{\nneg}{{\sim}}
\newcommand{\by}[1]{(\text{#1})}
\newcommand{\gldiamond}[1]{\Diamond_{#1}}
\newcommand{\univbox}{\mathop{[\forall]}}
\newcommand{\inhom}{v}
\newcommand{\At}{\mathsf A}
\newcommand{\BC}{\mathbf C}
\newcommand{\Roles}{\mathsf R}
\newcommand{\otto}{\leftrightarrow}
\newcommand{\Rules}{\mathcal{R}}
\newcommand{\Noms}{\mathsf{N}}
\newcommand{\Land}{\bigwedge}
\newcommand{\infrule}[2]{\frac{#1}{#2}}
\newcommand{\States}{\mathsf{States}}
\newcommand{\Nodes}{G}
\newcommand{\Seqs}{\mathsf{Seqs}}
\newcommand{\type}{\Gamma}
\newcommand{\typeb}{\Delta}
\newcommand{\types}[1]{\CT(#1)}
\newcommand{\CT}{\mathcal{T}}
\newcommand{\CO}{\mathcal{O}}
\newcommand{\CE}{\mathcal{E}}
\newcommand{\CA}{\mathcal{A}}
\newcommand{\CM}{{\mathcal{M}}}
\newcommand{\Rels}{\mathsf{Rels}}
\newcommand{\Prop}{\mathsf{Prop}}
\newcommand{\NP}{{\upshape\textsc{NP}}\xspace}
\newcommand{\PSpace}{{\upshape\textsc{PSpace}}\xspace}
\newcommand{\ExpTime}{{\mbox{\upshape\textsc{ExpTime}}}\xspace}
\newcommand{\Nat}{\mathbb{N}}
\newcommand{\Rat}{\mathbb{Q}}
\newcommand{\Var}{\mathsf{Var}}
\newcommand{\hearts}{\heartsuit}
\newcommand{\ZZ}{\mathbb{Z}}
\newcommand{\Pow}{\mathcal{P}}
\newcommand{\Bag}{\mathcal{B}}
\newcommand{\Dist}{\mathcal{D}}
\newcommand{\SDist}{\mathcal{S}}
\newcommand{\PV}{\mathcal{V}}
\newcommand{\lsem}{\llbracket}
\newcommand{\rsem}{\rrbracket}
\renewcommand{\theta}{\vartheta}
\newcommand{\PLentails}{\vdash_{\mathit{PL}}}
\newcommand{\Set}{\mathsf{Set}}
\newcommand{\FLang}{\mathcal{F}}
\newcommand{\Sem}[1]{\lsem #1 \rsem}
\newcommand{\Op}{\mathit{op}}
\newcommand{\ms}[1]{\mathsf{#1}}
\newcommand{\ALC}{\mathcal{ALC}}
\newcommand{\ALCO}{\mathcal{ALCO}}
\newcommand{\ALCOQ}{\mathcal{ALCOQ}}
\newcommand{\ALCN}{\mathcal{ALCN}}
\renewcommand{\land}{\wedge}
\renewcommand{\lor}{\vee}
\newcommand{\rcount}[2]{\sharp_#1 #2}
\newcommand{\eqmod}[1]{\equiv_{#1}}
\newcommand*{\@old@slash}{}\let\@old@slash\slash
\def\slash{\relax\ifmmode\delimiter"502F30E\mathopen{}\else\@old@slash\fi}
\newlength{\myboxwidth}
\newcounter{blubber}
\newenvironment{myitemize}
{\begin{itemize}
\setlength{\itemsep}{0.1ex}
\setlength{\parsep}{0cm}
}
{\end{itemize}}
\begin{document}

\title[Coalgebraic Reasoning with Global
  Assumptions in Arithmetic Modal Logics]{Coalgebraic Reasoning with Global
  Assumptions in\\ Arithmetic Modal Logics}
\author{Clemens
  Kupke}
\affiliation{\institution{University of Strathclyde}
  \city{Glasgow}
  \country{UK}}
\author{Dirk Pattinson}
\affiliation{\institution{Australian National University}
  \city{Canberra}
  \country{Australia}}
\author{Lutz Schr\"oder}
\affiliation{
  \institution{Friedrich-Alexander Universit\"{a}t Erlangen-N\"{u}rnberg}
\city{Erlangen}\country{Germany}}

\begin{abstract} We establish a generic upper bound \ExpTime for reasoning with global assumptions (also known as TBoxes) in coalgebraic modal logics. Unlike earlier results of this kind, our bound does not require a tractable set of tableau rules for the instance logics, so that the result applies to wider classes of logics. Examples are Presburger modal logic, which extends graded modal logic with linear inequalities over numbers of successors, and probabilistic modal logic with polynomial inequalities over probabilities. We establish the theoretical upper bound using a type elimination algorithm. We also provide a global caching algorithm that potentially avoids building the entire exponential-sized space of candidate states, and thus offers a basis for practical reasoning. This algorithm still involves frequent fixpoint computations; we show how these can be handled efficiently in a concrete algorithm modelled on Liu and Smolka's linear-time fixpoint algorithm. Finally, we show that the upper complexity bound is preserved under adding nominals to the logic, i.e.\ in coalgebraic hybrid logic.
\end{abstract}

\maketitle

\section{Introduction}\label{sec:intro}

\noindent While modal logic is classically concerned with purely
relational systems (e.g.~\cite{BlackburnEA01}), there is, nowadays,
widespread interest in flavours of modal logic interpreted over
state-based structures in a wider sense, e.g.\ featuring probabilistic
or, more generally, weighted branching. Under the term
\emph{arithmetic modal logics}, we subsume logics that feature
arithmetical constraints on the number or combined weight of
successors. The simplest logics of this type compare weights to
constants, such as graded modal logic~\cite{Fine72} or some variants
of probabilistic modal logic~\cite{LarsenSkou91,HeifetzMongin01}. More
involved examples are \emph{Presburger modal
  logic}~\cite{DemriLugiez10}, which allows Presburger constraints on
numbers of successors, and probabilistic modal logic with
linear~\cite{FaginHalpern94} or
polynomial~\cite{FaginHalpernMegiddo90} inequalities over
probabilities. Presburger modal logic allows for statements like `the
majority of university students are female', or `dance classes have
even numbers of participants', while probabilistic modal logic with
polynomial inequalities can assert, for example, independence of
events.

These logics are the main examples we address in a more general
coalgebraic framework in this paper. Our main observation is that
satisfiability for coalgebraic logics can be decided in a step-by-step
fashion, peeling off one layer of operators at a time. We thus reduce
the overall satisfiability problem to satisfiability in a
\emph{one-step logic} involving only immediate successor states, and
hence no nesting of
modalities~\cite{SchroderPattinson08d,MyersEA09}. We define a
\emph{strict} variant of this \emph{one-step satisfiability problem},
distinguished by a judicious redefinition of its input size; if strict
one-step satisfiability is in \ExpTime, we obtain a (typically
optimal) \ExpTime upper bound for satisfiability under global
assumptions in the full logic. For our two main examples, the
requisite complexity bounds (in fact, even \PSpace) on strict one-step
satisfiability follow in essence directly from known complexity
results in integer programming and the existential theory of the
reals, respectively; in other words, even in fairly involved examples
the complexity bound for the full logic is obtained with comparatively
little effort once the generic result is in place.

Applied to Presburger constraints, our results complement previous
work showing that the complexity of Presburger modal logic without
global assumptions is \PSpace~\cite{DemriLugiez06,DemriLugiez10}, the
same as for the modal logic $K$ (or equivalently the description logic
$\ALC$).  For polynomial inequalities on probabilities, our syntax
generalizes propositional \emph{polynomial weight}
formulae~\cite{FaginHalpernMegiddo90} to a full modal logic allowing
nesting of weights (and global assumptions).

In more detail, our first contribution is to show via a type
elimination algorithm~\cite{Pratt79} that also in presence of global
assumptions (and, hence, in presence of the universal
modality~\cite{GorankoPassy92}),
the satisfiability problem for coalgebraic modal logics is no harder
than for $K$, i.e.\ in \ExpTime, provided that strict one-step
satisfiability is in \ExpTime. Additionally, we show that this result
can be extended to cover nominals, i.e.\ to coalgebraic hybrid
logic~\cite{MyersEA09,SchroderEA09}.  In the Presburger example, we
thus obtain that reasoning with global assumptions in Presburger
hybrid logic, equivalently reasoning with general TBoxes in the
extension of the description logic $\ALCO$ with Presburger constraints
(which subsumes $\ALCOQ$), remains in \ExpTime.

We subsequently refine the algorithm to use global caching in the
spirit of Gor\'e and Nguyen~\cite{GoreNguyen13}, i.e.\ bottom-up
expansion of a tableau-like graph and propagation of satisfiability
and unsatisfiability through the graph. We thus potentially avoid
constructing the whole exponential-sized tableau, and provide
maneuvering space for heuristic optimization. Global caching
algorithms have been demonstrated to perform well in
practice~\cite{Gore:2008:EEG}. Moreover, we go on to present a
concrete algorithm, in which the fixpoint computations featuring in
the propagation step of the global caching algorithm are implemented
efficiently in the style of Liu and Smolka~\cite{lism98:simp}.

\paragraph{Organization} We discuss some preliminaries on fixpoints in
Section~\ref{sec:prelims}, and recall the generic framework of
coalgebraic logic in Section~\ref{sec:colog}. In
Section~\ref{sec:oss}, we discuss the concepts of one-step logic and
one-step satisfiability that underlie our generic algorithms. We
establish the generic \ExpTime upper bound for reasoning with global
assumptions in coalgebraic modal logics via type elimination in
Section~\ref{sec:type-elim}. In Sections~\ref{sec:caching}
and~\ref{sec:concrete-alg}, we present the global caching algorithm
and its concretization. We extend the \ExpTime complexity result to
coalgebraic hybrid logics in Section~\ref{sec:nominals}.

\paragraph{Related Work} Our algorithms use a semantic method, and as
such complement earlier results on global caching in coalgebraic
description logics that rely on tractable sets of tableau
rules~\cite{GoreEA10a}, which are not currently available for our
leading examples.  (In fact, tableau-style axiomatizations of various
logics of linear inequalities over the reals and over the integers
have been given in earlier work~\cite{Kupke:2010:MLL}; however, over
the integers the rules appear to be incomplete: if $\sharp p$ denotes
the integer weight of successors satisfying $p$, then the formula
$2\sharp\top <1\lor 2\sharp\top>1$ is clearly valid, but cannot be
derived.)

Demri and Lugiez' proof that Presburger modal logic \emph{without}
global assumptions is in \PSpace~\cite{DemriLugiez06,DemriLugiez10}
can be viewed as showing that strict one-step satisfiability in
Presburger modal logic is in \PSpace (as we discuss below, more recent
results in integer programming simplify this proof). Generally, our
coalgebraic treatment of Presburger modal logic and related logics
relies on an equivalence of the standard Kripke semantics of these
logics and an alternative semantics in terms of non-negative-integer-weighted
systems called \emph{multigraphs}~\cite{DAgostinoVisser02}, the point
being that the latter, unlike the former, is subsumed by the semantic
framework of coalgebraic logic (we explain details in
Section~\ref{sec:colog}).

Work related to XML query languages has shown that reasoning in
Presburger fixpoint logic is \ExpTime complete~\cite{SeidlEA08}, and
that a logic with Presburger constraints and nominals is in
\ExpTime~\cite{BarcenasLavalle13}, when these logics are interpreted
\emph{over finite trees}, thus not subsuming our \ExpTime upper bound
for Presburger modal logic with global assumptions. It may be possible
to obtain the latter bound alternatively via looping tree automata
like for graded modal logic~\cite{TobiesThesis}.  The description
logic~$\ALCN$ (featuring the basic~$\ALC$ operators and number
restrictions $\ge n.\,\top$) has been extended with explicit
quantification over integer variables and number restrictions
mentioning integer variables~\cite{BaaderSattler96}, in formulae such
as $\downarrow n.\,(({=} n\ R.\,\top)\land ({=}n\ S.\,\top))$ with $n$
an integer variable, and $\downarrow$ read as existential
quantification, so the example formula says that there are as many
$R$-successors as $S$-successors. This logic remains decidable if
quantification is restricted to be existential. It appears to be
incomparable to Presburger modal logic in that it does not support
general linear inequalities or qualified number restrictions, but on
the other hand allows the same integer variable to be used at
different modal depths.

Reasoning with polynomial inequalities over probabilities has been
studied in propositional logics~\cite{FaginHalpernMegiddo90} and in
many-dimensional modal logics~\cite{GutierrezBasultoEA17}, which work
with a single distribution on worlds rather than with world-dependent
probability distributions as
in~\cite{LarsenSkou91,HeifetzMongin01,FaginHalpern94}.

This paper is a revised and extended version of a previous conference
publication~\cite{KupkeEA15}; besides including full proofs and
additional examples, it contains new material on the concretized
version of the global caching algorithm
(Section~\ref{sec:concrete-alg}) and on \ExpTime reasoning with global
assumptions in coalgebraic hybrid logics (Section~\ref{sec:nominals}).

\section{Preliminaries}\label{sec:prelims}

\noindent Our reasoning algorithms will centrally involve fixpoint
computations on powersets of finite sets; we recall some
notation. Let~$X$ be a finite set, and let $F\colon\Pow X\to\Pow X$ be
a function that is monotone with respect to set inclusion. A set
$Y\in\Pow X$ is a \emph{prefixpoint} of~$F$ if $F(Y)\subseteq Y$; a
\emph{postfixpoint} of~$F$ if $Y\subseteq F(Y)$; and a \emph{fixpoint}
of~$F$ if $Y=F(Y)$. By the Knaster-Tarski fixpoint theorem,~$F$ has a
least fixpoint $\mu F$ and a greatest fixpoint $\nu F$. Moreover,
$\mu F$ is even the least prefixpoint of~$F$, and $\nu F$ the greatest
postfixpoint. We alternatively use a $\mu$-calculus-like notation,
writing $\mu S.\,E(S)$ and $\nu S.\,E(S)$ for the least and greatest
fixpoints, respectively, of the function on~$\Pow X$ that maps
$S\in\Pow X$ to $E(S)$, where~$E$ is an expression (in an informal
sense) depending on~$S$. Since~$X$ is finite, we can compute least and
greatest fixpoints by \emph{fixpoint iteration} according to Kleene's
fixpoint theorem: Given a monotone~$F$ as above, the sets
$F^n(\emptyset)$ (where $F^n$ denotes $n$-fold application of~$F$)
form an ascending chain
\begin{equation*}
  \emptyset = F^0(\emptyset) \subseteq F(\emptyset) \subseteq
  F^2(\emptyset)\subseteq\dots,
\end{equation*}
which must stabilize at some $F^k(\emptyset)$ (i.e.\
$F^{k+1}(\emptyset)=F^k(\emptyset)$), and then $\mu
F=F^k(\emptyset)$. Similarly, the sets $F^n(X)$ form a descending
chain, which must stabilize at some $F^k(X)$, and then $\nu F=F^k(X)$.

\section{Coalgebraic Logic}\label{sec:colog}

As indicated above, we cast our results in the generic framework of
\emph{coalgebraic logic}~\cite{CirsteaEA11}, which allows us to treat
structurally different modal logics, such as Presburger and
probabilistic modal logics, in a uniform way. We briefly recall the
main concepts needed. Familiarity with basic concepts of category
theory (e.g.~\cite{Awodey10}) will be helpful, but we will explain the
requisite definitions as far as necessary for the present
purposes. Overall, coalgebraic logic is concerned with the
specification of state-based systems in a general sense by means of
\emph{modalities}, which are logical connectives that traverse the
transition structure in specific ways. The basic example of such a
modal logic is what for our present purposes we shall term
\emph{relational} modal logic (e.g.~\cite{BlackburnEA01}). Here,
states are connected by a successor relation, and modalities
$\Box,\Diamond$ talk about the successors of a state: a formula of the
form~$\Box\phi$ holds for a state if \emph{all} its successors
satisfy~$\phi$, and a formula of the form~$\Diamond\phi$ holds for a
state if it has \emph{some} successor that satisfies~$\phi$. Our main
interest, however, is in logics where the transition structure of
states goes beyond a simple successor relation, with correspondingly
adapted, and often more complex, modalities.

We parametrize modal logics in terms of their syntax and their
coalgebraic semantics. In the \textbf{syntax}, we work with a modal
similarity type $\Lambda$ of modal operators with given finite
arities.  The set $\FLang(\Lambda)$ of \emph{$\Lambda$-formulae} is
then given by the grammar
\begin{equation}\label{eq:grammar}
  \FLang(\Lambda)\owns\phi,\psi::= \bot\mid\phi\land\psi\mid
  \neg\phi\mid \hearts(\phi_1,\dots,\phi_n)\qquad (\hearts\in\Lambda\text{ $n$-ary}).
\end{equation}
We omit explicit propositional atoms; these can be regarded as nullary
modalities.  The operators $\top$, $\to$, $\lor$, $\otto$ are assumed
to be defined in the standard way. Standard examples of modal
operators include the mentioned (unary) box and diamond operators
$\Box,\Diamond$ of relational modal logic; as indicated above, in the
present setting, our main interest is in more complex examples
introduced in Sections~\ref{sec:presburger} and~\ref{sec:prob}. For
the complexity analysis of reasoning problems, we assume a suitable
encoding of the modal operators in~$\Lambda$ as strings over some
alphabet. The \emph{size}~$|\phi|$ of a formula~$\phi$ is then defined
by counting~$1$ for each Boolean operation ($\bot$, $\neg$, $\land$),
and for each modality~$\hearts\in\Lambda$ the length of the encoding
of~$\hearts$.  We assume that numbers occurring in the description of
modal operators are coded in binary. To ease notation, we generally
let $\epsilon\phi$, for $\epsilon\in\{-1,1\}$, denote $\phi$ if
$\epsilon=1$ and $\neg\phi$ if $\epsilon=-1$.

The \textbf{semantics} of the logic is formulated in the paradigm of
\emph{universal coalgebra}~\cite{Rutten00}, in which a wide range of
state-based system types, e.g.\ relational, neighbourhood-based,
probabilistic, weighted, or game-based systems, is subsumed under the
notion of functor coalgebra. Here, a \emph{functor}~$T$ on the
category of sets assigns to each set~$X$ a set~$TX$, thought of as a
type of structured collections over~$X$, and to each map
$f\colon X\to Y$ a map $Tf\colon TX\to TX$, preserving identities and
composition. A standard example is the \emph{(covariant) powerset
  functor} $\Pow$, which maps a set~$X$ to its powerset $\Pow X$ and a
map $f\colon X\to Y$ to the direct image map
$\Pow f\colon \Pow X\to\Pow Y$, i.e.\ $(\Pow f)(A)=f[A]$ for
$A\in\Pow X$. In this case, structured collections are thus just
sets. A further example, more relevant to our present purposes, and to
be taken up again in Section~\ref{sec:prob}, is the \emph{(discrete)
  distribution functor} $\Dist$. This functor assigns to a set~$X$ the
set of discrete probability distributions on~$X$, which thus play the
role of structured collections, and to a map $f\colon X\to Y$ the map
$\Dist f\colon\Dist X\to\Dist Y$ that takes image measures; i.e.\
$(\Dist f)(\mu)(B)=\mu(f^{-1}[B])$ for $B\subseteq Y$. We recall here
that a probability distribution~$\mu$ on~$X$ is \emph{discrete} if
$\mu(A)=\sum_{x\in A}\mu(\{x\})$ for every $A\subseteq X$, i.e.\ we
can equivalently regard~$\mu$ as being given by its \emph{probability
  mass function} $x\mapsto\mu(\{x\})$. Note that the support
$\{x\mid \mu(\{x\})\neq 0\}$ of~$\mu$ is then necessarily countable. A
functor~$T$ defines a system type in the shape of its class of
\emph{$T$-coalgebras}, which are pairs $C=(X,\gamma)$ consisting of a
set~$X$ of \emph{states} and a \emph{transition map}
\begin{equation}\label{eq:transition}
  \gamma\colon X\to TX,
\end{equation}
thought of as assigning to each state~$x$ a structured
collection~$\gamma(x)\in TX$ of successors. For instance,
$\Pow$-coalgebras are just transition systems or Kripke frames, as
they assign to each state a set of successors (i.e.~they capture
precisely the semantic structures that underlie relational modal logic
as recalled at the beginning of the section), and $\Dist$-coalgebras
are Markov chains, as they assign to each state a distribution over
successors.

We further parametrize the semantics over an interpretation of
modalities as predicate liftings, as follows.
Recall~\cite{Pattinson04,Schroder08} that an \emph{$n$-ary predicate
  lifting} for $T$ is a natural transformation
\begin{equation*}
  \lambda\colon Q^n\to Q\circ T^\Op
\end{equation*}
where~$Q$ denotes the \emph{contravariant powerset functor}. We shall
use predicate liftings in connection with the transition
map~\eqref{eq:transition} to let modalities look one step ahead in the
transition structure of a coalgebra. The definition of predicate
liftings unfolds as follows. Recall that every category~$\BC$ has a
\emph{dual} category~$\BC^\Op$, which has the same objects as~$\BC$
and the same morphisms, but with the direction of morphisms
reversed. In particular, $\Set^\Op$, the dual category of the
category~$\Set$ of sets and maps, has sets as objects, and maps
$Y\to X$ as morphisms $X\to Y$. Then the contravariant powerset
functor $Q\colon \Set^\Op\to\Set$ assigns to a set~$X$ its powerset
$QX=\Pow X$, and to a map $f:X\to Y$ the preimage map $Q f:Q Y\to QX$,
given by $(Qf)(B)=f^{-1}[B]$ for $B\subseteq Y$. By $Q^n$, we denote
the pointwise $n$-th Cartesian power of~$Q$, i.e.\ $Q^nX=(QX)^n$. The
functor $T^\Op\colon \Set^\Op\to\Set^\Op$ acts like~$T$. Thus,
$\lambda$ is a family of maps $\lambda_X\colon (QX)^n\to Q(TX)$
indexed over all sets~$X$, satisfying the \emph{naturality} equation
$\lambda_X\circ (Q f)^{n}=Q(T^\Op f)\circ\lambda_Y$ for
$f\colon X\to Y$. That is,~$\lambda_X$ takes~$n$ subsets of~$X$ as
arguments, and returns a subset of~$TX$. The naturality condition
amounts to commutation of~$\lambda$ with preimage, i.e.\
\begin{equation}\label{eq:naturality}
\lambda_X(f^{-1}[B_1],\dots,f^{-1}[B_n])=Tf^{-1}[\lambda_Y(B_1,\dots,B_n)]
\end{equation}
for $B_1,\dots,B_n\subseteq Y$. We assign an $n$-ary predicate lifting
$\Sem{\hearts}$ to each modality $\hearts\in\Lambda$, of arity~$n$,
thus determining the semantics of~$\hearts$.  For $t\in TX$ and
$A_1,\dots,A_n\subseteq TX$, we write
\begin{equation}\label{eq:lifting-notation}
t\models\hearts(A_1,\dots,A_n)
\end{equation}
to abbreviate $t\in\Sem{\hearts}_X(A_1,\dots,A_n)$.

Predicate liftings thus turn predicates on the set $X$ of states into
predicates on the set $TX$ of structured collections of successors. A
basic example is the predicate lifting for the usual diamond modality
$\Diamond$, given by
$\Sem{\Diamond}_X(A)=\{B\in\Pow X\mid B\cap A\neq\emptyset
\rbrace$. We will see more examples in Sections~\ref{sec:presburger}
and~\ref{sec:prob}. For purposes of the generic technical development,
\emph{we fix the data $\Lambda$,~$T$, and~$\Sem{\hearts}$ throughout,
  and by abuse of notation sometimes refer to them jointly as (the
  logic)~$\Lambda$.}

Satisfaction $x\models_C\phi$ (or just $x\models\phi$ when~$C$ is
clear from the context) of formulae $\phi\in\FLang(\Lambda)$ in
states~$x$ of a coalgebra $C=(X,\gamma)$ is defined inductively by
\begin{align*}
x& \not\models_C\bot \\
x& \models_C\phi\land\psi && \hspace{-4em}\text{iff}\quad x\models_C\phi\text{ and }x\models_C\psi\\
x& \models_C\neg\phi &&\hspace{-4em} \text{iff}\quad x\not\models_C\phi\\
x&\models_C \hearts(\phi_1,\dots,\phi_n)&&\hspace{-4em} 
  \text{iff}
  \quad
  \gamma(x)\models\hearts(\Sem{\phi_1}_C,\dots,\Sem{\phi_n}_C)
\end{align*}
where we write $\Sem{\phi}_C=\{x\in X\mid x\models_C\phi\}$ (and use
notation as per~\eqref{eq:lifting-notation}). Continuing the above
example, the predicate lifting $\Sem{\Diamond}$ thus induces exactly
the usual semantics of $\Diamond$: Given a $\Pow$-coalgebra, i.e.\
Kripke frame, $(X,\gamma\colon X\to\Pow X)$, we have
$x\models_C\Diamond\phi$ iff the set $\gamma(x)$ of successors of $x$
intersects with~$\Sem{\phi}_C$, i.e.\ iff~$x$ has a successor that
satisfies~$\phi$.

We will be interested in satisfiability under global assumptions, or,
in description logic terminology, reasoning with general
TBoxes~\cite{BaaderEA03}, that is, under background axioms that are
required to hold in every state of a model:
\begin{definition}[Global assumptions]
  Given a formula~$\psi$, the \emph{global assumption}, a coalgebra
  $C=(X,\gamma)$ is a \emph{$\psi$-model} if $\Sem{\psi}_C=X$; and a
  formula $\phi$ is \emph{$\psi$-satisfiable} if there exists a
  $\psi$-model $C$ such that $\Sem{\phi}_C\neq\emptyset$. The
  \emph{satisfiability problem under global assumptions} is to decide,
  given~$\psi$ and~$\phi$, whether~$\phi$ is $\psi$-satisfiable. We
  extend these notions to sets~$\Gamma$ of formulae: We write
  $x\models_C\Gamma$ if $x\models\phi$ for all $\phi\in\Gamma$, and we
  say that $\Gamma$ is \emph{$\psi$-satisfiable} if there exists a
  state~$x$ in a $\psi$-model~$C$ such that $x\models_C\Gamma$. For
  distinction, we will occasionally refer to satisfiability in the
  absence of global assumptions, i.e. $\top$-satisfiability, as
  \emph{plain satisfiability}.
\end{definition}
\noindent
\begin{rem}
  While the typical complexity of plain satisfiability is \PSpace,
  that of satisfiability under global assumptions is \ExpTime. In
  particular, this holds for the basic example of relational modal
  logic~\cite{Ladner77,FischerLadner79}.

  As indicated above, global assumptions are referred to as \emph{TBox
    axioms} in description logic parlance, in honour of the fact that
  they capture what is, in that context, called \emph{terminological
    knowledge}: They record facts that hold about the world at large,
  such as `every car has a motor' (formalized, e.g., in relational
  modal logic as $\psi:=(\mathsf{Car}\to\Diamond\,\mathsf{Motor})$ if
  the relation that underlies~$\Diamond$ is understood as
  parthood). Contrastingly, a formula~$\phi$ is satisfiable under the
  global assumption~$\psi$ as soon as $\phi$ holds in \emph{some}
  state of some $\psi$-model, so~$\phi$ is thought of as describing
  some states (\emph{individuals} in description logic terminology)
  but not as being universally true. Correspondingly, the reasoning
  task of checking satisfiability (under global assumptions) is called
  \emph{concept satisfiability (under general TBoxes)} in description
  logic. For instance, the atomic proposition (`concept')
  $\mathsf{Car}$ is $\psi$-satisfiable in the above example, but not
  of course necessarily true in every state of a $\psi$-model.

  \emph{Global consequence}, i.e.~entailment between global
  assumptions, reduces to satisfiability under global assumptions: We
  say that a formula~$\phi$ is a \emph{global consequence} of a
  formula~$\psi$ if every $\psi$-model is also a $\phi$-model. Then
  $\phi$ is a global consequence of~$\psi$ iff $\neg\phi$ is not
  $\psi$-satisfiable. For instance, in relational modal logic,
  $\Box\psi$ is always a global consequence of~$\psi$,
  i.e.~$\neg\Box\psi$ is not $\psi$-satisfiable; this fact corresponds
  to the well-known necessitation rule of relational modal
  logic~\cite{BlackburnEA01}.
\end{rem}

\begin{rem}\label{rem:univ-mod}
  As indicated in the introduction, for purposes of the complexity
  analysis, global assumptions are equivalent to the universal
  modality. We make this claim more precise as follows. We define
  \emph{coalgebraic modal logic with the universal modality} by
  extending the grammar~\eqref{eq:grammar} with an additional
  alternative
  \begin{equation*}
    \dots \mid \univbox \phi,
  \end{equation*}
  and the semantics with the clause
  \begin{equation*}
    x\models_C \univbox\phi\quad\text{iff}\quad y\models_C\phi\text{ for all $y\in X$}
  \end{equation*}
  for a coalgebra $C=(X,\gamma)$. In this logic, we restrict attention
  to plain satisfiability checking, asking whether, for a given
  formula~$\phi$, there exists a state~$x$ in a coalgebra~$C$ such
  that $x\models_C\phi$. Then satisfiability under global assumptions
  clearly reduces in logarithmic space to plain satisfiability in
  coalgebraic modal logic with the universal modality -- a
  formula~$\phi$ is satisfiable under the global assumption~$\psi$ iff
  $\phi\land\univbox\psi$ is satisfiable.

  Conversely, satisfiability of a formula~$\phi$ in coalgebraic modal
  logic with the universal modality is reducible in nondeterministic
  polynomial time to satisfiability under global assumptions in
  coalgebraic modal logic, as follows. Call a subformula of~$\phi$ a
  \emph{$\univbox$-subformula} if it is of the shape $\univbox\psi$,
  and let $\univbox\psi_1,\dots,\univbox\psi_n$ be the
  $\univbox$-subformulae of~$\phi$. Given a subset
  $U\subseteq\{1,\dots,n\}$ and a subformula~$\chi$ of~$\phi$, denote
  by $\chi[U]$ the $\univbox$-free formula obtained from~$\chi$ by
  replacing every $\univbox$-subformula $\univbox\psi_k$ that is not
  in scope of a further~$\univbox$-operator by~$\top$ if $k\in U$, and
  by $\bot$ otherwise.  We claim that
  \begin{quote}
    ($*$) $\phi$ is satisfiable (in coalgebraic modal logic with the
    universal modality) iff there is~$U\subseteq\{1,\dots,n\}$ such
    that~$\phi[U]$, as well as each formula~$\neg\psi_k[U]$ for
    $k\in\{1,\dots,n\}\setminus U$, are (separately) satisfiable under
    the global assumption~$\psi_U$ given by
    $\psi_U=\Land_{k\in U}\psi_k[U]$.
  \end{quote}
  Using~($*$), we can clearly reduce satisfiability in coalgebraic
  modal logic with the universal modality to satisfiability under
  global assumptions in coalgebraic modal logic as claimed by just
  guessing~$U$. It remains to prove~($*$). For the `only if'
  direction, suppose that $x\models_C\phi$ for some state~$x$ in a
  $T$-coalgebra~$C=(X,\gamma)$. Put
  $U=\{k\mid x\models_C\univbox\psi_k\}$. It is readily checked that,
  in the above notation,~$C$ is a $\psi_U$-model, $x\models_C\phi[U]$,
  and for each $k\in\{1,\dots,n\}\setminus U$, $\neg\psi_k[U]$ is
  satisfied in some state of~$C$. For the converse implication, let
  $U\subseteq\{1,\dots,n\}$, let $C$ and $C_k$, for
  $k\in\{1,\dots,n\}\setminus U$, be $\psi_U$-models, let
  $x\models_C\phi[U]$, and let $x_k\models_{C_k}\neg\psi_k[U]$ for
  $k\in\{1,\dots,n\}\setminus U$. Let $D$ be the disjoint union of~$C$
  and the~$C_k$; it is straightforward to check that $x\models_D\phi.$

  It follows that from the exponential-time upper bound for
  satisfiability checking under global assumptions proved in
  Section~\ref{sec:type-elim}, we obtain an exponential-time upper
  bound for satisfiability checking in coalgebraic modal logic with
  the universal modality. On the other hand, the non-deterministic
  reduction described above of course does not allow for inheriting
  practical reasoning algorithms. The design of tableau-based
  algorithms in presence of the universal modality is faced with the
  challenge that instances of $\univbox$ uncovered deep in the formula
  by the rule-based decomposition will subsequently influence the
  entire tableau built so far. Our global caching algorithm
  (Section~\ref{sec:caching}) is meant for reasoning under global
  assumptions; we leave the design of a practical generic reasoning
  algorithm for coalgebraic modal logic with the universal modality to
  future work.
\end{rem}
\noindent Generic algorithms in coalgebraic logic frequently rely on
complete rule sets for the given modal
operators~\cite{SchroderPattinson09a} (an overview of the relevant
concepts is given in Remark~\ref{rem:rules}); in particular, such a
rule set is assumed by our previous algorithm for satisfiability
checking under global assumptions in coalgebraic hybrid
logic~\cite{SchroderEA09}. In the present paper, our interest is in
cases for which suitable rule sets are not (currently) available. We
proceed to present our leading examples of this kind, Presburger modal
logic and a probabilistic modal logic with polynomial
inequalities. For the sake of readability, we focus on the case with a
single (weighted) transition relation, and omit propositional
atoms. Both propositional atoms and indexed transition relations are
easily added, e.g.\ using compositionality results in coalgebraic
logic~\cite{SchroderPattinson11MSCS}, and in fact we use them freely
in the examples; more details on this point will be provided in
Remark~\ref{rem:atoms}.

\subsection{Presburger Modal Logic}
\label{sec:presburger}

Presburger modal logic~\cite{DemriLugiez10} admits statements in
Presburger arithmetic over numbers $\sharp\phi$ of successors
satisfying a formula $\phi$. Throughout, we let $\Rels$ denote the set
$\{<,>,=\}\cup\{\eqmod{k}\mid k\in\Nat\}$ of \emph{arithmetic
  relations}, with $\eqmod{k}$ read as congruence modulo
$k$. Syntactically, Presburger modal logic is then defined in our
syntactic framework by taking the modal similarity type
\begin{equation*}
  \Lambda =\{L_{u_1,\dots,u_n;\sim \inhom}\mid
  {\sim}\in\Rels, n\in\Nat, u_1,\dots,u_n,\inhom\in\ZZ\}
\end{equation*}
where $L_{u_1,\dots,u_n;\sim \inhom}$ has arity~$n$. The application
of a modal operator $L_{u_1,\dots,u_n;\sim \inhom}$ to argument
formulae $\phi_1,\dots,\phi_n$ is written
\begin{equation*}\textstyle
  \textstyle \sum_{i=1}^nu_i\cdot\sharp \phi_i\sim \inhom.
\end{equation*}
We refer to these modalities as \emph{Presburger constraints}. Weak
inequalities can be coded as strict ones, replacing, e.g., $\ge k$
with $>k-1$. The numbers $u_i$ and $\inhom$, as well as the modulus
$k$ in $\eqmod{k}$, are referred to as the \emph{coefficients} of a
Presburger constraint. We also apply this terminology (Presburger
constraint, coefficient) to constraints of the form
$\sum_{i=1}^nu_i\cdot x_i\sim \inhom$ in general, interpreted over the
non-negative integers.

The semantics of Presburger modal logic was originally defined over
standard Kripke frames; in order to make sense of sums with arbitrary
(possibly negative) integer coefficients, one needs to
restrict to finitely branching frames. We consider an alternative
semantics in terms of \emph{multigraphs}, which have some key
technical advantages~\cite{DAgostinoVisser02}. Informally, a
multigraph is like a Kripke frame but with every transition edge
annotated with a non-negative-integer-valued multiplicity; ordinary
finitely branching Kripke frames can be viewed as multigraphs by just
taking edges to be transitions with multiplicity~$1$. Formally, a
multigraph can be seen as a coalgebra for the \emph{finite multiset
  functor}~$\Bag$: For a set $X$, $\Bag X$ consists of the
\emph{finite multisets over $X$}, which are maps $\mu\colon X\to\Nat$
with finite support, i.e.\ $\mu(x)>0$ for only finitely many $x$.  We
view $\mu$ as an $\Nat$-valued measure, and write
$\mu(Y)=\sum_{x\in Y}\mu(x)$ for $Y\subseteq X$.  Then, $\Bag f$, for
maps $f\colon X\to Y$, acts as image measure formation in the same way
as the distribution functor~$\Dist$ described above, i.e.\
$(\Bag f)(\mu)(B)=\mu(f^{-1}[B])$ for $\mu\in\Bag X$ and
$B\subseteq Y$. A coalgebra $\gamma\colon X\to\Bag X$ assigns to each
state $x$ a multiset $\gamma(x)$ of successor states, i.e.\ each
successor state is assigned a transition multiplicity.

The semantics of the modal operators is then given by the predicate liftings
\begin{equation*}\textstyle
  \Sem{L_{u_1,\dots,u_n;\sim \inhom}}_X(A_1,\dots,A_n) =
  \{\mu\in\Bag X \mid \sum_{i=1}^n u_i \cdot \mu(A_i) \sim
    \inhom\},
\end{equation*}
that is, a state $x$ in a $\Bag$-coalgebra $C=(X,\gamma)$ satisfies
$\sum_{i=1}^n u_i \cdot \sharp\phi_i\sim \inhom$ iff
$\sum_{i=1}^n u_i \cdot \gamma(x)(\Sem{\phi_i}_C)\sim
\inhom$.

\begin{rem}
  \emph{Graded modal logic}~\cite{Fine72} is interpreted over the same
  systems (originally Kripke frames, equivalently multigraphs) as
  Presburger modal logic. It combines a Boolean propositional base
  with modalities $\gldiamond{k}$ `in more than~$k$ successors'; these
  have made their way into modern expressive description logics in the
  shape of \emph{qualified number restrictions}~\cite{BaaderEA03}. The
  multigraph semantics of graded modal logic is captured
  coalgebraically by assigning to $\gldiamond{k}$ the predicate
  lifting for~$\Bag$ given by
  $\Sem{\gldiamond{k}}_X(A)=\{\mu\in\Bag(X)\mid\mu(A)>k\}$.
  Presburger modal logic subsumes graded modal logic, via a
  translation~$t$ of graded modal logic into Presburger modal logic
  that is defined by commutation with all Boolean connectives and
  $t(\gldiamond{k}\phi)=(\sharp(t(\phi))>k)$.
\end{rem}
\noindent We note that satisfiability is the same over Kripke frames
and over multigraphs:

\begin{lemma}~\cite[Remark~6]{Schroder07} \cite[Lemma 2.4]{SchroderVenema18} \label{lem:multi-vs-kripke}
  A formula $\phi$ is $\psi$-satisfiable over multigraphs iff $\phi$
  is $\psi$-satisfiable over Kripke frames.
\end{lemma}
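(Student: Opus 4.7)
The plan is to prove the two directions separately, with the multigraph-to-Kripke direction doing the substantive work. For the easier direction, I would observe that any finitely branching Kripke frame $(X,R)$ can be viewed as a $\Bag$-coalgebra $(X,\gamma)$ by setting $\gamma(x)(y)=1$ if $xRy$ and $0$ otherwise. Under this identification, $\gamma(x)(\Sem{\chi}_C)$ counts exactly the $R$-successors of $x$ that satisfy $\chi$, so the multigraph predicate liftings for Presburger constraints reproduce the standard Kripke semantics, and Kripke $\psi$-satisfiability of $\phi$ immediately yields multigraph $\psi$-satisfiability.

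For the converse, given a multigraph $\psi$-model $C=(X,\gamma)$ with $x_0\models_C\phi$, I would tree-unravel $C$ into a Kripke frame $C'=(X',R')$. The states of $C'$ are finite \emph{indexed paths} $(x_0,(y_1,i_1),\dots,(y_n,i_n))$ with $n\ge 0$, $\gamma(y_{k-1})(y_k)>0$ for each $k$ (setting $y_0:=x_0$), and $1\le i_k\le\gamma(y_{k-1})(y_k)$; the index $i_k$ distinguishes the parallel copies of an edge encoded by the multiplicity. The relation $R'$ is one-step path extension, and the projection $f\colon X'\to X$ sends a path to its final state. By construction, for every $s\in X'$ with $f(s)=x$ and every $y\in X$, there are exactly $\gamma(x)(y)$ successors $s'$ of $s$ in $C'$ with $f(s')=y$; since each $\gamma(x)$ has finite support, $C'$ is finitely branching, so the Kripke semantics of Presburger modalities is well-defined there.

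The core step is then a routine induction on formulae $\chi$ showing $s\models_{C'}\chi\iff f(s)\models_C\chi$. Boolean cases are immediate. For a Presburger constraint $\sum_i u_i\cdot\sharp\chi_i\sim v$, the counting property combined with the induction hypothesis yields that the number of $R'$-successors of $s$ in $C'$ satisfying $\chi_i$ equals $\gamma(f(s))(\Sem{\chi_i}_C)$, so the constraint holds at $s$ in $C'$ iff it holds at $f(s)$ in $C$. Applying this to $\psi$ shows $C'$ is a $\psi$-model, and applying it to $\phi$ at the length-zero path $(x_0)$ shows that $\phi$ is satisfied there.

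The main obstacle is setting up the counting property of the unraveling precisely and verifying the modal case of the induction. The key device is the explicit multiplicity index $i_k$, which turns a single multigraph edge of weight $m$ into exactly $m$ distinct Kripke edges, thereby making the number of Kripke successors mapped into $\Sem{\chi_i}_C$ agree with the multiset weight used in the multigraph semantics; negative or modular coefficients play no role in this counting and are handled uniformly.
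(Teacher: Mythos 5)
Your proof is correct and implements exactly the idea the paper gives for the non-trivial direction (``making copies of states to accommodate multiplicities''), realized here via an unravelling into multiplicity-indexed paths; the counting property and the induction on formulae are the right bookkeeping, and the easy direction via unit multiplicities matches the paper's remark verbatim. No gaps.
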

\noindent (The proof of the non-trivial direction is by making copies
of states to accommodate multiplicities.)
\begin{rem}
  From the point of view of the present work, the technical reason to
  work with multigraphs rather than Kripke frames in the semantics of
  Presburger modal logic is that the key naturality
  condition~\eqref{eq:naturality} fails over Kripke semantics, i.e.\
  for the powerset functor. Beyond the mere fact that for this reason,
  our methods do not apply to the Kripke semantics of Presburger or
  graded modal logic, we note that indeed key results of coalgebraic
  modal logic fail to hold for this semantics. For instance, we shall
  prove later (Lemma~\ref{lem:one-step-models}) that coalgebraic modal
  logic has the exponential model property, i.e.\ every satisfiable
  formula~$\phi$ has a model with at most exponentially many states in
  the number of subformulae of~$\phi$. Over Kripke semantics, this
  clearly fails already for simple formulae such as~$\sharp(\top)>k$.
\end{rem}
\begin{rem}\label{rem:atoms}
  As indicated above, the overall setup generalizes effortlessly to
  allow for both propositional atoms and multiple (weighted)
  transition relations: Let~$\At$ be a set of \emph{propositional
    atoms} and~$\Roles$ a set of \emph{relation names} (\emph{atomic
    concepts} and \emph{roles}, respectively, in description logic
  terminology). We then take the modal operators to be the
  propositional atoms and all operators
  \begin{equation*}\textstyle
    L_{u_1^{r_1},\dots,u_n^{r_n};\sim \inhom}
  \end{equation*}
  where ${\sim}\in\Rels$, $n\in\Nat$, $u_1,\dots,u_n,\inhom\in\ZZ$,
  and $r_1,\dots,r_n \in \Roles$. The arity of
  $L_{u_1^{r_1},\dots,u_n^{r_n};\sim \inhom}$ is~$n$, and the
  application of $L_{u_1^{r_1},\dots,u_n^{r_n};\sim \inhom}$ to
  argument formulae $\phi_1,\dots,\phi_n $ is written
  \begin{equation*}
    \textstyle
    \sum_{i=1}^nu_i\cdot\sharp_{r_i} \phi_i\sim \inhom
  \end{equation*}
  where $\sharp_r(\cdot)$ is meant to represent the number of
  successors along the (weighted) transition relation $r$. The logic
  is then interpreted over structures that assign to each state~$x$ a
  subset of~$\At$ (of propositional atoms that hold at~$x$) and
  $\Roles$-many multisets of successors. Such structures as coalgebras
  for the functor that maps a set~$X$ to
  $\Pow \At\times \Bag X^\Roles$; the associated predicate liftings
  are given by
  \begin{align*}
    \Sem{L_{u_1^{r_1},\dots,u_n^{r_n};\sim \inhom}}_X(A_1,\dots,A_n) & =
                                                                  \{(U,f)\in\Pow\At\times (\Bag X)^\Roles \mid \textstyle\sum_{i=1}^n u_i \cdot f(r_i)(A_i) \sim
                                                                  \inhom\}\\
    \Sem{p}_X & = \{(U,f)\in\Pow\At\times (\Bag X)^\Roles \mid p\in U\}.
  \end{align*}
  The effect of these extensions on the technical development does not
  go beyond heavier notation, so as announced above we restrict the
  exposition to only a single transition relation and no propositional
  atoms, for readability.
\end{rem}
\begin{rem}\label{rem:graded-rules}
  Two of us (Kupke and Pattinson) have exhibited modal sequent rules
  for various modal logics of linear inequalities, both over the
  non-negative reals (e.g.\ probabilistic and stochastic logics) and
  over the non-negative integers~\cite{Kupke:2010:MLL}. One of these
  logics can be seen as the fragment of Presburger modal logic
  obtained by removing modular congruence~$\equiv_k$. Soundness and
  completeness of the rules for this logic would imply our upper
  complexity bounds by instantiating our own previous generic results
  in coalgebraic logic~\cite{SchroderEA09}, which rely on precisely
  such rules. However, while the rules given for logics with
  real-valued multiplicities appear to be sound and complete as
  claimed, the rule system given for the integer-valued case is sound
  but clearly not complete, as indicated already in
  Section~\ref{sec:intro}. For instance, the formula
  $\phi:=(2\sharp\top<1\lor 2\sharp\top>1)$ is valid for integer
  multiplicities ($\phi$ says that the integer total weight of all
  successors of a state cannot be $1/2$) but not provable in the given
  rule system. The latter fact is most easily seen by comparing the
  rule for integer multiplicities \cite[Section~4]{Kupke:2010:MLL}
  with the rule given for the case of real-valued multiplicities
  \cite[Section~5]{Kupke:2010:MLL}: The rule instances applying
  to~$\phi$ are the same in both cases, and as the rules are easily
  seen to be sound in the real-valued case,~$\phi$ is not provable (as
  it fails to be valid in the real-valued case). There does not seem
  to be an easy fix for this, so for the time being there is no known
  sound and complete set of modal sequent rules (equivalently, modal
  tableau rules) for Presburger modal logic.
\end{rem}

\smallskip\noindent\textbf{Expressiveness and Examples.}  As mentioned
above, Presburger modal logic subsumes graded modal
logic~\cite{Fine72}.  Moreover, Presburger modal logic subsumes
majority logic~\cite{PacuitSalame04} (more precisely, the version of
majority logic interpreted over finitely branching systems): The
\emph{weak majority} formula $W\phi$ (`at least half the successors
satisfy $\phi$') is expressed in Presburger modal logic as
$\sharp(\phi)-\sharp(\neg\phi)\ge 0$. Using propositional atoms,
incorporated in the way discussed above, we express the examples given
in the introduction (`the majority of university students are female',
`dance classes have even numbers of participants') by the formulae
\begin{gather*}
  \ms{University} \to (\sharp_\ms{hasStudent}\ms{Female}
  - \sharp_\ms{hasStudent}\ms{Male}>0)\\
  \ms{DanceCourse} \to (\sharp_\ms{hasParticipant}\ms{\top}
  \equiv_2 0)
\end{gather*}
where indices informally indicate the understanding of the successor
relation.  In the extension with multiple successor relations
(Remark~\ref{rem:atoms}), one may also impose inequalities between
numbers of successors under different roles as in the introduction,
e.g.~in the formula
\begin{equation*}
  \ms{Workaholic} \to (\rcount{\ms{hasColleague}}{\top}-  \rcount{\ms{hasFriend}}{\top} >0)
\end{equation*}
(`workaholics have more colleagues than friends'). As an example
involving non-unit coefficients, a chamber of parliament in which a
motion requiring a 2/3 majority has sufficient support is described by
the formula
\begin{equation*}
  \sharp_{\ms{hasMember}}(\ms{SupportsMotion})- 
    2\sharp_{\ms{hasMember}}(\neg\ms{SupportsMotion})\ge 0.
\end{equation*}

\subsection{Probabilistic Modal Logic with Polynomial Inequalities}
\label{sec:prob}

Probabilistic logics of various forms have been studied in different
contexts such as reactive systems~\cite{LarsenSkou91} and uncertain
knowledge~\cite{HeifetzMongin01,FaginHalpern94}. A typical feature of
such logics is that they talk about probabilities $w(\phi)$ of
formulae $\phi$ holding for the successors of a state; the concrete
syntax then variously includes only inequalities of the form
$w(\phi)\sim p$ for ${\sim}\in\{>,\ge,=,<,\le\}$ and
$p\in\mathbb{Q}\cap[0,1]$~\cite{LarsenSkou91,HeifetzMongin01}, linear
inequalities over terms $w(\phi)$~\cite{FaginHalpern94}, or polynomial
inequalities, with the latter so far treated only in either purely
propositional settings~\cite{FaginHalpernMegiddo90} or in
many-dimensional logics such as the probabilistic description logic
Prob-$\ALC$~\cite{GutierrezBasultoEA17}, which use a single global
distribution over worlds. An important use of polynomial inequalities
over probabilities is to express independence
constraints~\cite{GutierrezBasultoEA17}. For instance, two properties $\phi$
and $\psi$ (of successors) are independent if
$w(\phi\land\psi)=w(\phi)w(\psi)$, and we can express that the
probability that the first of two independently sampled successors
satisfies~$\phi$ and the second satisfies~$\psi$ is at least~$p$ by a
formula such as $w(\phi)w(\psi)\ge p$; the latter is similar to the
\emph{independent product} of real-valued probabilistic modal
logic~\cite{Mio11}.

We thus define the following \emph{probabilistic modal logic with
  polynomial inequalities}: The system type is given by a variant of
the distribution functor~$\Dist$ as described above, viz.\ the
\emph{subdistribution functor}~$\SDist$, in which we require for
$\mu\in\SDist X$ that the measure of the whole set~$X$ satisfies
$\mu(X)\le 1$ rather than $\mu(X)=1$. Then $\SDist$-coalgebras
$\gamma:X\to\SDist X$ are like Markov chains (where $\gamma(x)$ is
interpreted as a distribution over possible future evolutions of the
system), or (single-agent) type spaces in the sense of epistemic
logic~\cite{HeifetzMongin01} (where $\gamma(x)$ is interpreted as the
subjective probabilities assigned by the agent to possible alternative
worlds in world $x$), with the difference that each state~$x$ has a
probability $1-\gamma(x)(X)$ of being deadlocked. We use the modal
similarity type
\begin{equation*}
  \Lambda=\{L_p\mid p\in\mathbb{Q}[X_1,\dots,X_n], n\ge 0\};
\end{equation*}
for $p\in\mathbb{Q}[X_1,\dots,X_n]$, the modality~$L_p$ has
arity~$n$. We denote the application of~$L_p$ to formulae
$\phi_1,\dots,\phi_n$ by substituting each variable $X_i$ in $p$ with
$w(\phi_i)$ and postulating the result to be non-negative, i.e.~as
\begin{equation*}
  p(w(\phi_1),\dots,w(\phi_n))\ge 0.
\end{equation*}
For instance, $L_{X_1-X_2X_3}(\phi\land\psi,\phi,\psi)$ is written
more readably as $w(\phi\land\psi)-w(\phi)w(\psi)\ge 0$, and thus
expresses one half of the above-mentioned independence constraint (the
other half, of course, being $w(\phi)w(\psi)-w(\phi\land\psi)\ge 0$)
We correspondingly interpret $L_p$ by the predicate lifting
\begin{equation*}
  \Sem{L_p}_X(A_1,\dots,A_n)=\{\mu\in\SDist X\mid p(\mu(A_1),\dots,\mu(A_n))\ge 0\}.
\end{equation*}
We will use Presburger modal logic and probabilistic modal logic as
running examples in the sequel.

\begin{rem}
  The use of~$\SDist$ in place of~$\Dist$ serves only to avoid
  triviality of the logic in the absence of propositional atoms: Since
  $|\Dist(1)|=1$ for any singleton set~$1$, all states in
  $\Dist$-coalgebras (i.e.\ Markov chains) are bisimilar, and thus
  satisfy the same formulae of any coalgebraic modal logic on
  $\Dist$-coalgebras~\cite{Pattinson04,Schroder08}, so any formula in
  such a logic is either valid or unsatisfiable. This phenomenon
  disappears as soon as we add propositional atoms as per
  Remark~\ref{rem:atoms}. All our results otherwise apply to~$\Dist$
  in the same way as to~$\SDist$.
\end{rem}

\section{One-Step Satisfiability}\label{sec:oss}
The key ingredient of our algorithmic approach is to deal with modal
operators (i.e., in our running examples, arithmetic statements about
numbers or weights of successors) level by level; the core concepts of
the arising notion of one-step satisfiability checking go back to work
on plain satisfiability in coalgebraic
logics~\cite{Schroder07,SchroderPattinson08d,MyersEA09}. From now on,
we restrict the technical treatment to unary modal operators to avoid
cumbersome notation, although our central examples all do have modal
operators with higher arities; a fully general treatment requires no
more than additional indexing.
Considering only one level of modal operators and abstracting from
their arguments amounts to working in a \emph{one-step logic}, whose
syntax and semantics are defined as follows (subsequent to fixing some
notation).
\begin{definition}[Notation for propositional variables and
  propositional logic]\label{def:prop}
  We fix a countably infinite set $\PV$ of \emph{(propositional)
    variables}. We denote the set of Boolean formulae (presented in
  terms of~$\bot$,~$\land$, and~$\neg$) over a set~$V\subseteq\PV$ of
  propositional variables by $\Prop(V)$; that is, formulae
  $\eta,\rho\in\Prop(V)$ are defined by the grammar
  \begin{equation*}
    \eta,\rho::=\bot\mid\neg\eta\mid\eta\land\rho\mid a \qquad (a\in V).
  \end{equation*}
  We write~$2$ for the set $\{\bot,\top\}$ of truth values, and then
  have a standard notion of satisfaction of propositional formulae
  over~$V$ by valuations $\kappa\colon V\to 2$. As usual, a
  \emph{literal} over $V$ is a propositional variable $a\in V$ or a
  negated variable $\neg a$ for $a\in V$, often written~$\epsilon a$
  with $\epsilon\in\{-1,1\}$ as per the previous convention
  (Section~\ref{sec:colog}), and a \emph{conjunctive clause} over~$V$
  is a finite conjunction $\epsilon_1a_1\land\dots\land\epsilon_na_n$
  of literals over~$V$, represented as a finite set of literals. We
  write $\Phi\PLentails\eta$ to indicate that a set
  $\Phi\subseteq\Prop(V)$ \emph{propositionally entails}
  $\eta\in\Prop(V)$, meaning that there exist
  $\rho_1,\dots,\rho_n\in\Phi$ such that
  $\rho_1\land\dots\land\rho_n\to\eta$ is a propositional
  tautology. For $\{\rho\}\PLentails\eta$, we briefly write
  $\rho\PLentails\eta$.

  By a \emph{substitution}, we will mean a map~$\sigma$ from (some
  subset of)~$\PV$ into another set~$Z$, typically a set of formulae
  of some kind. In case $Z=\PV$, we will also refer to~$\sigma$ as a
  \emph{renaming}. We write application of a substitution~$\sigma$ to
  formulae~$\phi$ containing propositional variables (either
  propositional formulae or formulae of the one-step logic as
  introduced in the next definition) in postfix notation $\phi\sigma$
  as usual (i.e.~$\phi\sigma$ is obtained from~$\phi$ by replacing all
  occurrences of propositional variables~$a$ in~$\phi$ with
  $\sigma(a)$). We extend the propositional entailment relation to
  formulae beyond~$\Prop(\PV)$ by substitution, i.e.~for a
  formula~$\psi$ and a set~$\Phi$ of formulae (in the one-step logic
  or in coalgebraic modal logic), we write $\Phi\PLentails\psi$ if
  $\Phi,\psi$ can be written in the form $\Phi=\Phi'\sigma$,
  $\psi=\psi'\sigma$ for a substitution~$\sigma$ and
  $\Phi'\subseteq\Prop(\PV)$, $\psi'\in\Prop(\PV)$ such that
  $\Phi'\PLentails\psi'$ in the sense defined above (that is, if there
  are $\phi_1,\dots,\phi_n\in\Phi$ such that
  $\phi_1\land\dots\land\phi_n\to\psi$ is a substitution instance of a
  propositional tautology).
\end{definition}
\noindent The \textbf{syntax} of the one-step logic is given in the following
terms:
\begin{definition}[One-step pairs]\label{def:one-step}
  Given a set~$V\subseteq\PV$ of propositional variables, we denote by
  \begin{equation*}
    \Lambda(V)=\{\hearts a\mid\hearts\in\Lambda,a\in V\}
  \end{equation*}
  the set of \emph{modal atoms over~$V$}.  A \emph{modal literal
    over~$V$} is a modal atom over~$V$ or a negation thereof, i.e.~has
  the form either $\hearts a$ or $\neg\hearts a$ for
  $\hearts\in\Lambda$, $a\in V$. A \emph{modal conjunctive
    clause}~$\phi$ is a finite conjunction
  $\epsilon_1\hearts_1 a_1\land\dots\land\epsilon_n\hearts_na_n$ of
  modal literals over~$V$, represented as a finite set of modal
  literals. We write $\Var(\phi)=\{a_1,\dots,a_n\}$ for the set of
  variables occurring in~$\phi$. We say that~$\phi$ is \emph{clean}
  if~$\phi$ mentions each variable in~$V$ at most once. A
  \emph{one-step pair} $(\phi,\eta)$ \emph{over $V$} consists of
  \begin{myitemize}
  \item a clean modal conjunctive clause~$\phi$ over~$V$ and
  \item a Boolean formula $\eta\in\Prop(\Var(\phi))$.
  \end{myitemize}
  We measure the size~$|\phi|$ of a modal conjunctive clause~$\phi$ by
  counting~$1$ for each variable and each propositional operator, and
  for each modality the size of its encoding (in the same way as in
  the definition of the size of modal formulae in
  Section~\ref{sec:colog}). The propositional component~$\eta$ is
  assumed to be given as a DNF consisting of conjunctive clauses each
  mentioning every variable occurring in~$\phi$ (such conjunctive
  clauses are effectively truth valuations for the variables
  in~$\phi$), and the size~$|\eta|$ of~$\eta$ is the size of this DNF.
\end{definition}
\noindent In a one-step pair $(\phi,\eta)$, the modal component~$\phi$
effectively specifies what happens one transition step ahead from the
(implicit) current state; as indicated above, in the actual
satisfiability checking algorithm,~$\phi$ will arise by peeling off
the top layer of modalities of a given modal formula, with the
propositional variables in~$V$ abstracting the argument formulae of
the modalities. The propositional component~$\eta$ then records the
propositional dependencies among the argument formulae. Formally, the
\textbf{semantics} of the one-step logic is given as follows:
\begin{definition}[One-step models, one-step satisfiability] A
  \emph{one-step model} $M=(X,\tau, t)$ over $V$ consists of
  \begin{myitemize}
  \item a set $X$ together with a $\Pow X$-valuation $\tau\colon V\to\Pow X$; and
  \item an element $t\in TX$ (thought of as the structured collection
    of successors of an anonymous state).
  \end{myitemize}
  For $\eta\in\Prop(V)$, we write $\tau(\eta)$ for the interpretation
  of $\eta$ in the Boolean algebra $\Pow X$ under the valuation
  $\tau$; explicitly, $\tau(\bot)=\emptyset$,
  $\tau(\neg\eta)=X\setminus\tau(\eta)$, and
  $\tau(\eta\land\rho)=\tau(\eta)\cap\tau(\rho)$.  For a modal atom
  $\hearts a\in\Lambda(V)$, we put
  \begin{equation*}
    \tau(\hearts a)=\Sem{\hearts}_X(\tau(a))\subseteq TX.
  \end{equation*}
  We extend this assignment to modal atoms and modal conjunctive
  clauses using the Boolean algebra structure of $\Pow(TX)$; explicitly,
  \begin{align*}
    \tau(\neg\hearts a)
    &=TX\setminus\tau(\hearts a)\\
    \tau(\epsilon_1\hearts_1a_1\land\dots\land\epsilon_n\hearts_na_n)
    &=\tau(\epsilon_1\hearts_1a_1)\cap\dots\cap\tau(\epsilon_n\hearts_na_n).
  \end{align*}
  We say that the one-step model $M=(X,\tau,t)$ \emph{satisfies} the
  one step pair $(\phi,\eta)$, and write $M\models(\phi,\eta)$, if
  \begin{equation*}
    \tau(\eta)=X\qquad\text{and}\qquad t\in\tau(\phi).
  \end{equation*}
  (That is,~$\eta$ is a global propositional constraint on the values
  of~$\tau$ while~$\phi$ specifies a property of the collection~$t$ of
  successors.) Then, $(\phi,\eta)$ is \emph{(one-step) satisfiable} if
  there exists a one-step model~$M$ such that
  $M\models(\phi,\eta)$. The \emph{lax one-step satisfiability
    problem} (of~$\Lambda$) is to decide whether a given one-step pair
  $(\phi,\eta)$ is one-step satisfiable; the size of the input is
  measured as $|\phi|+|\eta|$ with~$|\phi|$ and~$|\eta|$ defined as
  above. The \emph{strict one-step satisfiability problem}
  (of~$\Lambda$) is the same problem but with the input size defined
  to be just~$|\phi|$. For purposes of space complexity, we thus
  assume in the strict one-step satisfiability problem that~$\eta$ is
  stored on an input tape that does not count towards space
  consumption. It will be technically convenient to assume moreover
  that in the strict one-step satisfiability problem,~$\eta$ is given
  as a bit vector indicating which conjunctive clauses (mentioning
  every variable occurring in $\phi$, in some fixed order) are
  contained in the DNF~$\eta$; contrastingly, we assume that in the
  lax one-step satisfiability problem,~$\eta$ is given as a list of
  conjunctive clauses as indicated in Definition~\ref{def:one-step}
  (hence need not have exponential size in all cases). For time
  complexity, we assume that the input tape is random access (i.e.\
  accessed via a dedicated address tape, in the model of random access
  Turing machines~\cite{FischerRosenberg68}; this is necessary to
  enable subexponential time bounds for the strict one-step
  satisfiability problem since otherwise it takes exponential time
  just to move the head to the last bits of the input). We say
  that~$\Lambda$ has the \emph{(weak) one-step small model property}
  if there is a polynomial~$p$ such that every one-step satisfiable
  $(\phi,\eta)$ has a one-step model $(X,\tau,t)$ with
  $|X|\le p(|\Var(\phi)|)$ (respectively $|X|\le p(|\phi|)$). (Note
  that no bound is assumed on the representation of $t$.)
\end{definition}
\noindent
As indicated above, the intuition behind these definitions is that the
propositional variables in~$V$ are placeholders for argument formulae
of modalities; their valuation $\tau$ in a one-step model $(X,\tau,t)$
over~$V$ represents the extensions of these argument formulae in a
model; and the second component~$\eta$ of a one-step pair
$(\phi,\eta)$ captures the Boolean constraints on the argument
formulae that are globally satisfied in a given model. The component
$t\in TX$ of $(X,\tau,t)$ represents the structured collection of
successors of an implicit current state, so the modal component~$\phi$
of the one-step pair is evaluated on~$t$. 
We will later construct
full models of modal formulae using one-step models according to this
intuition. One may think of a one-step model $(X, \tau, \mu)$ of a
one-step pair $(\phi, \eta)$ as a counterexample to soundness of
$\eta/\neg\phi$ as a proof rule:~$\phi$ is satisfiable despite $\eta$
being globally valid in the model.
\begin{example}\label{expl:oss}
  \begin{enumerate}[wide]
  \item In the basic example of relational modal logic
  ($\Lambda=\{\Diamond\}$, $T=\Pow$, see Section~\ref{sec:colog}),
  consider the one-step pair
  $(\phi,\eta):=(\neg\Diamond a\land\neg\Diamond b\land \Diamond c,c\to
  a\lor b)$. The propositional component~$\eta$ is represented as a
  DNF $\eta=(c\land a\land b)\lor(\neg c\land\neg a\land\neg b)\lor\dots$.
  A one-step model $(X,\tau,t)$ of $(\phi,\eta)$ (where $t\in\Pow(X)$)
  would need to satisfy $\tau(c)\subseteq\tau(a)\cup\tau(b)$ to ensure
  $\tau(\eta)=X$, as well as $t\cap\tau(c)\neq\emptyset$,
  $t\cap \tau(a)=\emptyset$, and $t\cap \tau(b)=\emptyset$ to ensure
  $t\in\tau(\phi)$. As this is clearly impossible, $(\phi,\eta)$ is
  unsatisfiable. In fact, it is easy to see that the strict one-step
  satisfiability problem of relational modal logic in this sense is in
  \NP: To check whether a one-step pair $(\psi,\chi)$ is satisfiable,
  guess a conjunctive clause~$\rho$ in~$\chi$ for each positive modal
  literal $\Diamond a$ in~$\phi$, and check that $\rho$ contains on
  the one hand $a$, and on the other hand~$\neg b$ for every negative
  modal literal $\neg\Diamond b$ in~$\psi$.
  
\item In Presburger modal logic, let
  $\phi: = (\sharp(a)+\sharp(b)-\sharp(c)>0)$ (a conjunctive clause
  consisting of a single modal literal).  Then a one-step pair of the
  form $(\phi, \eta)$ is one-step satisfiable 
  iff~$\eta$ is consistent with
  $\rho:=(a\land b)\lor(a\land\neg c)\lor (b\land\neg c)$: For the `if'
  direction, note that~$\eta$ is consistent with some disjunct~$\rho'$
  of~$\rho$; we distinguish cases over~$\rho'$, and build a one-step
  model $(X,\tau,\mu)$ of $(\phi,\eta)$. In each case, we take~$X$ to
  consist of a single point~$1$; since~$\eta\land\rho'$ is consistent,
  we can pick~$\tau$ such that~$\tau(\eta\land\rho')=X$ (and hence
  $\tau(\eta)=X$). Moreover, we always take $\mu$ to be the multiset
  given by $\mu(1)=1$. If $\rho'=(a\land\neg c)$, then
  $\mu(\tau(a))+\mu(\tau(b))-\mu(\tau(c))=1+\mu(\tau(b))-0>0$, so
  $\mu\in\tau(\phi)$, and we are done. The case $\rho'=(b\land\neg c)$
  is analogous. Finally, if $\rho'=(a\land b)$, then
  $\mu(\tau(a))+\mu(\tau(b))-\mu(\tau(c))=2-\mu(\tau(c))>0$. For the
  `only if' direction, assume that $\eta\land\rho$ is inconsistent,
  so~$\eta$ propositionally entails $a\to c$, $b\to c$, and
  $\neg(a\land b)$, and let $(X,\tau,\mu)$ be a one-step model such
  that $\tau(\eta)=X$; we have to show that
  $\mu\notin\tau(\psi)$. Indeed, since $\tau(\eta)=X$ we have
  $\tau(a)\subseteq\tau(c)$, $\tau(b)\subseteq\tau(c)$, and
  $\tau(a)\cap\tau(b)=\emptyset$, so
  $\mu(\tau(a))+\mu(\tau(b))-\mu(\tau(c))\le 0$.
\item The reasoning in the previous example applies in the same way to
  one-step pairs of the form $(w(a)+w(b)-w(c)>0,\eta)$ in
  probabilistic modal logic. 
\item The example formula given in Remark~\ref{rem:graded-rules}
  translates into a one-step pair $(2\sharp(a)<1\land 2\sharp(a)>0,a)$
  in Presburger modal logic whose unsatisfiability does depend on
  multiplicities being integers; that is, the corresponding one-step
  pair $(2 w(a)<1\land 2 w(a)>0,a)$ in probabilistic modal logic is
  satisfiable.
\end{enumerate}
\end{example}
\begin{rem}
  For purposes of upper complexity bounds \PSpace and above for the
  strict one-step satisfiability problem, it does not matter whether
  the propositional component~$\eta$ of a one-step pair $(\psi,\eta)$
  is represented as a list or as a bit vector, as we have obvious
  mutual conversions between these formats that can be implemented
  using only polynomial space in~$|\Var(\psi)|$. For subexponential
  time bounds, on the other hand, the distinction between the formats
  does appear to matter, as the mentioned conversions do take
  exponential time in~$|\Var(\psi)|$.
\end{rem}
\noindent 
Note that most of a one-step pair $(\phi, \eta)$ is disregarded for
purposes of determining the input size of the \emph{strict} one-step
satisfiability problem, as~$\eta$ can be exponentially larger
than~$\phi$. Indeed, we have the following relationship between the
respective complexities of the lax one-step satisfiability problem and
the strict one-step satisfiability problem.
\begin{lemma}
  The strict one-step satisfiability problem of~$\Lambda$ is in
  \ExpTime iff the lax one-step satisfiability problem of~$\Lambda$
  can be solved on one-step pairs~$(\phi,\eta)$ in
  time~$2^{\CO((\log |\eta|+|\phi|)^k)}$ for some~$k$.
\end{lemma}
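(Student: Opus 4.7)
The plan is a direct equivalence proof via mutual conversion between the two representations of $\eta$ (list of conjunctive clauses versus truth-table bit vector), with careful bookkeeping of the conversion cost.

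For the ``only if'' direction, assume the strict one-step satisfiability problem is decidable in time $2^{q(|\phi|)}$ for some polynomial $q$. Given a lax instance $(\phi,\eta)$ with $\eta$ presented as a list of clauses, I would first convert $\eta$ to a bit-vector $b$ of length $N=2^{|\Var(\phi)|}\le 2^{|\phi|}$: allocate and zero-initialise $b$, then walk through the list $\eta$ and set the bit at the address determined by each listed clause. Since each clause mentions every variable in $\phi$ in a fixed order, computing its index is polynomial in $|\phi|$, so the total conversion cost is $2^{\mathcal{O}(|\phi|)}+\mathcal{O}(|\eta|\cdot|\phi|)$. I then run the strict algorithm on $(\phi, b)$, adding $2^{q(|\phi|)}$ further steps. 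The overall time is bounded by $|\eta|\cdot|\phi|+2^{r(|\phi|)}$ for some polynomial $r$, and, using $|\eta|=2^{\log|\eta|}$ and choosing $k$ with $r(x)\le x^k$, this is comfortably bounded by $2^{\mathcal{O}((\log|\eta|+|\phi|)^k)}$.

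For the ``if'' direction, assume lax solves in time $2^{\mathcal{O}((\log|\eta|+|\phi|)^k)}$. Given a strict instance $(\phi,\eta)$ with $\eta$ a random-access bit vector of length $N\le 2^{|\phi|}$, I would produce a list representation $\eta'$ by scanning $b$ position by position, emitting for each set bit the unique conjunctive clause over $\Var(\phi)$ that it encodes. Each clause has size $\mathcal{O}(|\phi|)$, so $|\eta'|\le N\cdot\mathcal{O}(|\phi|)=2^{\mathcal{O}(|\phi|)}$ and the scan itself takes time $2^{\mathcal{O}(|\phi|)}$. Running the lax algorithm on $(\phi,\eta')$ then needs $2^{\mathcal{O}((\log|\eta'|+|\phi|)^k)}\le 2^{\mathcal{O}(|\phi|^k)}$ steps, so the strict problem lies in \ExpTime.

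The main delicacy is accounting for the conversion costs so that they do fit the stated bounds. The exact purpose of the $\log|\eta|$ term in the exponent is to absorb the linear-in-$|\eta|$ cost of traversing or writing out the list representation (which is exponential in $|\phi|$ when coming from a strict instance but could be much smaller in a generic lax instance); without it, the lax side could not simulate the list-handling phase within the stated budget. Once this is set up, both reductions go through by straightforward polynomial arithmetic in the exponent.
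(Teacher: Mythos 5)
Your proposal is correct and follows essentially the same route as the paper's proof: the ``only if'' direction converts the list representation of $\eta$ to a bit vector and invokes the strict algorithm, and the ``if'' direction uses the observation that $|\eta|=2^{\CO(|\phi|)}$ (hence $\log|\eta|=\CO(|\phi|)$) to collapse the bound $2^{\CO((\log|\eta|+|\phi|)^k)}$ to $2^{\CO(|\phi|^k)}$. Your version merely spells out the conversion costs more explicitly than the paper does, and all the bookkeeping checks out.
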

\noindent (Recent work on the coalgebraic $\mu$-calculus uses
essentially the second formulation~\cite{HausmannSchroder19}.)
\begin{proof}
  \emph{`Only if'} is trivial, since the time bound allows converting~$\eta$
  from the list representation assumed in the lax version of the
  problem to the bit vector representation assumed in the strict
  version. \emph{`If'}: Since we require that all variables
  mentioned by~$\eta$ occur also in~$\phi$, and assume that~$\eta$ is
  given in DNF, we have $|\eta|=2^{\CO(|\phi|)}$, so
  $\log|\eta|=\CO(|\phi|)$, and hence
  $2^{\CO((\log |\eta|+|\phi|)^k)}=2^{\CO(|\phi|^k)}$.
\end{proof}
\noindent We note that the one-step logic has an exponential-model
property (which in slightly disguised form has appeared first
as~\cite[Proposition 3.10]{SchroderPattinson06}):
\begin{lemma}\label{lem:one-step-models}
  A one-step pair $(\phi,\eta)$ over $V$ is satisfiable iff it is
  satisfiable by a one-step model of the form $(X,\tau,t)$ where $X$
  is the set of valuations $V\to 2$ satisfying $\eta$ (where
  $2=\{\top,\bot\}$ is the set of Booleans) and
  $\tau(a)=\{\kappa\in X\mid \kappa(a)=\top\}$ for $a\in V$.
\end{lemma}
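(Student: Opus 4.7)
The ``if'' direction is immediate, so the content is in the ``only if'' direction. My plan is to start with an arbitrary one-step model $(Y,\sigma,s)$ of $(\phi,\eta)$ and push it forward along a canonical map into the ``valuation model'' $(X,\tau,t)$ described in the statement, using naturality of the predicate liftings to preserve satisfaction of modal literals.

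Concretely, let $X$ and $\tau$ be as in the statement, and define $f\colon Y\to X$ by setting $f(y)(a)=\top$ iff $y\in\sigma(a)$, for each $a\in V$. Since $(Y,\sigma,s)\models(\phi,\eta)$, we have $\sigma(\eta)=Y$, and an easy induction on propositional formulae shows that $f(y)$ satisfies $\eta$ for every $y\in Y$; hence $f$ is well defined into $X$. Set $t=(Tf)(s)\in TX$. By construction, for each $a\in V$,
\begin{equation*}
  \sigma(a)=f^{-1}[\tau(a)],
\end{equation*}
so naturality \eqref{eq:naturality} of each predicate lifting $\Sem{\hearts}$ gives
\begin{equation*}
  \sigma(\hearts a)=\Sem{\hearts}_Y(f^{-1}[\tau(a)])=(Tf)^{-1}[\Sem{\hearts}_X(\tau(a))]=(Tf)^{-1}[\tau(\hearts a)].
\end{equation*}
Hence $s\in\sigma(\hearts a)$ iff $t\in\tau(\hearts a)$, and the same equivalence transfers to negated modal atoms by complementation. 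Since $\phi$ is a conjunction of modal literals, $s\in\sigma(\phi)$ implies $t\in\tau(\phi)$. Together with $\tau(\eta)=X$ (which holds by the very definition of $X$), this gives $(X,\tau,t)\models(\phi,\eta)$.

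The only point that requires a moment's care is the well-definedness of $f$, i.e.\ that every $f(y)$ lies in $X$. This reduces to checking, by induction on $\rho\in\Prop(V)$, that $y\in\sigma(\rho)$ iff $f(y)\models\rho$; applied to $\rho=\eta$ together with $\sigma(\eta)=Y$, this yields $f(y)\models\eta$ as required. No further obstacle is expected, since everything else is a direct consequence of naturality.
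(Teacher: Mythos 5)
Your proof is correct and follows essentially the same route as the paper's: define $f\colon Y\to X$ by $f(y)(a)=\top$ iff $y\in\sigma(a)$, push the successor structure forward via $t=(Tf)(s)$, and use naturality of the predicate liftings to preserve the modal literals. The only difference is that you make explicit the (easy) check that each $f(y)$ satisfies $\eta$ and hence lies in $X$, which the paper leaves implicit.
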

\begin{proof}
  `If' is trivial; we prove `only if'.  Let $M=(Y,\theta,s)$ be a
one-step model of $(\phi,\eta)$. Take $X$ and $\tau$ as in the claim;
it is clear that $\tau(\eta)=X$. Define a map $f\colon  Y\to X$ by
$f(y)(a)=\top$ iff $y\in\theta(a)$ for $y\in Y$, $a\in V$. Then put
$t=Tf(s)\in TX$. By construction, we have $f^{-1}[\tau(a)]=\theta(a)$
for all $a\in V$. By naturality of predicate liftings and commutation
of preimage with Boolean operators, this implies that
$(Tf)^{-1}[\tau(\phi)]=\theta(\phi)$, so $s\in\theta(\phi)$ implies
$t=Tf(s)\in\tau(\phi)$; i.e.\ $(X,\tau,t)$ is a one-step model of
$(\phi,\eta)$.
\end{proof}
\noindent From the construction in the above lemma, we obtain the
following equivalent characterization of the one-step small model
property:
\begin{lemma}\label{lem:ospmp-log}
  The logic~$\Lambda$ has the (weak) one-step small model property iff there
  exists a polynomial~$p$ such that the following condition holds:
  Whenever a one-step pair $(\phi,\eta)$ is one-step satisfiable, then
  there exists $\eta'$ such that
  \begin{enumerate}
  \item $(\phi,\eta')$ is one-step satisfiable;
  \item the list representation of~$\eta'$ according to
    Definition~\ref{def:one-step} has size at most $p(|\Var(\phi)|)$
    (respectively at most $p(|\phi|)$); and
  \item $\eta'\PLentails\eta$.
  \end{enumerate}
\end{lemma}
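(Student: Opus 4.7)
The plan is to prove both directions by exploiting Lemma~\ref{lem:one-step-models}, converting small one-step models into small characteristic DNFs and vice versa.

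For the \emph{``only if''} direction, I would start from a small one-step model $(X,\tau,t)$ of $(\phi,\eta)$ provided by the small model property, with $|X|\le p(|\Var(\phi)|)$. To each $x\in X$ I would associate its characteristic valuation $\kappa_x\colon\Var(\phi)\to 2$ defined by $\kappa_x(a)=\top$ iff $x\in\tau(a)$, and then take $\eta'$ to be the DNF whose clauses are exactly the full characteristic clauses $C_{\kappa_x}=\bigwedge_{a\in\Var(\phi)}\epsilon_a^{\kappa_x}a$ for $x\in X$. The size bound (item~2) follows immediately as $|\eta'|=O(|X|\cdot|\Var(\phi)|)$; item~1 is witnessed by $(X,\tau,t)$ itself, since each $x\in\tau(C_{\kappa_x})\subseteq\tau(\eta')$; and item~3 holds because a routine induction on~$\rho$ shows $x\in\tau(\rho)$ iff $\kappa_x\models\rho$ for any $\rho\in\Prop(\Var(\phi))$, so $x\in\tau(\eta)=X$ forces $\kappa_x\models\eta$, whence each clause $C_{\kappa_x}$ propositionally entails~$\eta$.

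For the \emph{``if''} direction, I would take the guaranteed $\eta'$ and apply Lemma~\ref{lem:one-step-models} to $(\phi,\eta')$, obtaining a one-step model $(X',\tau,t)$ with $X'$ the set of valuations $\Var(\phi)\to 2$ satisfying $\eta'$. The key observation is that, because each conjunctive clause of the DNF~$\eta'$ must (by Definition~\ref{def:one-step}) mention every variable in $\Var(\phi)$, each clause has exactly one satisfying valuation; hence $|X'|$ is bounded by the number of clauses, and that count is at most $|\eta'|/|\Var(\phi)|\le p(|\Var(\phi)|)$ up to constants. Finally, $\eta'\PLentails\eta$ lifts to $\tau(\eta')\subseteq\tau(\eta)$ in $\Pow(X')$, so $(X',\tau,t)$ also witnesses satisfiability of $(\phi,\eta)$. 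The weak version is proved by the same argument with $|\phi|$ in place of $|\Var(\phi)|$, using $|\Var(\phi)|\le|\phi|$.

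The main obstacle is not conceptual but a matter of careful bookkeeping: the hardest step is verifying that the polynomial governing model size translates cleanly into a polynomial governing DNF size and back, which relies on Definition~\ref{def:one-step}'s convention that each conjunctive clause of the DNF mentions every variable in $\Var(\phi)$ (so that the number of clauses and the number of satisfying valuations coincide). A minor technicality is that Lemma~\ref{lem:one-step-models} is stated over a variable set~$V$ that may strictly contain $\Var(\phi)$; since nothing in $\phi$ or $\eta$ depends on $V\setminus\Var(\phi)$, we may restrict to $V=\Var(\phi)$ without loss of generality.
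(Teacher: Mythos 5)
Your proof is correct and follows essentially the same route as the paper: in the ``only if'' direction you take $\eta'$ to be the disjunction of the conjunctive clauses realized in a small one-step model, and in the ``if'' direction you invoke Lemma~\ref{lem:one-step-models} and observe that the number of satisfying valuations of $\eta'$ equals the number of its clauses (each clause mentioning every variable of $\phi$). The additional bookkeeping you supply (the characteristic valuations $\kappa_x$, the induction showing $x\in\tau(\rho)$ iff $\kappa_x\models\rho$, and the remark on restricting $V$ to $\Var(\phi)$) is exactly what the paper leaves implicit.
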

\begin{proof}
  \emph{`Only if':} Take the conjunctive clauses of the DNF~$\eta'$ to
  be the ones realized in a polynomial-sized one-step model
  $(X,\tau,t)$ of $(\phi,\eta)$; that is,~$\eta'$ is the disjunction
  of all conjunctive clauses~$\rho$ mentioning all variables occurring
  in~$\phi$ such that $\tau(\rho)\neq\emptyset$.

  \emph{`If':} Take~$X$ as in Lemma~\ref{lem:one-step-models} and note
  that~$|X|$ is the number of conjunctive clauses in the
  representation of~$\eta'$ as per Definition~\ref{def:one-step}.
\end{proof}
\noindent Under the one-step small model property, the two versions of
the one-step satisfiability problem coincide for our purposes, as
detailed next. Recall that a multivalued function $f$ is
\emph{NPMV}~\cite{book-long-selman:npmv} if the representation length
of values of~$f$ on~$x$ is polynomially bounded in that of~$x$ and
moreover the graph of~$f$ is in~\NP; we generalize this notion
slightly to allow for size measures of~$x$ other than representation
length (such as the input size measure used in the strict one-step
satisfiability problem).  Most reasonable complexity classes
containing~\NP are closed under NPMV reductions; in particular this
holds for \PSpace, \ExpTime, and all levels of the polynomial
hierarchy.
\begin{lemma}\label{lem:oss}
  Let $\Lambda$ have the weak one-step small model property
  (Definition~\ref{def:one-step}). Then the strict one-step
  satisfiability problem of~$\Lambda$ is NPMV-reducible to lax
  one-step satisfiability. In particular, if lax one-step
  satisfiability is in \NP (\PSpace/\ExpTime), then strict one-step
  satisfiability is in \NP (\PSpace/\ExpTime).
\end{lemma}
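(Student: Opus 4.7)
The plan is to exploit the weak one-step small model property, as reformulated in Lemma~\ref{lem:ospmp-log}, to obtain a polynomial-sized ``witness'' DNF that can replace the exponentially large propositional component of a strict one-step instance. Specifically, given a strict input $(\phi,\eta)$, the NPMV reduction will nondeterministically guess a DNF $\eta'$ whose list representation has size bounded by $p(|\phi|)$ (with $p$ the polynomial from Lemma~\ref{lem:ospmp-log}), verify deterministically that $\eta'\PLentails\eta$, and output $(\phi,\eta')$ as an instance of lax one-step satisfiability. The output size is polynomial in $|\phi|$, so composing with any lax one-step satisfiability algorithm running in time/space $f$ yields a strict algorithm running in time/space $f(\mathrm{poly}(|\phi|))$, giving the three stated class inclusions since \NP, \PSpace, and \ExpTime are all closed under NPMV reductions.

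For correctness, the equivalence ``$(\phi,\eta)$ is strictly one-step satisfiable iff there exists such an $\eta'$ with $(\phi,\eta')$ one-step satisfiable and $\eta'\PLentails\eta$'' is immediate from Lemma~\ref{lem:ospmp-log}: the right-to-left direction is just monotonicity of one-step satisfaction under weakening of the propositional component (any one-step model of $(\phi,\eta')$ is a one-step model of $(\phi,\eta)$), and the left-to-right direction is exactly the ``only if'' clause of Lemma~\ref{lem:ospmp-log}.

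The main obstacle is to show that the verification $\eta'\PLentails\eta$ can be done efficiently despite $\eta$ being of exponential size. Here we exploit the chosen representations: by Definition~\ref{def:one-step}, every conjunctive clause in $\eta$ and $\eta'$ mentions every variable of $\phi$, so each such clause corresponds uniquely to a full truth valuation of $\Var(\phi)$, and $\eta'\PLentails\eta$ holds iff the set of clauses of $\eta'$ is a subset of the set of clauses of $\eta$. In the strict setting $\eta$ is stored as a bit vector (over valuations in a fixed order) on the random-access input tape; hence for each clause $\rho$ in the short list $\eta'$ we compute the corresponding index in polynomial time in $|\Var(\phi)|$ and look up the bit in constant time. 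The total verification cost is polynomial in $|\phi|$ and $|\eta'|$, as required.

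Finally, I would note that the bounded-size guess of $\eta'$ plus the polynomial verification is precisely what it means for the (multivalued) map taking $(\phi,\eta)$ to a lax-satisfiable $(\phi,\eta')$ witnessing strict satisfiability to be NPMV; the chain of reductions then transfers complexity bounds, using closure of each of the listed classes under NPMV reductions, to conclude the stated upper bounds for strict one-step satisfiability.
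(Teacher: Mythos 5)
Your proposal is correct and follows essentially the same route as the paper: both invoke Lemma~\ref{lem:ospmp-log} to define the NPMV function mapping $(\phi,\eta)$ to all $(\phi,\eta')$ with $\eta'$ of list size at most $p(|\phi|)$ and $\eta'\PLentails\eta$. Your write-up is in fact more detailed than the paper's one-line proof, usefully spelling out why the entailment check reduces to bit-vector lookups and why the reduction is correct in both directions.
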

\begin{proof}
  By Lemma~\ref{lem:ospmp-log}, and in the notation of its statement,
  the NPMV function that maps $(\phi,\eta)$ (with~$\eta$ in bit vector
  representation) to all $(\phi,\eta')$ with $\eta'$ of (list)
  representation size at most $p(|\phi|)$ and $\eta'\PLentails\eta$
  reduces strict one-step satisfiability to lax one-step
  satisfiability.
\end{proof}
\noindent Of the two versions of the one-step small model property,
the stronger version (polynomial in $|\Var(\phi)|$) turns out to be
prevalent in the examples. The weak version (polynomial in $|\phi|$)
is of interest mainly due to the following equivalent
characterization:
\begin{theorem}
  Suppose that the lax one-step satisfiability problem of~$\Lambda$ is
  in \NP. Then the weak one-step small model property holds
  for~$\Lambda$ iff the strict one-step satisfiability problem
  of~$\Lambda$ is in \NP.
\end{theorem}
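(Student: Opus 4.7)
I would prove the two directions separately, with the forward implication being essentially a direct consequence of the preceding machinery and the converse requiring an explicit construction that exploits the random-access input tape model.

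\emph{Forward direction} (weak one-step small model property $\Rightarrow$ strict one-step satisfiability in \NP): This is immediate from Lemma~\ref{lem:oss}, which under the weak one-step small model property gives an NPMV-reduction from strict one-step satisfiability to lax one-step satisfiability. Since \NP is closed under NPMV-reductions and lax one-step satisfiability is assumed to be in \NP, so is strict one-step satisfiability.

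\emph{Converse direction} (strict one-step satisfiability in \NP $\Rightarrow$ weak one-step small model property): I would use the characterization of the weak one-step small model property provided by Lemma~\ref{lem:ospmp-log}, namely that it suffices, given a satisfiable one-step pair $(\phi,\eta)$, to exhibit $\eta'$ with $\eta'\PLentails\eta$, with $(\phi,\eta')$ one-step satisfiable, and whose list representation has polynomial size in $|\phi|$. Fix an \NP-algorithm~$A$ for strict one-step satisfiability running in time $p(|\phi|)$ for some polynomial~$p$. Given satisfiable $(\phi,\eta)$, fix an accepting computation of~$A$ on $(\phi,\eta)$; because~$A$ runs in time $p(|\phi|)$ and accesses the input via random access, this computation queries at most $p(|\phi|)$ bit positions of the bit-vector representation of~$\eta$. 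Let $S$ be that set of queried positions, and define $\eta'$ to be the DNF whose list representation consists of exactly those conjunctive clauses $\rho_i$ with $i\in S$ and $\eta_i=1$; equivalently, the bit-vector representation of~$\eta'$ agrees with~$\eta$ on positions in $S$ and is zero elsewhere.

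I would then verify the three required properties of $\eta'$. First, $\eta'\PLentails\eta$ because $\eta'$'s DNF is obtained by deleting disjuncts from $\eta$'s DNF. Second, the list representation of~$\eta'$ has at most $|S|\le p(|\phi|)$ conjunctive clauses, each of size $\CO(|\phi|)$ (since it mentions each variable of~$\phi$ at most once), hence total size polynomial in~$|\phi|$. Third, and the only genuinely nontrivial point, $(\phi,\eta')$ is one-step satisfiable: the fixed accepting computation of~$A$ on $(\phi,\eta)$ remains valid on input $(\phi,\eta')$ because it queries only positions in $S$, on which~$\eta$ and~$\eta'$ coincide; hence~$A$ accepts $(\phi,\eta')$, so $(\phi,\eta')$ is one-step satisfiable. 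By Lemma~\ref{lem:ospmp-log}, $\Lambda$ has the weak one-step small model property.

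\textbf{Main obstacle.} The only delicate point is the converse direction, and there the essential ingredient is the random-access input tape convention for the strict one-step satisfiability problem: without it, the \NP-algorithm could not avoid reading exponentially many bits of~$\eta$, and the construction of the polynomial-sized~$\eta'$ would collapse. The argument itself, once the right input model is identified, reduces to the observation that the accepting computation of an \NP-machine depends only on the polynomially many bits of the input it actually inspects.
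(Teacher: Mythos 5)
Your proof is correct and follows essentially the same route as the paper: the forward direction via Lemma~\ref{lem:oss}, and the converse by fixing an accepting run of the \NP machine, observing that it inspects only polynomially many bits of the bit-vector representation of~$\eta$, zeroing out the rest to obtain~$\eta'$, and concluding via Lemma~\ref{lem:ospmp-log}. Your explicit remark that the random-access input convention is the load-bearing assumption matches the paper's (more implicit) reliance on that model.
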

\begin{proof}
  `Only if' is immediate by Lemma~\ref{lem:oss}; we prove
  `if'. Let~$\mathsf{M}$ be a non-deterministic (random access) Turing machine
  that solves the strict one-step satisfiability problem in polynomial
  time, and let the one-step pair $(\phi,\eta)$ be one-step
  satisfiable. Then~$\mathsf{M}$ has a successful run on~$(\phi,\eta)$. Since
  this run takes polynomial time in~$|\phi|$, it accesses only
  polynomially many bits in the bit vector representation
  of~$\eta$. We can therefore set all other bits to~$0$, obtaining a
  polynomial-sized DNF~$\eta'$ such that $\eta'\PLentails\eta$ and
  $(\phi,\eta')$ is still one-step satisfiable, as witnessed by
  otherwise the same run of~$\mathsf{M}$. By Lemma~\ref{lem:ospmp-log}, this
  proves the weak one-step small model property.
\end{proof}
\noindent Although not phrased in these terms, the complexity analysis
of Presburger modal logic (without global assumptions) by Demri and
Lugiez~\cite{DemriLugiez10} is based on showing that the strict
one-step satisfiability problem is in
\PSpace~\cite{SchroderPattinson08d}, without using the one-step small
model property for Presburger modal logic -- in fact, our proof of the
latter is based on more recent results from integer programming: We
recall that the classical \emph{Carath\'eodory theorem}
(e.g.~\cite{Schrijver86}) may be phrased as saying that every system
of~$d$ linear equations that has a solution over the non-negative
reals has such a solution with at most~$d$ non-zero
components. Eisenbrand and Shmonin~\cite{EisenbrandShmonin06} prove an
analogue over the integers, which we correspondingly rephrase as
follows.
\begin{lemma}[Integer Carath\'eodory theorem
  \cite{EisenbrandShmonin06}] \label{lem:eisenbrand} Every system
  of~$d$ linear equations $\sum u_i x_i=\inhom$ with integer
  coefficients~$u_i$ of binary length at most~$s$ that has a solution
  over the non-negative integers has such a solution with at most
  polynomially many non-zero components in~$d$ and~$s$ (specifically,
  $\CO(sd\log d)$).
\end{lemma}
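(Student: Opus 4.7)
The plan is to give an iterative reduction argument on the support. Start from an arbitrary non-negative integer solution $x\in\Nat^n$ to $Ux=\inhom$, where the columns $u_1,\dots,u_n$ of $U$ lie in $\{-2^s,\dots,2^s\}^d$. Associate to $x$ the multiset $M$ of columns in which each $u_i$ appears $x_i$ times, so that $\sum_{u\in M}u=\inhom$. The goal is to repeatedly modify $x$ while preserving both the equations and non-negativity, so as to strictly decrease the support $|\{i:x_i>0\}|$, stopping once this size is bounded by $\CO(sd\log d)$.

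The core tool is a Steinitz/pigeonhole-style exchange lemma: whenever the support of $x$ is large enough, one can exhibit two distinct submultisets $M_1,M_2$ of the columns currently used by $x$ with $\sum M_1=\sum M_2$; replacing the contribution of $M_1$ by that of $M_2$ keeps $Ux=\inhom$ and $x\geq 0$, and can be arranged to zero out at least one coordinate of $x$. The existence of such a collision follows by a counting argument comparing the number of possible partial sums of short submultisets against the number of columns involved: once the support exceeds a threshold depending on $d$ and $s$, pigeonhole forces two distinct short submultisets to agree.

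The main obstacle is getting a sufficiently sharp threshold. A naive pigeonhole over the full grid of partial sums in $\{-L\cdot 2^s,\dots,L\cdot 2^s\}^d$ yields only an exponential bound, leading to a support bound of the weaker order $\CO(d(s+\log d))$. To reach the sharper $\CO(sd\log d)$ claimed by Eisenbrand and Shmonin, one refines the counting by a dyadic/logarithmic bucketing of the coordinate magnitudes combined with an averaging argument that localises the collision inside a \emph{short} subsequence, of length $\CO(\log d)$ rather than $\CO(d)$. Once the exchange lemma is established with this parameter, the iteration terminates after at most $\CO(sd\log d)$ rounds, each strictly decreasing the support, thus producing a solution with support of the claimed polynomial size.
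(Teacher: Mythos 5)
First, a point of reference: the paper does not prove this lemma at all --- it is imported by citation from Eisenbrand and Shmonin, merely rephrased --- so there is no in-paper proof to compare against, and your sketch has to be judged against the known argument. Your overall strategy (iteratively shrinking the support of a non-negative integer solution by finding two distinct subsets of the currently used columns with equal sums and exchanging one for the other) is indeed exactly the Eisenbrand--Shmonin proof. The exchange step should be stated precisely: having found disjoint subsets $S_1,S_2$ of the support with $\sum_{i\in S_1}u_i=\sum_{i\in S_2}u_i$, add $t\bigl(\sum_{i\in S_2}u_i-\sum_{i\in S_1}u_i\bigr)=0$ to the solution with $t=\min_{i\in S_1}x_i$; this preserves the equations, integrality and non-negativity, does not enlarge the support, and zeroes out some $x_i$ with $i\in S_1$.

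The genuine problem is the step you single out as ``the main obstacle'': it rests on a reversed comparison of bounds, and the machinery you invoke to overcome it is not an argument. The naive pigeonhole --- a support of size $k$ has $2^k$ subsets, and every subset sum lies in an integer box with at most $(2k\cdot 2^s+1)^d$ points --- forces a collision as soon as $2^k>(2k\cdot 2^s+1)^d$, i.e.\ whenever $k$ exceeds a threshold in $\CO(d(s+\log d))$. You dismiss this as ``weaker'' than $\CO(sd\log d)$, but the inequality goes the other way: $d(s+\log d)\le 2\,sd\log d$ for $s\ge 1$ and $d\ge 2$, so the naive bound already implies the one the lemma claims, and no refinement is needed. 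Conversely, the refinement you declare necessary (dyadic bucketing of coordinate magnitudes plus an averaging argument localising the collision in a subsequence of length $\CO(\log d)$) is left entirely unsubstantiated --- no counting is carried out and no precise claim is formulated --- so the route you actually designate as the one reaching the stated bound has a real gap. The repair is simply to keep the ``naive'' argument you discarded. (A smaller slip: the iteration does not terminate ``after at most $\CO(sd\log d)$ rounds''; it terminates after at most $n$ rounds, each removing at least one index from the support, and stops once the support size is below the threshold --- it is the final support size, not the number of rounds, that the lemma bounds.)
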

\noindent To deal with lax one-step satisfiability, we will moreover
need the well-known result by Papadimitriou that establishes a
polynomial bound on the size of components of solutions of systems of
integer linear equations:
\begin{lemma}\label{lem:papadimitriou}\cite{Papadimitriou81}
  Every system of integer linear
  equations 
  in variables $x_1,\dots,x_n$ that has a solution over the
  non-negative integers has such a solution
  $(\hat x_1,\dots,\hat x_n)$ with the binary length of each
  component~$\hat x_i$ bounded polynomially in the overall binary
  representation size of the equation system.
\end{lemma}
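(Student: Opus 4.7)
The plan follows the classical reduction-to-the-full-rank-case strategy: start from any non-negative integer solution, iteratively shrink it until its support indexes linearly independent columns, and then invoke Cramer's rule combined with Hadamard's inequality on the resulting square invertible subsystem.

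Concretely, write the system as $Ax=b$ with $A\in\ZZ^{d\times n}$ and $b\in\ZZ^d$, and let~$s$ denote the overall binary representation size. Let $x\in\Nat^n$ satisfy $Ax=b$, and put $S=\{i\mid x_i>0\}$. If the columns $(A_i)_{i\in S}$ are linearly dependent, the integer kernel of $A_S$ contains a non-zero vector~$y$ (obtained by picking any rational kernel vector supported on~$S$ and clearing denominators). Subtracting a suitable integer multiple of~$y$ from~$x$ then produces a new non-negative integer solution with strictly smaller support. Iterating at most $n$ times yields a solution $\hat x$ whose support $\hat S$ indexes linearly independent columns, so in particular $|\hat S|\le d$.

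For such $\hat x$, select a maximal set of independent rows of $A_{\hat S}$ to form an invertible square submatrix $B$ of size $|\hat S|$, with $b'$ the corresponding restriction of~$b$; then $\hat x_{\hat S}$ is the unique solution of $Bz=b'$. By Cramer's rule, $\hat x_i=\det(B_i)/\det(B)$ where $B_i$ is $B$ with column~$i$ replaced by~$b'$, and Hadamard's inequality bounds both determinants by $2^{\CO(sd+d\log d)}$; hence each $\hat x_i$ has binary length polynomial in $s$ and $d$, and thus polynomial in the overall input size.

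The main obstacle is the integer reduction step: the naive choice $\lambda=\min\{x_i/|y_i|\mid y_i<0\}$ drives a coordinate to zero over the reals but need not be integer-valued. The cleanest remedy is lattice-theoretic: compute a basis of the integer kernel lattice of $A_S$ via the Hermite normal form (which is known to have polynomially bounded entries) and reduce $x$ against this basis to obtain a canonical representative of bounded size. Alternatively, one iterates the step taking the floor of the optimal real multiplier; each round strictly decreases some positive component of the current solution while preserving non-negativity and integrality, so by a well-founded induction on the sum of components the support eventually drops, and the process terminates. Once the full-rank case is reached the Cramer--Hadamard estimate is entirely routine.
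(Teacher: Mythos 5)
The paper does not actually prove this lemma; it cites Papadimitriou, so your attempt should be judged on its own merits. It has a fatal gap: the reduction to a solution whose support indexes linearly independent columns is simply \emph{false} over the non-negative integers. Consider the single equation $2x_1+3x_2=5$. Its only non-negative integer solution is $(1,1)$, whose support indexes the two columns $(2)$ and $(3)$, which are linearly dependent; there is no non-negative integer solution supported on a single column, since neither $2x_1=5$ nor $3x_2=5$ is solvable in $\Nat$. This also shows concretely why both of your proposed remedies fail. The integer kernel here is generated by $y=(3,-2)$, and subtracting any non-zero integer multiple of $y$ from $(1,1)$ destroys non-negativity: the optimal real multiplier is $1/3$, its floor is $0$, and your iteration stalls forever without ever shrinking the support, so the claimed well-founded induction does not go through. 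Likewise, reducing $x$ modulo a Hermite-normal-form basis of the kernel lattice gives a canonical representative of bounded size but with no guarantee of non-negativity, which is the whole point of the lemma. This obstruction is not incidental: it is exactly why the integer Carath\'eodory theorem (Lemma~\ref{lem:eisenbrand}) only achieves $\CO(sd\log d)$ non-zero components rather than the $d$ of the classical (real) Carath\'eodory theorem that your plan implicitly assumes.

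The Cramer--Hadamard estimate in your full-rank case is fine as far as it goes, but the theorem cannot be reached that way. The standard proofs bound the \emph{magnitude} of components directly without first shrinking the support: for instance, take a non-negative integer solution minimizing $\sum_i x_i$ and show by a pigeonhole/exchange argument on sub-multisets of columns (or, in Papadimitriou's original dynamic-programming formulation, on partial sums taken modulo a suitable bound) that if some component were too large one could excise a non-negative integer combination of columns summing to zero, contradicting minimality; this yields a bound of the form $n\cdot(d\cdot 2^s)^{\CO(d)}$ on each component, hence polynomial binary length. Results of this shape are also available off the shelf (Borosh--Treybig; von zur Gathen--Sieveking). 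As it stands, your argument does not establish the lemma.
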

\begin{corollary}\label{cor:int-constr}
  Solvability of systems of Presburger constraints is in~\NP.
\end{corollary}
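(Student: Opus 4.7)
The plan is to reduce solvability of a system of Presburger constraints to solvability of a system of linear Diophantine equations over the non-negative integers, whose size blows up only polynomially, and then invoke Papadimitriou's bound (Lemma~\ref{lem:papadimitriou}) to obtain an \NP procedure.

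First, I would normalize each constraint into a linear equation, introducing fresh non-negative slack variables as required. A strict inequality $\sum_i u_i x_i > \inhom$ becomes $\sum_i u_i x_i - s = \inhom + 1$ with a fresh $s \ge 0$; a strict inequality $\sum_i u_i x_i < \inhom$ becomes $\sum_i u_i x_i + s + 1 = \inhom$ with a fresh $s \ge 0$; equalities remain unchanged. For a modular congruence $\sum_i u_i x_i \eqmod{k} \inhom$, the underlying statement is $\sum_i u_i x_i - \inhom \in k \cdot \ZZ$. Since the quotient can have either sign, I would introduce two fresh non-negative variables $y^+, y^-$ and the equation $\sum_i u_i x_i - k y^+ + k y^- = \inhom$. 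All coefficients appearing in the resulting system either already appear in the input (the $u_i$ and $\inhom$, in binary) or are the moduli $k$, which are likewise part of the input in binary; hence the translated system has binary representation size polynomial in that of the original, and is equisolvable with it over the non-negative integers.

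Second, applying Lemma~\ref{lem:papadimitriou} to the translated system, any non-negative integer solution can be replaced by one whose components have binary length polynomial in the size of the system, and hence polynomial in the size of the original system of Presburger constraints. The \NP algorithm therefore guesses such a bounded solution (including values for all slack and quotient variables) and verifies in polynomial time that each original constraint holds under it.

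The only slightly delicate point, rather than a real obstacle, is the handling of the modular constraints: one must ensure that the introduced quotient variables $y^+, y^-$ are allowed to range over all of $\Nat$ (so that both signs of the quotient are captured), and that the coefficient $k$ carried into the equation does not blow up the input size (which is fine since $k$ appears in the input in binary). Once the translation is in place, the \NP membership is immediate from Lemmas~\ref{lem:papadimitriou} and the polynomial-time checkability of a single Presburger constraint.
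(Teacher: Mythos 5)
Your proposal is correct and follows essentially the same route as the paper: reduce each Presburger constraint to a linear Diophantine equation over the non-negative integers via slack variables, then apply Lemma~\ref{lem:papadimitriou} to bound the guessed solution. The only (harmless) difference is in the modular constraints, where the paper nondeterministically chooses the sign of the quotient and uses a single fresh variable with coefficient $\pm k$, while you encode both signs at once via two non-negative quotient variables $y^+,y^-$.
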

\begin{proof}
  It suffices to show that we can generalize
  Lemma~\ref{lem:papadimitriou} to systems of Presburger constraints.
  Indeed, we can reduce Presburger constraints to equations involving
  additional variables. Specifically, we replace an inequality
  $\sum u_i \cdot x_i > \inhom$ with the equation
  $\sum u_i \cdot x_i - y = \inhom + 1$ and a modular constraint
  $\sum u_i \cdot x_i \eqmod{k} \inhom$ with either
  $\sum u_i \cdot x_i -k\cdot y = \inhom$ or $\sum u_i \cdot x_i + k\cdot y = \inhom$,
  depending on whether the given solution satisfies
  $\sum u_i \cdot x_i \ge \inhom$ or $\sum u_i \cdot x_i \le \inhom$; in every
  such replacement, choose~$y$ as a fresh variable.
\end{proof}
%
\noindent From these observations, we obtain sufficient tractability
of strict one-step satisfiability in our key examples:
\begin{example}\label{expl:ossmp}
  \begin{enumerate}[wide]
  \item \label{expl:osmp-presburger}\emph{Presburger modal logic has
      the one-step small model property}. To see this, let a one-step
    pair $(\phi,\eta)$ over $V=\{a_1,\dots,a_n\}$ be satisfied by a
    one-step model $M=(X,\tau,\mu)$, where by
    Lemma~\ref{lem:one-step-models} we can assume that $X$ consists of
    satisfying valuations of $\eta$, hence has at most exponential
    size in~$|\phi|$. Put
    $q_i=\mu(\tau(a_i))$. 
    Now all we need to know about~$\mu$ to guarantee that~$M$
    satisfies $\phi$ is that the (non-negative integer) numbers
    $y_x:=\mu(x)$, for $x\in X$, satisfy
    \begin{equation*}\textstyle
      \sum_{x\in\tau(a_i)}y_x=q_i\qquad\text{for $i=1,\dots,n$}.
    \end{equation*}
    We can see this as a system of~$n$ linear equations in the
    $y_x$, which by the integer Carath\'eodory theorem
    (Lemma~\ref{lem:eisenbrand}) has a non-negative integer solution
    $(y'_x)_{x\in X}$ with only~$m$ nonzero components where~$m$ is
    polynomially bounded in~$n$ (the coefficients of the~$y_x$ all
    being~$1$), and hence in~$|\phi|$; from this solution, we
    immediately obtain a one-step model $(X',\tau',\mu')$ of
    $(\phi,\eta)$ with $m$ states. Specifically, take
    $X'=\{x\in X\mid y'_x>0\}$, $\tau'(a_i)=\tau(a_i)\cap X'$ for
    $i=1,\dots,n$, and $\mu'(x)=y'_x$ for~$x\in X'$.

    Moreover, again using Lemma~\ref{lem:one-step-models}, lax
    one-step satisfiability in Presburger modal logic reduces
    straightforwardly to checking solvability of Presburger
    constraints over the non-negative integers, which by
    Corollary~\ref{cor:int-constr} can be done in~NP. Specifically,
    given a one-step pair $(\phi,\eta)$, with~$\eta$ represented as
    per Definition~\ref{def:one-step}, introduce a variable~$x_{\rho}$
    for every conjunctive clause~$\rho$ of~$\eta$ (i.e.\ for every
    valuation satisfying~$\eta$), and translate every constraint
    $\sum_i u_i\cdot\sharp(a_i)\sim v$ in~$\phi$ into
    \begin{equation*}
      \sum_i u_i\cdot\sum_{\rho\PLentails\eta\land a_i}x_\rho\sim v.
    \end{equation*}
    Thus, the lax one-step satisfiability problem of Presburger modal logic is in \NP,
    and by Lemma~\ref{lem:oss}, we obtain that \emph{strict one-step
      satisfiability in Presburger modal logic is in \NP}.
  \item By a completely analogous argument as for Presburger modal
    logic (using the standard Carath\'eodory theorem),
    \emph{probabilistic modal logic with polynomial inequalities has
      the one-step small model property}. Moreover, lax one-step
    satisfiability reduces, analogously as in the previous item, to
    solvability of systems of polynomial inequalities over the reals,
    which can be checked in \PSpace~\cite{Canny88} (this argument can
    essentially be found in~\cite{FaginHalpernMegiddo90}). Again, we
    obtain that \emph{strict one-step satisfiability in probabilistic
      modal logic with polynomial inequalities is in \PSpace}.
  \end{enumerate}
\end{example}
  
\begin{rem}[Variants of the running examples]
  The proof of the one-step small model property for Presburger modal
  logic and probabilistic modal logic with polynomial inequalities
  will in both cases work for any modal logic over integer- or
  real-weighted systems, respectively, whose modalities depend only on
  the measures of their arguments; call such modalities \emph{fully
    explicit}. There are quite sensible operators that violate this
  restriction; e.g.\ an operator $I(\phi,\psi)$ `$\phi$ is independent
  of~$\psi$' would depend on the probabilities of $\phi$ and~$\psi$
  but also on that of $\phi\land\psi$. Indeed, in this vein we easily
  obtain a natural logic over probabilistic systems that fails to have
  the one-step small model property: If we generalize the independence
  modality~$I$ to several arguments and combine it with operators
  $w(-)>0$ stating that their arguments have positive probability,
  then every one-step model of the one-step pair
  \begin{equation*}\textstyle
    (I(a_1,\dots,a_n)\land\Land_{i=1}^nw(a_i)>0 \land\Land_{i=1}^nw(\neg a_i)>0,\top)
  \end{equation*}
  has at least $2^n$ states.

  However, a completely analogous argument as in the proof of
  Lemma~\ref{lem:one-step-models} shows that every predicate lifting
  for functors such as $\Dist$, $\SDist$, or~$\Bag$ depends only on
  the measures of Boolean combinations of its arguments, which can
  equally well be expressed using the propositional operators of the
  logic. That is, every coalgebraic modal logic over weighted systems
  translates (possibly with exponential blowup) into one that has only
  fully explicit modalities and hence has the one-step small model
  property, as exemplified for the case of~$I$ in
  Section~\ref{sec:prob}.

  Incidentally, a similar example as the above produces a natural
  example of a logic that does not have the one-step small model
  property but whose lax one-step satisfiability problem is
  nevertheless in~\ExpTime. Consider a variant of probabilistic modal
  logic (Section~\ref{sec:prob}) featuring \emph{linear} (rather than
  polynomial) inequalities over probabilities $w(\phi)$, and
  additionally \emph{fixed-probability conditional independence}
  operators $I_{p_1,\dots,p_n}$ of arity~$n+1$ for $n\ge 1$ and
  $p_1,\dots,p_n\in\Rat\cap[0,1]$. The application of
  $I_{p_1,\dots,p_n}$ to formulae $\phi_1,\dots,\phi_n,\psi$ is
  written $I_{p_1,\dots,p_n}(\phi_1,\dots,\phi_n\mid \psi)$, and read
  `$\phi_1,\dots,\phi_n$ are conditionally independent given~$\psi$,
  and each~$\phi_i$ has conditional probability $p_i$ given~$\psi$'.
  A one-step modal literal
  $I_{p_1,\dots,p_n}(a_1,\dots,a_n|b)$ translates, by definition, into linear equalities
  \begin{equation*}\textstyle
    w(\Land_{i\in I}a_i)
    -(\prod_{i\in I}p_i) w(\psi)=0\qquad\text{for all $I\subseteq\{1,\dots,n\}$.}
  \end{equation*}
  Thus, a given one-step clause $\psi$ generates, in the same way as
  previously, a system of linear inequalities, now of exponential size
  in~$|\psi|$. Since solvability of systems of linear inequalities can,
  by standard results in linear programming~\cite{Schrijver86}, be
  checked in polynomial time, we obtain that the strict one-step
  satisfiability problem is in \ExpTime as claimed. On the other hand,
  the one-step small model property fails for the same reasons as
  for the~$I$ operator described above.
\end{rem}
\noindent By previous results in coalgebraic
logic~\cite{SchroderPattinson08d}, the observations in
Example~\ref{expl:ossmp}.\ref{expl:osmp-presburger} imply decidability
in \PSpace of the respective \emph{plain} satisfiability problems,
reproducing a previous result by Demri and Lugiez~\cite{DemriLugiez10}
for the case of Presburger modal logic; we show in
Section~\ref{sec:type-elim} that the same observations yield an
optimal upper bound \ExpTime for satisfiability under global
assumptions.

\begin{rem}[Comparison with tractable modal rule sets]\label{rem:rules}
  Most previous generic complexity results in coalgebraic logic have
  relied on complete sets of modal tableau rules that are sufficiently
  tractable for purposes of the respective complexity bound,
  e.g.~\cite{SchroderPattinson09a,SchroderEA09,GoreEA10a}. We briefly
  discuss how these assumptions imply the ones used in the present
  paper.

  The rules in question (\emph{one-step tableau rules}) are of the
  shape $\phi/\rho$ where $\phi$ is a modal conjunctive clause over
  $V$ and $\rho\in\Prop(V)$, subject to the same syntactic
  restrictions as one-step pairs, i.e.~$\phi$ must be clean and~$\rho$
  can only mention variables occurring in~$\phi$.  Such rules form
  part of a tableau system that includes also the standard
  propositional rules. As usual in tableau systems, algorithms for
  satisfiability checking based on the tableau rules proceed roughly
  according to the principle `in order to establish that~$\psi$ is
  satisfiable, show that the conclusions of all rule matches to~$\psi$
  are satisfiable' (this is dual to validity checking via formal proof
  rules, where to show that~$\psi$ is valid one needs to find some
  proof rule whose conclusion matches~$\psi$ and whose premiss is
  valid). More precisely, the (one-step) soundness and completeness
  requirement on a rule set~$\Rules$ demands that a one-step
  pair~$(\psi,\eta)$ is satisfiable iff for every rule $\phi/\rho$
  in~$\Rules$ and every injective variable renaming~$\sigma$ such that
  $\psi\PLentails\phi\sigma$ (see Definition~\ref{def:prop} for the
  notation~$\PLentails$), the propositional formula
  $\eta\land\rho\sigma$ is satisfiable. Since~$\psi$ and~$\phi$ are
  modal conjunctive clauses (and~$\psi$, being clean, cannot contain
  clashing modal literals), $\psi\PLentails\phi\sigma$ means that
  $\psi$ contains every modal literal of $\phi\sigma$.

  The exact requirements on tractability of a rule set vary with the
  intended complexity bound for the full logic. In connection with
  \ExpTime bounds, one uses \emph{exponential tractability} of the
  rule set (e.g.~\cite{CirsteaEA11}). This condition requires that
  rules have an encoding as strings such that every rule $\phi/\rho$
  in~$\Rules$ that \emph{matches} a given modal conjunctive
  clause~$\psi$ over~$V$ \emph{under} a given injective
  renaming~$\sigma$, i.e.\ $\psi\PLentails\phi\sigma$, has an encoding
  of polynomial size in~$\psi$, and moreover given a modal conjunctive
  clause~$\psi$ over~$V$, it can be decided in exponential time
  in~$|\psi|$ whether (i) an encoded rule $\phi/\rho$ matches~$\psi$
  under a given renaming~$\sigma$, and (ii) whether a given
  conjunctive clause~$\chi$ over~$\Var(\psi)$ propositionally entails
  the conclusion~$\rho\sigma$ the instance $\phi\sigma/\rho\sigma$ of
  an encoded rule $\phi/\rho$ under a given renaming~$\sigma$.

  Now suppose that a set~$\Rules$ of modal tableau rules satisfies all
  these requirements, i.e.\ is one-step sound and complete for the
  given logic and exponentially tractable, with polynomial bound~$p$
  on the size of rule codes. Then one sees easily that the strict
  one-step satisfiability problem is in \ExpTime: Given a one-step
  pair $(\psi,\eta)$ to be checked for one-step satisfiability, we can
  go through all rules $\phi/\rho$ represented by codes of length at
  most $p(|\psi|)$ and all injective renamings~$\sigma$ of the
  variables of~$\phi$ into the variables of~$\psi$ such that
  $\phi/\rho$ matches~$\psi$ under~$\sigma$, and then for each such
  match go through all conjunctive clauses~$\chi$ over $\Var(\psi)$
  that propositionally entail~$\rho\sigma$, checking for each
  such~$\chi$ that $\eta\land\chi$ is propositionally
  satisfiable. Both loops go through exponentially many iterations,
  and all computations involved take at most exponential time.
  Summing up, complexity bounds obtained by our current semantic
  approach subsume earlier tableau-based ones.
\end{rem}
\section{Type Elimination}\label{sec:type-elim} 



\noindent We now describe a type elimination algorithm that realizes
an \ExpTime upper bound for reasoning with global assumptions in
coalgebraic logics. Like all type elimination algorithms, it is not
suited for practical use, as it begins by constructing the full
exponential-sized set of types (in the initialization phase of the
computation of a greatest fixpoint). We therefore refine the algorithm
to a global caching algorithm in Section~\ref{sec:caching}.

As usual, we rely on defining a scope of relevant formulae:
\begin{definition}
  We define \emph{normalized negation} $\nneg$ by taking
  $\nneg\phi=\phi'$ if a formula $\phi$ has the form $\phi=\neg\phi'$,
  and $\nneg\phi=\neg\phi$ otherwise.  A set $\Sigma$ of formulae is
  \emph{closed} if $\Sigma$ is closed under subformulae and normalized
  negation. The \emph{closure} of a set~$\Gamma$ of formulae is the
  least closed set containing~$\Gamma$.
\end{definition}
\noindent We \emph{fix from now on a global assumption $\psi$ and a
  formula $\phi_0$ to be checked for $\psi$-satisfiability}. We denote
the closure of $\{\psi,\phi_0\}$ in the above sense by $\Sigma$. Next,
we approximate the $\psi$-satisfiable subsets of $\Sigma$ from above
via a notion of type that takes into account only propositional
reasoning and the global assumption~$\psi$:
\begin{definition}\label{def:type}
  A \emph{$\psi$-type} is a subset $\type\subseteq\Sigma$ such that
  \begin{itemize}
  \item $\psi\in \type\not\owns\bot$;
  \item whenever $\neg \phi\in\Sigma$, then $\neg \phi\in \type$ iff
    $\phi\notin \type$;
  \item whenever $\phi\land\chi\in\Sigma$, then
    $\phi \land \chi \in \type$ iff $\phi,\chi\in \type$.
  \end{itemize}
\end{definition}
\noindent
The design of the algorithm relies on one-step satisfiability as an
abstraction: We denote the set of all $\psi$-types by $\types{\psi}$. For a
formula $\phi\in\Sigma$, we put
\begin{equation*}
  \hat \phi=\{\type\in \types{\psi}\mid \phi\in \type\},
\end{equation*}
intending to construct a model on a suitable subset
$S\subseteq\types{\psi}$ in such a way that $\hat\phi\cap S$ becomes
the extension of~$\phi$. We take $\Sigmavars$ to be the set of
propositional variables $a_{\hearts\rho}$ for all modal atoms
$\hearts\rho\in\Sigma$; we then define a substitution $\Sigmasubst$ by
$\Sigmasubst(a_{\hearts\rho})=\rho$ for
$a_{\hearts\rho}\in \Sigmavars$. For $S\subseteq \types{\psi}$ and
$\type\in S$, we construct a one-step pair
\begin{equation*}
  (\phi_\type,\eta_S)
\end{equation*}
over $\Sigmavars$ by taking $\phi_\type$ to be the conjunction of all
modal literals $\epsilon\hearts a_{\hearts\rho}$ over $\Sigmavars$
such that $ \epsilon\hearts\rho\in \type$ (note that indexing the
propositional variables~$a_{\hearts\rho}$ over $\hearts\rho$ instead
of just~$\rho$ ensures that~$\psi_\Gamma$ is clean as required), and
$\eta_S$ to be the DNF (for definiteness, in bit vector representation
as per Definition~\ref{def:one-step}) containing for each
$\typeb\in S$ a conjunctive clause
\begin{equation*}
  \Land_{\hearts\rho\in\Sigma\mid\rho\in\typeb}a_{\hearts\rho}\land
  \Land_{\hearts\rho\in\Sigma\mid\nneg\rho\in\typeb}\neg a_{\hearts\rho}.  
\end{equation*}
That is,~$\phi_\Gamma$ arises from~$\Gamma$ by abstracting the
arguments~$\rho$ of modalized formulae~$\hearts\rho\in\Gamma$ as
propositional variables~$a_{\hearts\rho}$, and~$\eta$ captures the
propositional dependencies that will hold in~$S$ among these arguments
if the construction works as intended. We define a functional
\begin{equation}\label{eq:elim-functional}
 \begin{array}{lcll}
   \CE\colon &\Pow(\types{\psi})&\to & \Pow(\types{\psi})\\[0.3ex]
     & S & \mapsto & \{\type \in S \mid (\phi_\type,\eta_S)\text{ is one-step satisfiable}\},
 \end{array}
\end{equation}
whose greatest fixpoint $\nu\CE$ will turn out to contain precisely
the satisfiable types. Existence of $\nu\CE$ is guaranteed by the
Knaster-Tarski fixpoint theorem and the following lemma:
\begin{lemma}
  The functional $\CE$ is monotone w.r.t.\ set inclusion.
\end{lemma}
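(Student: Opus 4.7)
The plan is to show that enlarging $S$ to some $S'\supseteq S$ only weakens the propositional component $\eta_S$, so any one-step model witnessing satisfiability of $(\phi_\type,\eta_S)$ still witnesses satisfiability of $(\phi_\type,\eta_{S'})$. Since the modal component $\phi_\type$ depends only on $\type$ and not on $S$, this immediately yields monotonicity.

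Concretely, suppose $S\subseteq S'\subseteq\types{\psi}$ and take $\type\in\CE(S)$. Then $\type\in S\subseteq S'$, so the first membership requirement is satisfied. It remains to show that $(\phi_\type,\eta_{S'})$ is one-step satisfiable. The key observation is that $\eta_{S'}$ is a DNF whose disjuncts are indexed by types $\typeb\in S'$, and $\eta_S$ is the sub-DNF consisting of the disjuncts indexed by $\typeb\in S$. Hence $\eta_S$ propositionally entails $\eta_{S'}$ (adding disjuncts to a DNF weakens it).

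Now let $(X,\tau,t)$ be a one-step model of $(\phi_\type,\eta_S)$, so that $\tau(\eta_S)=X$ and $t\in\tau(\phi_\type)$. Interpreting the entailment $\eta_S\PLentails\eta_{S'}$ in the Boolean algebra $\Pow X$ under $\tau$ yields $\tau(\eta_S)\subseteq\tau(\eta_{S'})$, hence $\tau(\eta_{S'})=X$. Combined with $t\in\tau(\phi_\type)$, this shows that the very same one-step model $(X,\tau,t)$ witnesses that $(\phi_\type,\eta_{S'})$ is one-step satisfiable, so $\type\in\CE(S')$.

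No real obstacle arises here; the only thing to get right is the direction of the implication between $\eta_S$ and $\eta_{S'}$, which goes the favourable way precisely because $\eta_S$ is a disjunction of clauses rather than a conjunction. The proof is therefore a couple of lines unpacking the definitions.
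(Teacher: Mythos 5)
Your proof is correct and follows exactly the same route as the paper's (one-line) argument: for $S\subseteq S'$ the DNF $\eta_{S'}$ has more disjuncts and is therefore weaker than $\eta_S$, so any one-step model of $(\phi_\type,\eta_S)$ also satisfies $(\phi_\type,\eta_{S'})$. You merely spell out the details (including the membership condition $\type\in S\subseteq S'$) that the paper leaves implicit.
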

\begin{proof}
  For $S\subseteq S'$, the DNF $\eta_{S'}$ is weaker
  than~$\eta_S$, as it contains more disjuncts.
\end{proof}
\noindent By Kleene's fixpoint theorem, we can compute $\nu\CE$ by
just iterating $\CE$:
\begin{alg}\label{alg:type-elim}
  (Decide by type elimination whether $\phi_0$ is satisfiable over $\psi$)
  \begin{enumerate}
  \item Set $S:=\types{\psi}$.
  \item Compute $S'=\CE(S)$; if $S'\neq S$ then put $S:=S'$ and repeat.
  \item Return `yes' if $\phi_0\in \type$ for some $\type\in S$, and `no'
    otherwise.
  \end{enumerate}
\end{alg}
\noindent The run time analysis is straightforward:
\begin{lemma}\label{lem:type-elim-time}
  \noindent If the strict one-step satisfiability problem of~$\Lambda$
  is in \ExpTime, then Algorithm~\ref{alg:type-elim} has at most
  exponential run time.
\end{lemma}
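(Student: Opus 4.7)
The plan is to bound each of the three quantities whose product dominates the running time: the number of candidate types, the number of fixpoint iterations, and the cost of a single iteration.

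First I would observe that the closure $\Sigma$ of $\{\psi,\phi_0\}$ has size polynomial in $|\psi|+|\phi_0|$, since normalized negation and subformula closure each only multiply the size by a constant. Consequently the set $\types{\psi}\subseteq\Pow(\Sigma)$ of $\psi$-types has cardinality at most $2^{|\Sigma|}$, i.e.\ exponential in the input. Since each non-terminating pass of Step~2 replaces $S$ by a \emph{strict} subset $\CE(S)\subsetneq S$ (otherwise the algorithm halts), the loop runs for at most $|\types{\psi}|$, hence exponentially many, iterations.

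Next I would analyse the cost of one evaluation of $\CE(S)$. The modal conjunctive clause $\phi_\Gamma$ associated with a type $\Gamma\in S$ contains at most one modal literal per formula in $\Sigma$, so $|\phi_\Gamma|$ is polynomial in $|\Sigma|$, hence polynomial in the input size. The propositional component $\eta_S$, viewed as a bit vector indexed by the conjunctive clauses over $\Sigmavars$, has length at most $2^{|\Sigmavars|}$, which is still exponential in the input; by the conventions of Definition~\ref{def:one-step}, this bit vector is placed on the random-access input tape of the oracle for strict one-step satisfiability and therefore does not enter its input size measure, which is just $|\phi_\Gamma|$. Building $\eta_S$ in memory takes exponential time, but this is absorbed by the bounds below. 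By the assumption that strict one-step satisfiability is in \ExpTime, each of the at most $|S|\le 2^{|\Sigma|}$ one-step satisfiability queries in a single pass runs in time $2^{p(|\phi_\Gamma|)}=2^{q(|\psi|+|\phi_0|)}$ for polynomials $p,q$, so one pass takes exponential time overall.

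Multiplying the exponentially many iterations by the exponential per-iteration cost (including the one-time construction of $\eta_S$) still yields an exponential bound; the final check in Step~3 clearly takes exponential time as well. The main subtlety to highlight in the write-up is the bookkeeping around~$\eta_S$: it is exponential in size and must be recomputed whenever $S$ shrinks, but because the strict one-step satisfiability problem charges only for $|\phi_\Gamma|$, this large propositional component does not blow the time budget of the individual oracle calls, and its construction cost is dominated by the overall exponential bound.
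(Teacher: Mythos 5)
Your proof is correct and follows essentially the same route as the paper's: exponentially many iterations (bounded by $|\types{\psi}|$ since $S$ strictly shrinks), exponentially many types per iteration, and each one-step satisfiability check in exponential time because the strict problem's input size is just $|\phi_\Gamma|$, which is polynomial in the input. Your additional care about the bit-vector representation and construction cost of $\eta_S$ is a sound elaboration of a point the paper leaves implicit, but it does not change the argument.
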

\begin{proof}
  Since~$\types{\psi}$ has at most exponential size, the algorithm
  runs through at most exponentially many iterations. In a single
  iteration, we have to compute $\CE(S)$, checking for each of the at
  most exponentially many $\type\in S$ whether~$(\phi_\type,\eta_S)$
  is one-step satisfiable. The assumption of the lemma guarantees that
  each one-step satisfiability check takes only exponential time, as
  $\phi_\type$ is of linear size.
\end{proof}
\noindent It remains to prove correctness of the algorithm; that is,
we show that, as announced above, $\nu\CE$ consists precisely of the
$\psi$-satisfiable types. We split this claim into two inclusions,
corresponding to soundness and completeness, respectively:
\begin{lemma}\label{lem:realization}
  The set of $\psi$-satisfiable types is a postfixpoint of $\CE$.
\end{lemma}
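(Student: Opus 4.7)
The plan is to show directly that every $\psi$-satisfiable type~$\type$ lies in $\CE(S)$, where $S$ denotes the set of all $\psi$-satisfiable types. Concretely, given such a~$\type$, witnessed by a $\psi$-model $C=(X,\gamma)$ and a state $x\in X$ with $x\models_C\phi$ for all $\phi\in\type$, I would construct a one-step model demonstrating one-step satisfiability of $(\phi_\type,\eta_S)$.

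The one-step model is the obvious one suggested by the naming convention for propositional variables: take the carrier to be~$X$, the collection of successors to be $\gamma(x)\in TX$, and define the valuation $\tau\colon\Sigmavars\to\Pow X$ by $\tau(a_{\hearts\rho})=\Sem{\rho}_C$ for each $\hearts\rho\in\Sigma$. Two things must then be verified.

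First, I would check that $\gamma(x)\in\tau(\phi_\type)$. Unfolding the definition of $\phi_\type$, each conjunct is a modal literal $\epsilon\hearts a_{\hearts\rho}$ coming from a formula $\epsilon\hearts\rho\in\type$. Since $x\models_C\epsilon\hearts\rho$, the coalgebraic semantics gives $\gamma(x)\in\epsilon\cdot\Sem{\hearts}_X(\Sem{\rho}_C)=\tau(\epsilon\hearts a_{\hearts\rho})$, exactly as required.

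Second, I need $\tau(\eta_S)=X$. For any $y\in X$, consider the set $\typeb_y=\{\phi\in\Sigma\mid y\models_C\phi\}$. Since~$\Sigma$ is closed under subformulae and normalized negation and $C$ is a $\psi$-model, $\typeb_y$ is easily seen to be a $\psi$-type, and it is $\psi$-satisfiable (by~$y$ in~$C$), hence $\typeb_y\in S$. The conjunctive clause contributed by~$\typeb_y$ to the DNF~$\eta_S$ is satisfied by~$y$ under~$\tau$, because for each $\hearts\rho\in\Sigma$ the literal chosen is the positive one $a_{\hearts\rho}$ exactly when $\rho\in\typeb_y$, i.e.\ when $y\in\Sem{\rho}_C=\tau(a_{\hearts\rho})$, and the negative one otherwise. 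Thus $y\in\tau(\eta_S)$ for every~$y$, so $\tau(\eta_S)=X$. Combining the two items, $(X,\tau,\gamma(x))$ witnesses one-step satisfiability of $(\phi_\type,\eta_S)$, so $\type\in\CE(S)$.

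There is no real obstacle here; the argument is essentially bookkeeping once one notes the pivotal fact that the type realized by \emph{any} state of a $\psi$-model belongs to~$S$, which is what makes the global propositional constraint~$\eta_S$ globally valid in the constructed one-step model. This will contrast with the converse direction (completeness), where a suitable coalgebra must be assembled by gluing one-step models together.
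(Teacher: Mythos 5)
Your proposal is correct and follows essentially the same route as the paper's proof: the same one-step model $(X,\tau,\gamma(x))$ with $\tau(a_{\hearts\rho})=\Sem{\rho}_C$, the same verification of the modal component from $x\models_C\type$, and the same observation that the type realized by each $y\in X$ is itself $\psi$-satisfiable and hence contributes a clause of $\eta_S$ satisfied by~$y$. No gaps.
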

\noindent (Since $\nu\CE$ is also the greatest postfixpoint of~$\CE$,
this implies that $\nu\CE$ contains all $\psi$-satisfiable types. This
means that Algorithm~\ref{alg:type-elim} is \emph{sound}, i.e.\
answers `yes' on $\psi$-satisfiable formulae.)
\begin{proof}
  Let $R$ be the set of $\psi$-satisfiable types; we have to show that
  $R \subseteq \CE(R)$.  So let $\type\in R$; then we have a state~$x$
  in a $\psi$-model $C = (X, \gamma)$ such that $x\models_C\type$. By
  definition of~$\CE$, we have to show that the one-step pair
  $(\phi_\type, \eta_R)$ is one-step satisfiable. We claim that the
  one-step model $M=(X,\tau,\xi(x))$, where~$\tau$ is defined by
  \begin{equation*}
    \tau(a_{\hearts\rho}):=\Sem{\Sigmasubst(a_{\hearts\rho})}_C=\Sem{\rho}_C
  \end{equation*}
  for $a_{\hearts\rho}\in \Sigmavars$, satisfies
  $(\phi_\type,\eta_R)$. For the propositional part~$\eta_R$, let
  $y\in X$; we have to show $y\in\tau(\eta_R)$. Put
  $\Delta=\{\rho\in\Sigma\mid\ y\models\rho\}$. Then $\Delta\in R$, so
  that $\eta_R$ contains the conjunctive clause
  \begin{equation*}
    \theta:=\Land_{\hearts\rho\in\Sigma\mid\rho\in\Delta}a_{\hearts\rho}\land\Land_{\hearts\rho\in\Sigma\mid\rho\notin\Delta}\neg
    a_{\hearts\rho}.
  \end{equation*}
  By the definitions of~$\tau$ and~$\theta$, we have
  $y\in\tau(\theta)\subseteq\tau(\eta_R)$, as required (e.g.~if
  $\hearts\rho\in\Sigma$ and $\rho\in\Delta$, then $y\models\rho$,
  i.e.~$y\in\Sem{\rho}_C=\tau(a_{\hearts\rho})$; the negative case is
  similar). Finally, for $\psi_\type$, let $\hearts\rho\in\Sigma$; we
  have to show that $\hearts\rho\in\type$ iff
  $\xi(x)\in\Sem{\hearts}(\tau(a_{\hearts\rho}))=\Sem{\hearts}(\Sem{\rho}_C)$. But
  the latter just means that $x\models\hearts\rho$, so the equivalence
  holds because~$x\models\Gamma$.
\end{proof}
\noindent For the converse inclusion, i.e.~completeness, we show the
following (combining the usual existence and truth lemmas):
\begin{lemma}\label{lem:ex-truth}
  Let $S$ be a postfixpoint of $\CE$. Then there exists a
  $T$-coalgebra $C=(S,\gamma)$ such that for each $\rho\in\Sigma$,
  $\Sem{\rho}_C=\hat\rho\cap S$. 
\end{lemma}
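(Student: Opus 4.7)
The plan is to construct the coalgebra $C = (S,\gamma)$ by choosing, for each $\Gamma \in S$, a one-step model witnessing that $(\phi_\Gamma,\eta_S)$ is satisfiable, and then to verify the required truth lemma by induction on $\rho \in \Sigma$, the only non-routine case being the modal one.

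Fix $\Gamma \in S$. Since $S \subseteq \CE(S)$, the pair $(\phi_\Gamma,\eta_S)$ is one-step satisfiable, so by Lemma~\ref{lem:one-step-models} I may pick a one-step model $(X_\Gamma,\tau_\Gamma,t_\Gamma)$ in which $X_\Gamma$ is the set of valuations $\kappa\colon \Sigmavars \to 2$ satisfying $\eta_S$ and $\tau_\Gamma(a) = \{\kappa \in X_\Gamma \mid \kappa(a) = \top\}$. Each $\kappa \in X_\Gamma$ satisfies some disjunct $\theta_\Delta$ of $\eta_S$, indexed by a type $\Delta \in S$; selecting one such $\Delta$ for each $\kappa$ defines a map $f_\Gamma\colon X_\Gamma \to S$, and I set $\gamma(\Gamma) := Tf_\Gamma(t_\Gamma) \in TS$. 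Because $\Sigma$ is closed under subformulae and normalized negation, for every $\hearts\rho \in \Sigma$ the clause $\theta_\Delta$ completely determines the value of $a_{\hearts\rho}$, and the determined value is $\top$ iff $\rho \in \Delta$. Consequently, regardless of the choices made in defining $f_\Gamma$, one has the key identity
\begin{equation*}
f_\Gamma^{-1}[\hat\rho \cap S] = \tau_\Gamma(a_{\hearts\rho}) \qquad \text{for every } \hearts\rho \in \Sigma.
\end{equation*}

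The induction on $\rho \in \Sigma$ then proceeds routinely. The Boolean cases $\rho \in \{\bot,\neg\chi,\chi_1\land\chi_2\}$ use only Definition~\ref{def:type} together with the induction hypothesis. For the modal case $\rho = \hearts\chi$, the naturality equation~\eqref{eq:naturality} of $\Sem{\hearts}$, combined with the induction hypothesis $\Sem{\chi}_C = \hat\chi \cap S$ and the identity just established, gives
\begin{equation*}
\Gamma \in \Sem{\hearts\chi}_C \iff t_\Gamma \in \Sem{\hearts}_{X_\Gamma}(f_\Gamma^{-1}[\hat\chi \cap S]) \iff t_\Gamma \in \tau_\Gamma(\hearts a_{\hearts\chi}).
\end{equation*}
Since $t_\Gamma \in \tau_\Gamma(\phi_\Gamma)$ by the choice of one-step model, the right-hand condition holds iff $\hearts a_{\hearts\chi}$ (rather than its negation) occurs as a conjunct of $\phi_\Gamma$, which by the very construction of $\phi_\Gamma$ is equivalent to $\hearts\chi \in \Gamma$.

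The main point requiring care is the non-canonicity of $f_\Gamma$: distinct types in $S$ may induce the same clause $\theta_\Delta$, so a genuine choice is made when picking $f_\Gamma$. I expect this to be cosmetic rather than a real obstacle, since the key identity on preimages depends only on the clause satisfied by $\kappa$, not on which particular $\Delta$ yielding that clause is selected; any such choice works.
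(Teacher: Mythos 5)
Your proposal is correct and follows essentially the same route as the paper's proof: pick a one-step model of $(\phi_\Gamma,\eta_S)$ for each $\Gamma\in S$, extract a map $f_\Gamma$ into $S$ from the disjuncts of $\eta_S$, set $\gamma(\Gamma)=Tf_\Gamma(t_\Gamma)$, and run the induction with naturality handling the modal case. The only (cosmetic) difference is that you normalize the one-step model via Lemma~\ref{lem:one-step-models} before defining $f_\Gamma$, and you make explicit the choice of type per clause, which the paper leaves implicit; your observation that this choice is harmless is accurate.
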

\begin{proof}
  To construct the transition structure $\gamma$, let $\type\in
  S$. Since~$S$ is a postfixpoint of~$\CE$, the one-step pair
  $(\phi_\type,\eta_S)$ is satisfiable; let $(X,\tau,t)$ be a one-step
  model of $(\phi_\type,\eta_S)$.
  By 
  construction of~$\eta_S$, we then have a map $f:X\to S$ such that for all
  $\hearts\rho\in\Sigma$,
  \begin{equation}\label{eq:def-f}
    x\in\tau(a_{\hearts\rho})\quad\text{iff}\quad\rho\in f(x)\quad\text{iff}\quad
    f(x)\in\hat\rho.
  \end{equation}
  We put $\gamma(\type)=Tf(t)\in TS$.  For the $T$-coalgebra
  $C=(S,\gamma)$ thus obtained, we show the claim
  $\Sem{\rho}_C=\hat \rho\cap S$ by induction over $\rho\in\Sigma$.
  The propositional cases are by the defining properties of types
  (Definition~\ref{def:type}). For the modal case, we have (for
  $\type$ and associated data $f,t$ as above)
  \begin{align*}
    \type\models\hearts\rho & \iff \gamma(\type)=Tf(t)\in\Sem{\hearts}_S(\Sem{\rho}_C)\\
                            & \iff  t\in\Sem{\hearts}_X(f^{-1}[\Sem{\rho}_C]) &&\by{naturality}\\
                            & \qquad\qquad = \Sem{\hearts}_X(f^{-1}[\hat\rho\cap S]) && \by{induction}\\
                            & \qquad\qquad = \Sem{\hearts}_X(\tau(a_{\hearts\rho})) &&\by{\ref{eq:def-f}}\\
    & \iff \hearts\rho\in\type && \by{definition of~$\phi_\type$} \qedhere
  \end{align*}
\end{proof}
\noindent A $T$-coalgebra as in Lemma~\ref{lem:ex-truth} is clearly a
$\psi$-model, so the above lemma implies that every postfixpoint
of~$\CE$, including~$\nu\CE$, consists only of $\psi$-satisfiable
types. That is, that Algorithm~\ref{alg:type-elim} is indeed complete,
i.e.\ answers `yes' \emph{only} on $\psi$-satisfiable formulae. This
completes the correctness proof of Algorithm~\ref{alg:type-elim}; in
combination with the run time analysis
(Lemma~\ref{lem:type-elim-time}) we thus obtain
\begin{theorem}[Complexity of satisfiability under global assumptions]\label{thm:exptime}
  If the strict one-step satisfiability problem of the logic~$\Lambda$
  is in \ExpTime, then satisfiability under global assumptions
  in~$\Lambda$ is in \ExpTime.
\end{theorem}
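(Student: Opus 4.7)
The plan is to observe that Theorem~\ref{thm:exptime} follows by direct assembly of the results already established for Algorithm~\ref{alg:type-elim}. Concretely, given $\psi$ and $\phi_0$, I would run Algorithm~\ref{alg:type-elim} and answer `yes' iff some type in the computed set contains $\phi_0$. The two things that need to be combined are (i) correctness, namely that the computed fixpoint $\nu\CE$ is exactly the set of $\psi$-satisfiable types in $\CT(\psi)$, and (ii) the complexity bound on the algorithm.

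For correctness, I would argue in two inclusions. The inclusion that every $\psi$-satisfiable type lies in $\nu\CE$ is exactly Lemma~\ref{lem:realization}: the set of $\psi$-satisfiable types is a postfixpoint of $\CE$, and $\nu\CE$ is the greatest postfixpoint by Knaster-Tarski. Conversely, every type in $\nu\CE$ is $\psi$-satisfiable: since $\nu\CE$ is itself a (post)fixpoint, Lemma~\ref{lem:ex-truth} produces a $T$-coalgebra $C=(\nu\CE,\gamma)$ in which $\Sem{\rho}_C=\hat\rho\cap\nu\CE$ for every $\rho\in\Sigma$; in particular $\psi\in\Gamma$ for every $\Gamma\in\nu\CE$ by the type definition, so $C$ is a $\psi$-model, and every $\Gamma\in\nu\CE$ witnesses $\psi$-satisfiability of each of its elements, including $\phi_0$ if $\phi_0\in\Gamma$.

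For the complexity side, Lemma~\ref{lem:type-elim-time} already delivers an exponential-time bound on Algorithm~\ref{alg:type-elim} under the assumption that strict one-step satisfiability is in \ExpTime: the total number of types is at most exponential in $|\Sigma|$ (and hence in $|\phi_0|+|\psi|$), so the fixpoint iteration terminates after exponentially many rounds, each of which performs at most exponentially many strict one-step satisfiability checks on pairs $(\phi_\Gamma,\eta_S)$ where~$\phi_\Gamma$ has size polynomial in~$|\Sigma|$ and~$\eta_S$ is stored on an auxiliary (random-access) input tape in bit-vector form, precisely as required by the strict one-step satisfiability problem. The final emptiness-of-$\hat\phi_0\cap\nu\CE$ check is then trivial.

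There is no real mathematical obstacle at this stage, since the heavy lifting has been packaged into Lemmas~\ref{lem:type-elim-time}, \ref{lem:realization}, and~\ref{lem:ex-truth}; the only point that requires a bit of care in writing is the reduction from the ambient satisfiability problem to the one-step satisfiability problem with the stipulated input-size convention. Specifically, I would note that the representation of~$\eta_S$ is exactly the bit-vector-over-DNF format fixed in Definition~\ref{def:one-step}, so that appealing to the hypothesis on strict (rather than lax) one-step satisfiability is legitimate; this is the only subtlety worth spelling out before invoking the lemmas and concluding.
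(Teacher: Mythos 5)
Your proposal is correct and matches the paper's own argument, which likewise obtains the theorem by combining Lemma~\ref{lem:realization} (the $\psi$-satisfiable types form a postfixpoint of~$\CE$, hence lie in~$\nu\CE$), Lemma~\ref{lem:ex-truth} (the coalgebra built on any postfixpoint, in particular~$\nu\CE$, is a $\psi$-model realizing each of its types), and the run-time analysis of Lemma~\ref{lem:type-elim-time}. Your added remark about the bit-vector representation of~$\eta_S$ matching the strict one-step satisfiability convention is exactly the right point of care and is consistent with how the paper sets things up.
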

\begin{example}
  By the results of the previous section (Example~\ref{expl:ossmp})
  and by inheriting lower bounds from reasoning with global
  assumptions in $K$~\cite{FischerLadner79}, we obtain that reasoning
  with global assumptions in Presburger modal logic and in
  probabilistic modal logic with polynomial inequalities is
  \ExpTime-complete. We note additionally that the same holds also for
  our separating example, probabilistic modal logic with linear
  inequalities and fixed-probability independence operators (which
  does not have the one-step small model property but whose strict
  one-step satisfiability problem is nevertheless in \ExpTime).
\end{example}

\section{Global Caching}
\label{sec:caching}

\noindent We now develop the type elimination algorithm from the
preceding section into a global caching algorithm. Roughly speaking,
global caching algorithms perform \emph{expansion} steps, in which new
nodes to be explored are added to the tableau, and \emph{propagation}
steps, in which the satisfiability (or unsatisfiability) is determined
for those nodes for which the tableau already contains enough
information to allow this. The practical efficiency of global caching
algorithms is based on the fact that the algorithm can stop as soon as
the root node is marked satisfiable or unsatisfiable in a propagation
step, thus potentially avoiding generation of all (exponentially many)
possible nodes.  Existing global caching algorithms work with systems
of tableau rules (satisfiability is guaranteed if every applicable
rule has at least one satisfiable conclusion)~\cite{GoreEA10a}. The
fact that we work with a semantics-based decision procedure impacts on
the design of the algorithm in two ways:
\begin{itemize}
\item In a tableaux setting, node generation in the expansion steps is
  driven by the tableau rules, and a global caching algorithm
  generates modal successor nodes by applying tableau rules. In
  principle, however, modal successor nodes can be generated at will,
  with the rules just pointing to relevant
  nodes. 
  In our setting, we
  make the relevant nodes explicit using the concept of
  \emph{children}.
\item The rules govern the propagation of satisfiability and
  unsatisfiability among the nodes. Semantic propagation of
  satisfiability is straightforward, but propagation of
  unsatisfiability again needs the concept of children: a (modal) node
  can only be marked as unsatisfiable once all its children have been
  generated (and too many of them are unsatisfiable).
\end{itemize}
\noindent We continue to work with a closed set $\Sigma$ as in
Section~\ref{sec:type-elim} (generated by the global assumption $\psi$
and the target formula $\phi_0$) but replace types with
\emph{(tableau) sequents}, i.e.\ arbitrary subsets
$\Gamma,\Theta\subseteq\Sigma$, understood conjunctively; in
particular, a sequent need not determine the truth of every formula
in~$\Sigma$. We write $\Seqs=\Pow\Sigma$, and occasionally refer to
sequents as \emph{nodes} in allusion to an implicit graphical
structure (made more explicit in Section~\ref{sec:concrete-alg}). A
\emph{state} is a sequent consisting of modal literals only (recall
that we regard propositional atoms as nullary modalities; so if
propositional atoms in this sense are part of the logic, then states
may also contain propositional atoms or their negations).  We denote
the set of states by $\States$.

To convert sequents into states, we employ the usual
\emph{propositional rules}
\begin{equation*}
  \infrule{\Gamma,\phi_1\land \phi_2}{\Gamma,\phi_1,\phi_2}
  \quad
  \infrule{\Gamma,\neg(\phi_1\land \phi_2)}{\Gamma,\neg \phi_1\mid\Gamma,\neg \phi_2}
  \quad
  \infrule{\Gamma,\neg\neg \phi}{\Gamma,\phi}
  \quad 
  \infrule{\Gamma,\bot}{}
\end{equation*}
where $\mid$ separates alternative conclusions (and the last rule has
no conclusion).
\begin{rem}
  Completeness of the global caching algorithm will imply that the
  usual clash rule $\Gamma,\phi,\neg\phi/\;$ (a rule with no
  conclusions, like the rule for~$\bot$ above) is admissible. Notice
  that in logics featuring propositional atoms~$p$, i.e.\ nullary
  modalities, the atomic clash rule $\Gamma,p,\neg p/$ would be
  considered a modal rule.
\end{rem}

\noindent As indicated above, the expansion steps of the algorithm
will be driven by the following child relation on tableau sequents:
\begin{definition}
  The \emph{children} of a state $\Gamma$ are the sequents consisting
  of~$\psi$ and, for each modal literal
  $\epsilon\hearts\phi\in\Gamma$, a choice of either $\phi$ or
  $\neg \phi$. The \emph{children} of a non-state sequent are its
  conclusions under the propositional rules. In both cases, we write
  $\children{\Gamma}$ for the set of children of~$\Gamma$.
\end{definition}
\noindent For purposes of the global caching algorithm, we modify the
functional~$\CE$ defined in Section~\ref{sec:type-elim} to work also
with sequents (rather than only types) and to depend on a set
$\Nodes\subseteq\Seqs$ of sequents already generated. To this end, we
introduce for each state~$\Gamma\in\Nodes$ a set~$V_\Gamma$ containing
a propositional variable $a_{\epsilon\hearts\rho}$ for each modal
literal $\epsilon\hearts\rho\in\Gamma$, as well as a substitution
$\sigma_\Gamma$ on~$V_\Gamma$ defined by
$\sigma_\Gamma(a_{\epsilon\hearts\rho})=\rho$. Given
$S\subseteq\Nodes$, we then define a one-step pair
$(\phi_\Gamma,\eta_S)$ over~$V_\Gamma$ similarly as in
Section~\ref{sec:type-elim}: We take~$\phi_\type$ to be the
conjunction of all modal literals
$\epsilon\hearts a_{\epsilon\hearts\rho}$ over $V_\Gamma$ such that
$ \epsilon\hearts\sigma_\Gamma(a_{\epsilon\hearts\rho})=
\epsilon\hearts\rho\in \type$ (we need to
index~$a_{\epsilon\hearts\rho}$ over $\epsilon\hearts\rho$ instead of
just $\hearts\rho$ to ensure that $\phi_\Gamma$ is clean, since
sequents, unlike types, may contain clashes), and $\eta_S$ to be the
DNF containing for each $\typeb\in S$ a conjunctive clause
\begin{equation*}
  \Land_{\epsilon\hearts\rho\in\Gamma\mid \rho\in\typeb}a_{\epsilon\hearts\rho}\land
  \Land_{\epsilon\hearts\rho\in\Gamma\mid\nneg\rho\in\typeb}\neg a_{\hearts\rho}.  
\end{equation*}
We now define a functional
\begin{equation*}
\CE_{\Nodes}\colon \Pow\Nodes\to\Pow\Nodes
\end{equation*}
by taking $\CE_{\Nodes}(S)$ to consist of
\begin{itemize}
\item all non-state sequents $\Gamma\in \Nodes\setminus\States$ such
  that $S\cap\children{\Gamma} \neq\emptyset$ (i.e.\ some
  propositional rule that applies to $\Gamma$ has a conclusion that is
  contained in~$S$), and
\item all states $\Gamma\in \Nodes\cap\States$ such that the one-step pair
  $(\phi_\Gamma,\eta_{S\cap\children{\Gamma}})$ is one-step satisfiable.
\end{itemize}
To propagate unsatisfiability, we introduce a second functional
$\CA_{\Nodes}\colon \Pow\Nodes\to\Pow\Nodes$,  where we take
$\CA_{\Nodes}(S)$ to consist of
\begin{itemize}
\item all non-state sequents $\Gamma\in \Nodes\setminus\States$ such
  that there is a propositional rule applying to $\Gamma$ all whose
  conclusions are in $S$, and
\item all states $\Gamma\in \Nodes\cap\States$ such that
  $\children{\Gamma} \subseteq G$ and the one-step pair
  $(\phi_\Gamma,\eta_{\children{\Gamma}\setminus S})$ is one-step
  unsatisfiable.
\end{itemize}
\noindent Both~$\CE_G$ and~$\CA_G$ are clearly monotone.  We note
additionally that they also depend monotonically on~$G$:
\begin{lemma}\label{lem:functionals-monotone}
  Let $G\subseteq G'\subseteq\Seqs$. Then
  \begin{enumerate}
  \item\label{item:functionals-monotone} $\CE_G(S)\subseteq\CE_{G'}(S)$ and
    $\CA_G(S)\subseteq\CA_{G'}(S)$ for all~$S\in\Pow G$; 
  \item\label{item:fps-monotone} $\nu\CE_G\subseteq\nu\CE_{G'}$ and
    $\mu\CA_G\subseteq\mu\CA_{G'}$.
\end{enumerate}
\end{lemma}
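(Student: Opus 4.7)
The plan is to handle part (1) by direct unfolding of the definitions, and then derive part (2) from part (1) using Knaster--Tarski.

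For part (1), I would observe that the defining conditions of $\CE_G(S)$ and $\CA_G(S)$ depend on $G$ only in very limited ways. In $\CE_G(S)$, a non-state $\Gamma$ is in the set iff $\Gamma\in G\setminus\States$ and $S\cap\children{\Gamma}\neq\emptyset$; a state $\Gamma$ is in the set iff $\Gamma\in G\cap\States$ and $(\phi_\Gamma,\eta_{S\cap\children{\Gamma}})$ is one-step satisfiable. Both conditions reference $G$ only through membership, which is preserved by $G\subseteq G'$, so $\CE_G(S)\subseteq\CE_{G'}(S)$. For $\CA_G(S)$, the non-state case is analogous, and the state case additionally requires $\children{\Gamma}\subseteq G$, but this is precisely the direction of monotonicity we need: it implies $\children{\Gamma}\subseteq G'$, and the one-step unsatisfiability check on $(\phi_\Gamma,\eta_{\children{\Gamma}\setminus S})$ references only $\children{\Gamma}$ and~$S$, not $G$ itself.

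For part (2), the $\nu\CE$ case is immediate by the greatest-postfixpoint characterization: $\nu\CE_G\subseteq\CE_G(\nu\CE_G)\subseteq\CE_{G'}(\nu\CE_G)$, where the second inclusion uses part (1) together with $\nu\CE_G\in\Pow G$. Thus $\nu\CE_G$ is a postfixpoint of $\CE_{G'}$ living in $\Pow G'$, and maximality of $\nu\CE_{G'}$ yields $\nu\CE_G\subseteq\nu\CE_{G'}$.

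The $\mu\CA$ case needs one extra move, because $\mu\CA_{G'}$ need not lie in $\Pow G$ and so cannot be fed to $\CA_G$ directly. Let $S^{*}:=\mu\CA_{G'}\cap G\in\Pow G$. Using part~(1) together with monotonicity of $\CA_{G'}$ in its set argument, we get $\CA_G(S^{*})\subseteq\CA_{G'}(S^{*})\subseteq\CA_{G'}(\mu\CA_{G'})\subseteq\mu\CA_{G'}$; since also $\CA_G(S^{*})\subseteq G$ by definition of $\CA_G$, we conclude $\CA_G(S^{*})\subseteq S^{*}$, so $S^{*}$ is a prefixpoint of $\CA_G$. Minimality of $\mu\CA_G$ then gives $\mu\CA_G\subseteq S^{*}\subseteq\mu\CA_{G'}$. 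I do not expect a genuine obstacle here; the only point requiring care is keeping track of which fixpoint lives in which powerset, which is why the $\mu$-case is routed through the intersection $\mu\CA_{G'}\cap G$ rather than argued symmetrically to the $\nu$-case.
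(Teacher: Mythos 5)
Your proposal is correct and follows essentially the same route as the paper: part (1) by direct inspection of the definitions (noting the role of the side condition $\children{\Gamma}\subseteq G$ for $\CA_G$), the $\nu$-case of part (2) by showing $\nu\CE_G$ is a postfixpoint of $\CE_{G'}$, and the $\mu$-case by showing $\mu\CA_{G'}\cap G$ is a prefixpoint of $\CA_G$. The paper's proof is just a terser version of exactly this argument.
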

\begin{proof}
  Claim~\eqref{item:functionals-monotone} is immediate from the
  definitions (for~$\CA_G$, this hinges on the condition
  $\children{\Gamma}\subseteq G$ for states~$\Gamma$); we show
  Claim~\eqref{item:fps-monotone}. For $\CE_G$, it suffices to show
  that $\nu\CE_G$ is a postfixpoint of~$\CE_{G'}$. Indeed,
  by~\eqref{item:functionals-monotone}, we have
  $\nu\CE_G=\CE_G(\nu\CE_G)\subseteq\CE_{G'}(\nu\CE_G)$. For~$\CA_G$,
  we show that $G\cap\mu\CA_{G'}$ is a prefixpoint of $\CA_G$. Indeed,
  by~\eqref{item:functionals-monotone}, we have
  $\CA_G(\mu\CA_{G'}\cap G)\subseteq\CA_{G'}(\mu\CA_{G'}\cap
  G)\subseteq\CA_{G'}(\mu\CA_{G'})=\mu\CA_{G'}$, and
  $\CA_G(\mu\CA_{G'}\cap G)\subseteq G$ by the definition of~$\CA_G$.
\end{proof}
\begin{rem}\label{rem:non-duality}
  The reader will note that the functionals $\CA_\Nodes$ and
  $\CE_\Nodes$ fail to be mutually dual, as $\CE_\Nodes$ quantifies
  existentially instead of universally over propositional rules. We
  will show that the well-known commutation of the propositional rules
  implies that the more permissive use of existential quantification
  eventually leads to the same answers (see proof of
  Lemma~\ref{lem:no-propagation}.(\ref{item:inv-AE-Gf})); it allows
  for more economy in the generation of new nodes in the global
  caching algorithm, described next.
\end{rem}
\noindent The global caching algorithm maintains, as global variables,
a set $\Nodes$ of sequents with subsets~$E$ and~$A$ of sequents already
decided as satisfiable or unsatisfiable, respectively. 
\begin{alg}\label{alg:global-caching}
  (Decide $\psi$-satisfiability of $\phi_0$ by global caching.)
  \begin{enumerate}
  \item Initialize $\Nodes=\{\Gamma_0\}$ with $\Gamma_0=\{\phi_0,\psi\}$, and
    $E=A=\emptyset$.
  \item (Expand)\label{step:expand} Select a sequent
    $\Gamma\in \Nodes$ that has children that are not in $\Nodes$, and
    add any number of these children to $\Nodes$. If no sequents with
    missing children are found, go to Step~\ref{step:final-prop}
  \item (Propagate)\label{step:prop} Optionally recalculate $E$ as the greatest fixed
    point $\nu S.\,\CE_\Nodes(S\cup E)$, and $A$ as
    $\mu S.\,\CA_G(S\cup A)$. If $\Gamma_0\in E$, return `yes'; if
    $\Gamma_0\in A$, return `no'. 
  \item Go to Step~\ref{step:expand}.
  \item \label{step:final-prop} Recalculate $E$ as
    $\nu S.\,\CE_G(S\cup E)$; return `yes' if $\Gamma_0\in E$, and
    `no' otherwise.
  \end{enumerate}
\end{alg}
\begin{rem}
  As explained at the beginning of the section, the key feature of the
  global caching algorithm is that it potentially avoids generating
  the full exponential-sized set of tableau sequents by detecting
  satisfiability or unsatisfiability on the fly in the intermediate
  optional propagation steps. The non-determinism in the formulation
  of the algorithm can be resolved arbitrarily, i.e.\ we will see that
  any choice (e.g.\ of which sequents to add in the expansion step and
  whether or not to trigger propagation) leads to correct results;
  thus, it affords room for heuristic optimization. Detecting
  \emph{un}satisfiability in Step~\ref{step:prop} requires previous
  generation of all, in principle exponentially many, children of a
  sequent. This is presumably not necessarily prohibitive in practice,
  as the exponential dependence is only in the number of
  \emph{top-level} modalities in a sequent. As an extreme example, if
  we encode the graded modality $\Diamond_0\phi$ as $\sharp(\phi)>0$
  in Presburger modal logic, then the sequent $\{\Diamond_0^n\top\}$
  ($n$ successive diamonds) induces $2^n$ types but has only two
  children, $\{\Diamond_0^{n-1}\top\}$ and
  $\{\neg\Diamond_0^{n-1}\top\}$.
\end{rem}
\noindent We next prove correctness of the algorithm. As a first step,
we show that a sequent can be added to~$E$ (or to~$A$) in the optional
Step~3 of the algorithm only if it will at any rate end up in~$E$ (or
outside~$E$, respectively) in the final step of the algorithm. To this
end, let~$\Gf$ denote the least set of sequents such that
$\Gamma_0 \in \Gf$ and $\Gf$ contains all children of nodes contained
in $\Gf$, i.e.\ $\children{\Gamma}\subseteq\Gf$ for each
$\Gamma\in\Gf$; that is, at the end of a run of the algorithm without
intermediate propagation steps, we have~$G=\Gf$ and
$E=\nu S.\,\CE_{\Gf}(S)$. We then formulate the claim in the following
invariants:

\begin{lemma}\label{lem:no-propagation}
  At any stage throughout a run of Algorithm~\ref{alg:global-caching} we have
  \begin{enumerate}
  \item\label{item:inv-E-G} $E \subseteq \nu S. \CE_G(S)$
  \item\label{item:inv-A-G} $A \subseteq \mu S. \CA_G(S)$
  \item \label{item:inv-E-Gf} $E \subseteq \nu S . \CE_{\Gf}(S)$
  \item\label{item:inv-A-Gf} $A \subseteq \mu S. \CA_{\Gf}(S)$ 
  \item \label{item:inv-AE-Gf} $A\cap\nu S . \CE_{\Gf}(S)=  \mu S. \CA_{\Gf}(S) \cap\nu S . \CE_{\Gf}(S) = \emptyset$.
\end{enumerate}

\end{lemma}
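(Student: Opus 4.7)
The plan is to prove items~(1)--(4) by induction on the execution of Algorithm~\ref{alg:global-caching}, and to treat item~(5) largely as a structural property of $\Gf$ that does not depend on the current algorithm state. Initially $E=A=\emptyset$, so (1)--(4) hold vacuously. For the inductive step, I distinguish the two kinds of transitions.

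\emph{Expansion} enlarges $G$ to some $G'\supseteq G$ while leaving $E,A$ unchanged, so (1) and~(2) follow at once from Lemma~\ref{lem:functionals-monotone}(\ref{item:fps-monotone}), which gives $\nu\CE_G\subseteq\nu\CE_{G'}$ and $\mu\CA_G\subseteq\mu\CA_{G'}$; and (3)--(5) refer only to $\Gf$. \emph{Propagation} keeps $G$ fixed and resets $E$ to $\nu F$ for $F(S)=\CE_G(S\cup E_{\mathrm{old}})$. By the inductive hypothesis for~(1), $E_{\mathrm{old}}$ is a postfixpoint of $\CE_G$, hence of~$F$, so $E_{\mathrm{old}}\subseteq\nu F$; then $\nu F=F(\nu F)=\CE_G(\nu F\cup E_{\mathrm{old}})=\CE_G(\nu F)$ is already a fixpoint of $\CE_G$, hence $\nu F\subseteq\nu\CE_G$. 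A dual argument using the least-prefixpoint characterization of~$\mu$ handles~(2) for $A$. Invariants~(3) and~(4) then follow from~(1), (2) using $G\subseteq\Gf$ and again Lemma~\ref{lem:functionals-monotone}(\ref{item:fps-monotone}).

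For item~(5), the first equality is a consequence of the second together with~(4) (both intersections then being empty), so it suffices to prove $\mu\CA_\Gf\cap\nu\CE_\Gf=\emptyset$. I plan to do this by fixpoint induction over the Kleene stages $\emptyset=S_0\subseteq S_1\subseteq\dots$ approximating $\mu\CA_\Gf$, showing $S_n\cap\nu\CE_\Gf=\emptyset$ for every~$n$. For a newly added \emph{state} $\Gamma\in S_{n+1}=\CA_\Gf(S_n)$, the pair $(\phi_\Gamma,\eta_{\children{\Gamma}\setminus S_n})$ is one-step unsatisfiable by definition of $\CA_\Gf$, whereas $\Gamma\in\nu\CE_\Gf$ would deliver one-step satisfiability of $(\phi_\Gamma,\eta_{\children{\Gamma}\cap\nu\CE_\Gf})$. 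The inductive hypothesis gives $\children{\Gamma}\cap\nu\CE_\Gf\subseteq\children{\Gamma}\setminus S_n$, so the former DNF has fewer disjuncts than the latter; since one-step satisfiability of $(\phi,\eta)$ is preserved when disjuncts are added to $\eta$ (the constraint $\tau(\eta)=X$ only gets easier), this is a direct contradiction.

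The genuine obstacle is the \emph{non-state} case: $\Gamma\in\CA_\Gf(S_n)$ exhibits \emph{some} applicable propositional rule $R_a$ whose conclusions all lie in~$S_n$, while $\Gamma\in\nu\CE_\Gf$ only guarantees \emph{some} child of $\Gamma$—possibly the conclusion of a different rule $R_e$—in $\nu\CE_\Gf$, and $\CA_\Gf$ and $\CE_\Gf$ are not dual on non-states (Remark~\ref{rem:non-duality}). I plan to resolve this by invoking commutation of the propositional rules: $R_a$ and $R_e$ act on distinct principal formulae, so $R_a$ still applies to the $R_e$-child $\Gamma'\in\nu\CE_\Gf$, and by confluence its $R_a$-conclusions coincide with the $R_e$-children of the $R_a$-conclusions on $\Gamma$. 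Iterating the $\CA_\Gf$-membership argument through $R_a$ for each such $R_e$-child places all these common grand-children into an earlier stage $S_m\subseteq S_n$; but $\Gamma'\in\nu\CE_\Gf=\CE_\Gf(\nu\CE_\Gf)$ requires one of them to lie in $\nu\CE_\Gf$, contradicting the inductive hypothesis. This diagram chase, justified by the well-known commutation of the propositional rules, is the main technical hurdle and is precisely what Remark~\ref{rem:non-duality} anticipates.
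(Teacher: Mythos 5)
Items (1)--(4) of your proposal follow the paper's route almost exactly (initialization, monotonicity in~$G$ for expansion via Lemma~\ref{lem:functionals-monotone}, and a fixpoint-law argument for propagation), but there is one slip in the propagation step: from the invariant $E\subseteq\nu S.\,\CE_G(S)$ you cannot conclude that $E$ is a postfixpoint of $\CE_G$ --- an arbitrary subset of a greatest fixpoint need not be a postfixpoint. The conclusion you actually need, namely $E\subseteq\nu S.\,\CE_G(S\cup E)$, does hold, but via the observation that $\nu S.\,\CE_G(S)$ itself is a postfixpoint of $S\mapsto\CE_G(S\cup E)$ (which is in substance the paper's Lemma~\ref{lem:fp-laws}); your argument is easily repaired this way. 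Your state case for item~(5) is correct and amounts to the paper's remark that $\CE_{\Gf}$ and the dual of $\CA_{\Gf}$ agree on states.

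The genuine gap is in the non-state case of item~(5). Your induction on the Kleene stages $S_0\subseteq S_1\subseteq\dots$ of $\mu S.\,\CA_{\Gf}(S)$ does not close, for two concrete reasons. First, from $\Delta_i\in S_n$ (an $R_a$-conclusion of $\Gamma$) you cannot place its $R_e$-conclusions into any stage $S_m$: membership of $\Delta_i$ in $\CA_{\Gf}(S_{n-1})$ only guarantees that \emph{some} propositional rule applicable to $\Delta_i$ has all conclusions in $S_{n-1}$, and that rule need not be $R_e$; so the claim that the ``common grand-children'' land in an earlier stage is unjustified. Second, even if all $R_a$-conclusions of $\Gamma'$ were known to avoid $\nu S.\,\CE_{\Gf}(S)$, this would not contradict $\Gamma'\in\CE_{\Gf}(\nu S.\,\CE_{\Gf}(S))$, because $\CE_{\Gf}$ only requires that \emph{some} rule applied to $\Gamma'$ have a conclusion in the fixpoint --- not the rule $R_a$ specifically. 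The root cause is that the invariant $S_n\cap\nu S.\,\CE_{\Gf}(S)=\emptyset$ is too weak to sustain itself across the existential/universal mismatch of Remark~\ref{rem:non-duality}. The paper resolves this by proving a \emph{stronger} statement: $\nu S.\,\CE_{\Gf}(S)$ is a postfixpoint of the dual operator $\overline{\CA_{\Gf}}$, i.e.\ for every $\Gamma$ in it, \emph{every} applicable propositional rule has a conclusion in $\nu S.\,\CE_{\Gf}(S)$; this is shown by induction on the size of~$\Gamma$, with commutation used to combine the existential witness at~$\Gamma$ with the inductively obtained universal property at its chosen child. Since $\nu S.\,\overline{\CA_{\Gf}}(S)$ is the complement of $\mu S.\,\CA_{\Gf}(S)$, disjointness follows. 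Your diagram chase needs to be reorganized along these lines; as stated it does not yield a contradiction.
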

\noindent In the proof, we use the following simple fixpoint laws (for
which no novelty is claimed):
\begin{lemma}\label{lem:fp-laws}
  Let $X$ be a set, and let $F:\Pow X\to\Pow X$ be monotone w.r.t.\
  set inclusion. Then
  \begin{equation*}
    \nu S.\,F(S\cup \nu S.\,F(S))=\nu S.\,F(S)\quad\text{and}\quad
    \mu S.\,F(S\cup \mu S.\,F(S))=\mu S.\,F(S).
  \end{equation*}
\end{lemma}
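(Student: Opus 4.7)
The plan is to prove each of the two identities by a pair of inclusions, leveraging the characterisations of $\nu F$ as both the greatest fixpoint \emph{and} the greatest postfixpoint of $F$, and dually of $\mu F$ as both the least fixpoint \emph{and} the least prefixpoint.

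For the $\nu$-identity, abbreviate $Y := \nu S.\,F(S)$ and let $G\colon \Pow X\to\Pow X$ be the monotone operator $G(S) := F(S\cup Y)$. Write $Z := \nu G$. To show $Y\subseteq Z$, I would note that $G(Y) = F(Y\cup Y) = F(Y) = Y$, so $Y$ is a (post)fixpoint of $G$, hence $Y\subseteq \nu G = Z$. For the reverse inclusion $Z\subseteq Y$, the idea is to show that $Z\cup Y$ is a postfixpoint of $F$: by monotonicity, $Y = F(Y)\subseteq F(Z\cup Y)$, while $Z = G(Z) = F(Z\cup Y)$ directly; together this gives $Z\cup Y\subseteq F(Z\cup Y)$. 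Since $Y$ is the greatest postfixpoint of $F$, we conclude $Z\cup Y\subseteq Y$, in particular $Z\subseteq Y$.

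For the $\mu$-identity, the argument is formally dual. Abbreviate $Y := \mu S.\,F(S)$ and let $H(S) := F(S\cup Y)$, with $Z := \mu H$. The inclusion $Z\subseteq Y$ follows because $H(Y) = F(Y) = Y$ makes $Y$ a (pre)fixpoint of $H$, so $\mu H\subseteq Y$. For $Y\subseteq Z$, the trick is to observe that $Z$ itself is already a prefixpoint of $F$: by monotonicity, $F(Z)\subseteq F(Z\cup Y) = H(Z) = Z$. Since $Y$ is the least prefixpoint of $F$, this yields $Y\subseteq Z$.

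There is no substantive obstacle here; the proof is a routine fixpoint calculation. The only point requiring care is making sure to invoke the right extremal characterisation at each step (postfixpoint for $\nu$, prefixpoint for $\mu$), and noticing that in the $\mu$-case it is $Z$ alone, rather than $Z\cup Y$, that one shows to be a prefixpoint of $F$.
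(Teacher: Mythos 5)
Your proof is correct and follows essentially the same route as the paper: both identities are established by a pair of inclusions using the Knaster--Tarski extremal characterisations, with the easy inclusion coming from $\nu F$ (resp.\ $\mu F$) being a fixpoint of the modified operator. The only cosmetic difference is that in the $\nu$-case you verify that $Z\cup\nu F$ is a postfixpoint of $F$ directly, whereas the paper first uses the inclusion $\nu F\subseteq Z$ to collapse $Z\cup\nu F$ to $Z$ and then observes that $Z$ is a fixpoint of $F$.
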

\begin{proof}
  In both claims, `$\supseteq$' is trivial; we show `$\subseteq$'.
  For $\nu$, we show (already using `$\supseteq$') that the left-hand side is
  a fixpoint of~$F$:
  \begin{align*}
    & \nu S. F(S \cup \nu S. F(S)) \\
    & =F((\nu S. F(S \cup \nu S. F(S)))\cup (\nu S. F(S)))&& \by{fixpoint unfolding}\\
    & =F(\nu S. F(S \cup \nu S. F(S)))&& \by{$\nu S.\,F(S\cup \nu S.\,F(S))\supseteq\nu S.\,F(S)$}.
  \end{align*}
  For~$\mu$, we show that the right-hand side is a fixpoint of
  $S\mapsto F(S \cup \mu S.F(S))$:
  \begin{equation*}
    F( \mu S.F(S) \cup \mu S.F(S))=F(\mu S.F(S))=\mu S.F(S). \qedhere
  \end{equation*}
\end{proof}
\begin{proof}[Proof (Lemma~\ref{lem:no-propagation})]
  \emph{(\ref{item:inv-E-G}) and~(\ref{item:inv-A-G}):}\/ Clearly, these
  invariants hold \emph{initially}, as~$E$ and~$A$ are initialized
  to~$\emptyset$.

  In \emph{expansion steps}, the invariants are preserved because by
  Lemma~\ref{lem:functionals-monotone}, $\nu S. \CE_G(S)$ and
  $\mu S. \CA_G(S)$ depend monotonically on~$G$.

  Finally, in a \emph{propagation step}, we change $E$ into
  \begin{equation*}
    E' = \nu S. \CE_G(S \cup E) \subseteq \nu S. \CE_G(S \cup \nu S. \CE_G(S)) = \nu S .\CE_G(S),
  \end{equation*}
  where the inclusion is by the invariant for~$E$ and the equality is
  by Lemma~\ref{lem:fp-laws}. Thus, the invariant~(\ref{item:inv-E-G})
  is preserved. Similarly,~$A$ is changed into
  \begin{equation*}
    A' = \mu S. \CA_G(S \cup A) \subseteq \mu S. \CA_G(S \cup \mu S.\CA_G(S)) = \mu S.\CA_G(S)
  \end{equation*}
  where the equality is by Lemma~\ref{lem:fp-laws}, preserving
  invariant~(\ref{item:inv-A-G}).

  \emph{(\ref{item:inv-E-Gf}) and~(\ref{item:inv-A-Gf}):}\/ Immediate
  from (\ref{item:inv-E-G}) and~(\ref{item:inv-A-G}) by
  Lemma~\ref{lem:functionals-monotone}, since $G\subseteq\Gf$ at all
  stages.

  \emph{(\ref{item:inv-AE-Gf}):}\/ Let $\overline{\CA_{\Gf}}$ denote
  the dual of~$\CA_\Gf$, i.e.\
  $\overline{\CA_{\Gf}}(S)=\Gf\setminus\CA_\Gf(\Gf\setminus S)$; that
  is, $\overline{\CA_{\Gf}}$ is defined like~$\CE_\Gf$ except that
  $\overline{\CA_{\Gf}}(S)$ contains a non-state sequent
  $\Gamma\in\Gf\setminus\States$ if \emph{every} propositional rule
  that applies to $\Gamma$ has a conclusion that is contained in~$S$
  (cf.\ Remark~\ref{rem:non-duality}). Then
  $\nu S.\,\overline{\CA_{\Gf}}(S)$ is the complement of
  $\mu S.\,\CA_\Gf(S)$, so by~(\ref{item:inv-A-Gf}) it suffices to
  show $\nu S.\,\CE_\Gf(S)\subseteq\nu
  S.\,\overline{\CA_{\Gf}}(S)$. To this end, we show that
  $\nu S.\,\CE_\Gf(S)$ is a postfixpoint of $\overline{\CA_{\Gf}}$. So
  let $\Gamma\in\nu S.\,\CE_\Gf(S)=\CE_\Gf(\nu
  S.\,\CE_\Gf(S))$. If~$\Gamma$ is a state, then it follows
  immediately that
  $\Gamma\in\overline{\CA_{\Gf}}(\nu S.\,\CE_\Gf(S))$, since the
  definitions of $\CE_\Gf$ and $\overline{\CA_{\Gf}}$ agree on
  containment of states (note that by definition of~$\Gf$,
  $\children{\Gamma}\subseteq\Gf$ for
  every~$\Gamma\in\Gf$). Otherwise, we proceed by induction on the
  size of~$\Gamma$. By definition of~$\CE_\Gf$, there exists a
  conclusion~$\Gamma'\in\nu S.\,\CE_\Gf(S)$ of a propositional
  rule~$R$ applied to~$\Gamma$. By induction,
  $\Gamma'\in\overline{\CA_{\Gf}}(\nu S.\,\CE_\Gf(S))$. Now
  let~$\Delta$ be the set of conclusions of a propositional rule~$R'$
  applied to~$\Gamma$, w.l.o.g.\ distinct from~$R$. Since the
  propositional rules commute, there is a rule application
  to~$\Gamma'$ (corresponding to a postponed application of~$R'$) that
  has a conclusion~$\Gamma''\in\nu S.\,\CE_\Gf(S)$ such
  that~$\Gamma''$ is, via postponed application of~$R$, a conclusion
  of a propositional rule applied to some $\Gamma'''\in\Delta$. Then,
  $\Gamma'''\in\CE_\Gf(\nu S.\,\CE_\Gf(S))=\nu S.\,\CE_\Gf(S)$ by
  definition of~$\CE_\Gf$, showing
  $\Gamma\in\overline{\CA_{\Gf}}(\nu S.\,\CE_\Gf(S))$ as required.
\end{proof}
\noindent Invariants~(\ref{item:inv-E-Gf}) and~(\ref{item:inv-AE-Gf})
in Lemma~\ref{lem:no-propagation} imply that once we prove correctness
for runs of the algorithm that perform propagation only in the last
step~5 (that is, once all children have been added), correctness of
the general algorithm follows. That is, it remains to show that
$\nu S.\,\CE_\Gf(S)$ consists precisely of the satisfiable sequents
in~$\Gf$. We split this claim into two inclusions respectively
corresponding to soundness and completeness in the same way as for the
type elimination algorithm (Section~\ref{sec:type-elim}). The
following statement is analogous to Lemma~\ref{lem:ex-truth}.

\begin{lemma}\label{lem:caching-completeness}
  Let $E$ be a postfixpoint of $\CE_{\Gf}$ and denote by
  $E_s = E \cap \States$ the collection of states contained in
  $E$. Then there is a coalgebra $C=(E_s,\gamma)$ such that
  $E_s \cap \{ \Gamma \mid \Gamma \PLentails \phi \}\subseteq
  \Sem{\phi}_{C}$ for all $\phi \in \Sigma$ (recall that $\PLentails$
  denotes propositional entailment, see
  Definition~\ref{def:prop}). Consequently, whenever $\Gamma \in E$
  and $\phi \in\Gamma$, then~$\phi$ is $\psi$-satisfiable.
\end{lemma}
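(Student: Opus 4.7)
The proof parallels that of Lemma~\ref{lem:ex-truth}, but requires additional care because sequents, unlike types, need not decide every formula in $\Sigma$, so we only establish an inclusion rather than an equality. The plan is first to construct the transition map $\gamma$ on $E_s$ using the one-step models supplied by the postfixpoint property, then to prove the main inclusion by induction on $\phi\in\Sigma$; the ``consequently'' clause will then follow by reducing arbitrary sequents in $E$ to states in $E_s$.

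To define $\gamma$, fix $\Gamma\in E_s$. Since $E\subseteq\CE_\Gf(E)$, the one-step pair $(\phi_\Gamma,\eta_{E\cap\children{\Gamma}})$ is one-step satisfiable, and by Lemma~\ref{lem:one-step-models} we may take a one-step model $(X_\Gamma,\tau_\Gamma,t_\Gamma)$ whose carrier $X_\Gamma$ is the set of valuations of $V_\Gamma$ satisfying $\eta_{E\cap\children{\Gamma}}$. Since a child of a state fully determines either $\rho$ or $\nneg\rho$ for every modal literal $\epsilon\hearts\rho\in\Gamma$, each $\kappa\in X_\Gamma$ matches a unique child $\Delta_\kappa\in E\cap\children{\Gamma}$. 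Moreover, every non-state sequent in $E$ has a child in $E$ (postfixpoint), and every propositional-rule conclusion propositionally entails its premise, so iterating along the well-founded measure of total formula size yields, for each non-state $\Delta\in E$, a state descendant $\Delta^*\in E_s$ with $\Delta^*\PLentails\Delta$. Define $f_\Gamma\colon X_\Gamma\to E_s$ by mapping $\kappa$ to $\Delta_\kappa^*$ (or to $\Delta_\kappa$ itself if it is already a state), and put $\gamma(\Gamma)=Tf_\Gamma(t_\Gamma)$.

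The induction then proceeds on the modal depth of $\phi$. The crucial ingredient at each depth is a modal-literal subclaim: for every literal $\ell=\epsilon\hearts\rho\in\Gamma$, $\Gamma\models_C\ell$. This follows by naturality of $\Sem{\hearts}$ along $f_\Gamma$ (as in the proof of Lemma~\ref{lem:ex-truth}), provided that
\begin{equation*}
\tau_\Gamma(a_\ell)=f_\Gamma^{-1}[\Sem{\rho}_C].
\end{equation*}
To prove this equality, note that $\kappa\in\tau_\Gamma(a_\ell)$ amounts to $\rho\in\Delta_\kappa$, so $f_\Gamma(\kappa)\PLentails\rho$; conversely, $\kappa\notin\tau_\Gamma(a_\ell)$ amounts to $\nneg\rho\in\Delta_\kappa$, hence $f_\Gamma(\kappa)\PLentails\nneg\rho$. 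Since both $\rho$ and $\nneg\rho$ lie in $\Sigma$ by closure under $\nneg$, and both have strictly smaller modal depth than $\ell$, applying the inductive hypothesis in both polarities yields the desired equality. For a general $\phi\in\Sigma$ at the current depth, the modal-literal subclaim applied to all literals in $\Gamma$ whose modal atoms appear in $\phi$ shows that the valuation $\hearts\rho\mapsto(\Gamma\models_C\hearts\rho)$ satisfies these literals; propositional compositionality of $\models_C$ then converts $\Gamma\PLentails\phi$ into $\Gamma\models_C\phi$ (values on atoms not appearing in $\phi$ can be freely adjusted to satisfy the remaining literals of $\Gamma$ without affecting $\phi$).

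For the ``consequently'' clause, note that every sequent in $\Gf$ contains $\psi$ by construction of children, so $\Gamma\PLentails\psi$, and hence $\Gamma\models_C\psi$ by the main claim, for each $\Gamma\in E_s$; thus $C=(E_s,\gamma)$ is a $\psi$-model. Given $\Gamma\in E$ and $\phi\in\Gamma$, either $\Gamma\in E_s$ and we conclude $\Gamma\models_C\phi$ directly from $\Gamma\PLentails\phi$, or $\Gamma$ is a non-state whose state descendant $\Gamma^*\in E_s$ satisfies $\Gamma^*\PLentails\Gamma\PLentails\phi$ and hence $\Gamma^*\models_C\phi$; either way, $\phi$ is $\psi$-satisfiable. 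The main obstacle compared to Lemma~\ref{lem:ex-truth} is the loss of the two-sided characterization $\tau(a_{\hearts\rho})=f^{-1}[\Sem{\rho}_C]$ from a single-polarity inductive hypothesis; we recover it by invoking the hypothesis on both $\rho$ and $\nneg\rho$, which is legitimate precisely because $\Sigma$ is closed under normalized negation.
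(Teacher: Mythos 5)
Your proposal is correct and follows essentially the same route as the paper's proof: construct $\gamma$ from one-step models of $(\phi_\Gamma,\eta_{E\cap\children{\Gamma}})$, extend the child-valued map to land in states of $E_s$ via iterated propositional rules, and prove the truth inclusion by induction, recovering the two-sided equality $\tau(a_{\epsilon\hearts\rho})=\bar f^{-1}[\Sem{\rho}_C]$ by applying the inductive hypothesis to both $\rho$ and $\nneg\rho$ before invoking naturality. The only cosmetic difference is that you organize the induction explicitly by modal depth where the paper phrases it structurally and reduces to literals of $\Gamma$ by soundness of propositional reasoning; the substance is identical.
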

\begin{proof}
  The proof proceeds similarly to the one of Lemma~\ref{lem:ex-truth}:
  In order to define a suitable $\gamma$, let $\Gamma \in E_s$. By the
  definition of $\CE_{\Gf}$, the one-step pair
  $(\phi_\Gamma,\eta_{E \cap \children{\Gamma}})$ is satisfiable. Let
  $M=(X,\tau,t)$ be a one-step model satisfying
  $(\phi_\Gamma,\eta_{E \cap \children{\Gamma}})$. By the definition of
  $\eta_{E \cap \children{\Gamma}}$, we can then define a function
  $f\colon X \to E \cap \children{\Gamma}$ such that for all $x \in X$ and all
  $\epsilon\hearts \rho \in \Gamma$ we have $\rho\in f(x)$ iff
  $x \in \tau(a_{\epsilon\hearts \rho})$ (noting that by the
  definition of children of~$\Gamma$, $f(x)$ contains either~$\rho$
  or~$\neg\rho$).  Now note that since~$E$ is a postfixpoint of
  $\CE_{\Gf}$, every non-state sequent $\Delta\in E$ has a child
  in~$E$ that is a conclusion of a propositional rule applied
  to~$\Delta$, and hence propositionally entails~$\Land\Delta$. Since
  every propositional rule removes a propositional connective, this
  implies that we eventually reach a state in~$E_s$ from~$\Delta$
  along the child relation; that is, for every $\Delta\in E$ there is
  a state $\Delta'\in E_s$ such that $\Delta'$ propositionally
  entails~$\Land\Delta$.  We can thus prolong $f$ to a function
  $\bar f\colon X \to E_s$ such that
  \begin{equation}\label{eq:barf}
    \bar f (x) \PLentails \rho \quad \mbox{iff} \quad x \in \tau(a_{\epsilon\hearts \rho})
  \end{equation}
  for all $\epsilon\hearts \rho \in \Gamma$ and all $x \in X$.  We now
  define $\gamma(\Gamma) \mathrel{:=} T {\bar f} (t)$, obtaining
  $\gamma\colon E_s \to T E_s$. We will show that
  \begin{equation}
    \Gamma \PLentails \chi \qquad \text{implies} \qquad \Gamma \in
    \Sem{\chi}_C\label{eq:truth}
  \end{equation}
  for all $\chi \in \Sigma$ and all $\Gamma \in E_s$, which implies
  the first claim of the lemma. We proceed by induction on~$\chi$; by
  soundness of propositional reasoning, we immediately reduce to the
  case where~$\chi\in\Gamma$, in which case $\chi$ has the form
  $\chi=\epsilon\hearts\rho$ since~$\Gamma$ is a state. We continue to
  use the data~$M=(X,t,\tau)$, $f$, $\bar f$ featuring in the
  above construction of $\gamma(\Gamma)=T\bar f(t)$. Note again that
  for every $x\in X$, we have by the defining property of children
  of~$\Gamma$ that either $f(x)\PLentails\rho$ or
  $f(x)\PLentails\neg\rho$; since the conclusions of propositional
  rules are propositionally stronger than the premisses, it follows
  that the same holds for $\bar f(x)$. The inductive hypothesis
  therefore implies that $\bar f(x)\in\Sem{\rho}_C$ iff
  $\bar f(x)\PLentails\rho$; combining this with \eqref{eq:barf}, we
  obtain $f^{-1}[\Sem{\rho}_C]=\tau(a_{\epsilon\hearts\rho})$. To
  simplify notation, assume that $\epsilon=1$  (the case
  where~$\epsilon=-1$ being entirely analogous). We then have to
  show $\gamma(\Gamma)\in\Sem{\hearts}_{E_s}(\Sem{\rho}_C)$, which by
  naturality of~$\Sem{\hearts}$ is equivalent to
  $t\in\Sem{\hearts}_X(f^{-1}[\Sem{\rho}_C])=\Sem{\hearts}_X(\tau(a_{\hearts\rho}))$,
  where the equality is by the preceding calculation. But
  $t\in\Sem{\hearts}_X(\tau(a_{\hearts\rho}))$ follows from
  $M\models(\phi_\Gamma,\eta_{E\cap \children{\Gamma}})$ and $\hearts\rho\in\Gamma$ by
  the definition of~$\phi_\Gamma$.

  The second claim of the lemma is now immediate for states
  $\Gamma \in E_s$.  As indicated above, all other sequents
  $\Gamma \in E\setminus E_s$ can be transformed into some
  $\Gamma' \in E_s$ using the propositional rules, in which
  case~$\Gamma'$ propositionally entails all $\rho\in\Gamma$; thus,
  satisfiability of $\Gamma'$ implies satisfiability of
  all~$\rho\in\Gamma$.
\end{proof}
\noindent Lemma~\ref{lem:caching-completeness} ensures completeness of
the algorithm, i.e.\ whenever the algorithm terminates with 'yes',
then $\phi_0$ is $\psi$-satisfiable. For soundness (i.e.\ the
converse implication, the algorithm answers `yes' if~$\phi_0$ is
$\psi$-satisfiable) we proceed similarly as for Lemma
\ref{lem:realization}:

\begin{lemma}\label{lem:caching-soundness} 
The set of $\psi$-satisfiable sequents contained in $\Gf$ is a
post-fixpoint of $\CE_{\Gf}$.
\end{lemma}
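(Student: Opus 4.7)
The plan is to adapt the argument from Lemma~\ref{lem:realization} (which handled the type elimination functional~$\CE$) to the sequent-based functional~$\CE_{\Gf}$, splitting on whether the sequent under consideration is a state or not. Let $R$ be the set of $\psi$-satisfiable sequents in~$\Gf$; I must show $R\subseteq\CE_{\Gf}(R)$. Fix $\Gamma\in R$, witnessed by $x\models_C\Gamma$ for some state~$x$ in a $\psi$-model $C=(X,\gamma)$.

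If $\Gamma$ is a \emph{non-state} sequent, the argument is a short appeal to soundness of the propositional rules: any propositional rule applied to~$\Gamma$ has at least one conclusion that is still satisfied at~$x$ (the only nontrivial case being the branching rule for negated conjunctions). That conclusion is a child of~$\Gamma$, hence lies in~$\Gf$ (since $\Gamma\in\Gf$ and $\Gf$ is closed under children), and is clearly $\psi$-satisfiable, so it belongs to $R\cap\children{\Gamma}$, giving $\Gamma\in\CE_{\Gf}(R)$.

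The main case is when $\Gamma$ is a \emph{state}; here I need to exhibit a one-step model of $(\phi_\Gamma,\eta_{R\cap\children{\Gamma}})$. In analogy with Lemma~\ref{lem:realization}, I take $M=(X,\tau,\gamma(x))$ with $\tau(a_{\epsilon\hearts\rho})=\Sem{\rho}_C$. The modal part $\gamma(x)\in\tau(\phi_\Gamma)$ follows directly from $x\models_C\epsilon\hearts\rho$ for each modal literal $\epsilon\hearts\rho\in\Gamma$ (together with the definition of~$\phi_\Gamma$ and of~$\tau$). For the propositional part, given $y\in X$, I form the sequent $\Delta$ consisting of~$\psi$ together with, for each $\epsilon\hearts\rho\in\Gamma$, either~$\rho$ or~$\neg\rho$ according to whether $y\in\Sem{\rho}_C$; then $\Delta\in\children{\Gamma}\subseteq\Gf$, and $\Delta$ is $\psi$-satisfied at~$y$, so $\Delta\in R\cap\children{\Gamma}$. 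By construction of~$\eta_{R\cap\children{\Gamma}}$, the conjunctive clause contributed by~$\Delta$ is satisfied at~$y$, so $y\in\tau(\eta_{R\cap\children{\Gamma}})$.

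The subtle point — and the one I expect to need most care — is keeping the indexing on propositional variables straight: since the $\Gf$-version of the construction indexes variables by $\epsilon\hearts\rho$ (not $\hearts\rho$), one must be careful that $\phi_\Gamma$ is clean and that the conjunctive clauses in $\eta_{R\cap\children{\Gamma}}$ use exactly the variables actually occurring in~$\phi_\Gamma$, so that the membership $y\in\tau(\Delta\text{-clause})$ really yields $y\in\tau(\eta_{R\cap\children{\Gamma}})$. Once this bookkeeping is in place, both cases conclude that $\Gamma\in\CE_{\Gf}(R)$, completing the proof.
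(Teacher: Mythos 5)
Your proposal is correct and follows essentially the same route as the paper's proof: the same case split on states versus non-states, the same appeal to soundness of the propositional rules (using that $\Gf$ is closed under children) in the non-state case, and in the state case the same one-step model $(X,\tau,\gamma(x))$ built from a satisfying $\psi$-model as in Lemma~\ref{lem:realization}, with the child $\Delta$ determined by which $\rho$ hold at each $y\in X$ witnessing $y\in\tau(\eta_{R\cap\children{\Gamma}})$. The indexing caveat you flag is indeed just bookkeeping and does not affect the argument.
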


\begin{proof}
  Let $S$ be the set of $\psi$-satisfiable sequents in $G_f$. We have
  to show that $S \subseteq \CE_{\Gf}(S)$; so let $\Gamma \in
  S$. If~$\Gamma$ is not a state, then to show $\Gamma\in\CE_{\Gf}(S)$
  we have to check that some propositional rule that applies
  to~$\Gamma$ has a $\psi$-satisfiable conclusion that is moreover
  contained in $\Gf$; this is easily verified by inspection of the
  rules, noting that all children of~$\Gamma$ are in~$\Gf$. Now
  suppose that~$\Gamma$ is a state; we then have to show that the
  one-step pair $(\phi_\Gamma, \eta_{S \cap \children{\Gamma}})$ is
  one-step satisfiable. Let~$x$ be a state in a $\psi$-model
  $C=(X,\gamma)$ such that $x\models_C\Gamma$. We construct a one-step
  model of $(\phi_\Gamma, \eta_{S \cap \children{\Gamma}})$ from~$C$
  in the same way as in the proof of Lemma~\ref{lem:realization}. The
  only point to note additionally is that for every $y\in X$, we have
  some $\Delta\in S\cap\children{\Gamma}$ such that
  $y\models_C\Delta$, namely
  $\Delta=\{\epsilon\rho\mid\epsilon'\hearts\rho\in\Gamma,y\models_C\epsilon\rho\}$
  (where $\epsilon$ and $\epsilon'$ range over $\{-1,1\}$).
\end{proof}
\noindent Summing up, we have
\begin{theorem}\label{thm:global-cache}
  If the strict one-step satisfiability problem of~$\Lambda$ is in
  \ExpTime, then the global caching algorithm decides satisfiability
  under global assumptions in exponential time.
\end{theorem}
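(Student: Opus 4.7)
My plan is to combine the three lemmas already established — Lemmas~\ref{lem:no-propagation}, \ref{lem:caching-completeness}, and~\ref{lem:caching-soundness} — with a routine resource count to deduce both correctness and the exponential time bound. For correctness, I would perform a case analysis on the exit point of the algorithm. If the algorithm answers `yes' (in either Step~\ref{step:prop} or Step~\ref{step:final-prop}), then $\Gamma_0\in E$; invariant~(\ref{item:inv-E-Gf}) of Lemma~\ref{lem:no-propagation} gives $E\subseteq\nu S.\,\CE_\Gf(S)$, and applying Lemma~\ref{lem:caching-completeness} to the postfixpoint $\nu S.\,\CE_\Gf(S)$ shows that $\phi_0\in\Gamma_0$ is $\psi$-satisfiable. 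If `no' is returned in the intermediate Step~\ref{step:prop}, then $\Gamma_0\in A$, so invariant~(\ref{item:inv-AE-Gf}) yields $\Gamma_0\notin\nu S.\,\CE_\Gf(S)$; by Lemma~\ref{lem:caching-soundness} the set of $\psi$-satisfiable sequents in~$\Gf$ is a postfixpoint of~$\CE_\Gf$ and hence contained in the greatest one, so $\Gamma_0$ (and therefore $\phi_0$) is not $\psi$-satisfiable. If `no' is returned in the final Step~\ref{step:final-prop}, then $G=\Gf$ at that point, and the recomputed $E=\nu S.\,\CE_\Gf(S\cup E_{\mathrm{old}})$ coincides with $\nu S.\,\CE_\Gf(S)$ by Lemma~\ref{lem:fp-laws} combined with invariant~(\ref{item:inv-E-Gf}) applied to~$E_{\mathrm{old}}$; one concludes as in the previous case.

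For the time bound, note that $\Gf\subseteq\Pow\Sigma$ has cardinality at most~$2^{|\Sigma|}$, i.e., exponential in the input. Each expansion step strictly enlarges~$G$, so the main loop performs at most exponentially many iterations. Each fixpoint recomputation in a propagation step is a monotone iteration on a subset lattice of exponential size and hence converges in exponentially many steps; within an iteration, the test for membership in $\CE_G(S)$ or $\CA_G(S)$ is polynomial on non-state sequents, and on a state~$\Gamma$ it reduces to a single strict one-step satisfiability check on a pair $(\phi_\Gamma,\eta_{\ldots})$ whose modal component~$\phi_\Gamma$ has polynomial size in the input. The assumed \ExpTime algorithm for strict one-step satisfiability dispatches these checks in exponential time, with the bit-vector component~$\eta_{\ldots}$, though of exponential size, residing on a random-access input tape per Definition~\ref{def:one-step}. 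Multiplied through, the overall running time is exponential.

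The main obstacle is not a single difficult step but rather ensuring that the intermediate, optional propagation in Step~\ref{step:prop} cannot commit to a wrong answer before~$G$ has grown to all of~$\Gf$. Exactly this is the content of Lemma~\ref{lem:no-propagation}, whose most delicate invariant is~(\ref{item:inv-AE-Gf}); its justification rests on the commutation of the propositional rules (see Remark~\ref{rem:non-duality}). With those invariants in hand, the case analysis above goes through uniformly, and the complexity argument is a direct generalization of Lemma~\ref{lem:type-elim-time}.
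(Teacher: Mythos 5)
Your proposal is correct and takes essentially the same route as the paper's proof: correctness via Lemmas~\ref{lem:caching-soundness} and~\ref{lem:caching-completeness} together with the invariants of Lemma~\ref{lem:no-propagation} that reduce everything to the fixpoints over~$\Gf$, and the run-time bound by counting exponentially many sequents and fixpoint iterations with each iteration dispatched by the assumed \ExpTime strict one-step satisfiability check (cf.\ Lemma~\ref{lem:type-elim-time}). You merely spell out explicitly the case analysis on the algorithm's exit points (including the observation that $G=\Gf$ and $E=\nu S.\,\CE_\Gf(S)$ in Step~\ref{step:final-prop}) that the paper's terse correctness sentence leaves implicit.
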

\begin{proof}
  Correctness is by Lemma~\ref{lem:caching-soundness} and
  Lemma~\ref{lem:caching-completeness}, taking into account the
  reduction to runs without intermediate propagation according to
  Lemma~\ref{lem:no-propagation}. It remains to analyse run time; this
  point is similar as in Lemma~\ref{lem:type-elim-time}: There are at
  only exponentially many sequents, so there can be at most
  exponentially many expansion steps, and the fixpoint calculations in
  the propagation steps run through at most exponentially many
  iterations. The run time analysis of a single fixpoint iteration
  step is essentially the same as in Lemma~\ref{lem:type-elim-time},
  using that strict one-step satisfiability is in \ExpTime for state
  sequents; for non-state sequents~$\Gamma$ just note that there are
  only polynomially many conclusions of propositional rules arising
  from~$\Gamma$, which need to be compared with at most exponentially
  many existing nodes.
\end{proof}

\section{Concrete Algorithm}\label{sec:concrete-alg}

In the following we provide a more concrete description of the global
caching algorithm, which does not use the computation of least and
greatest fixpoints as primitive operators.  The algorithm closely
follows Liu and Smolka's well-known algorithm for fixpoint computation
in what the authors call ``dependency graphs''~\cite{lism98:simp}; in
our case, these structures are generated by the derivation rules. The
main difference between the algorithm described below and Liu and
Smolka's is caused by the treatment of ``modal'' sequents, i.e.\
states, as the condition that these sequents need to satisfy is not
expressible purely as a reachability property.

As in the previous section we work with a closed set $\Sigma$ 
(generated by the global assumption~$\psi$ and the target formula
$\phi_0$) and \emph{(tableau) sequents}, i.e.\ arbitrary subsets
$\Gamma,\Theta\subseteq\Sigma$, understood conjunctively.  We continue
to write $\Seqs=\Pow\Sigma$ for the set of sequents, and $\States$ for
the set of states, i.e.\ sequents consisting of modal literals only
(recall that we take propositional atoms as nullary operators).

The set $\Seqs$ of sequents carries a hypergraph structure 
$E \subseteq \Seqs \times \Pow (\Seqs)$ that contains
\begin{itemize}
\item for each $\Gamma \in \States$ the pair
  $(\Gamma,\children{\Gamma})$ (recall that  $\children{\Gamma} \subseteq \Seqs$
  denotes the set of children of~$\Gamma$); and
 \item for each
 $\Gamma \in \Seqs \setminus \States$ the set of pairs $\{ (\Gamma,\Delta) \mid \Gamma/\Delta \mbox{ a propositional rule applicable to } \Gamma \}$.
\end{itemize}
In the following we write~$E_M$ for the ``modal'' part of~$E$ induced
by the state-child relationships as per the first bullet point, and
$E_P$ for the part of~$E$ induced by the propositional rules as per
the second bullet point (so $E$ is the disjoint union of~$E_m$ and~$E_p$).

Our algorithm maintains a {\em partial} function
$\alpha: \Seqs \to \{ 0,1\}$ that maps a sequent to $0$ if it is not
$\psi$-satisfiable, to $1$ if it is $\psi$-satisfiable and is
undefined in case its satisfiability cannot be determined yet.  In the
terminology of the previous section $\alpha$ should have the following
properties:
\begin{itemize}
 \item $\alpha(\Gamma) = 1$ iff $\Gamma \in \nu X .\, \CE_G(X)$ and
 \item $\alpha(\Gamma) = 0$ iff $\Gamma \in \mu X.\, \CA_G(X)$
\end{itemize}
where $G$ denotes the set of sequents for which $\alpha$ is defined.
The idea of computing a partial function is that this allows
determining $\psi$-satisfiability of a given sequent without exploring
the full hypergraph. We will now describe an algorithm for computing
$\alpha$ that is inspired by Liu and Smolka's \emph{local}
algorithm~\cite[Figures~3,4]{lism98:simp} and then show its
correctness.

\begin{alg}
\label{alg:concrete} Concrete Global Caching
\begin{algorithmic}
\State Initialize $\alpha$ to be undefined everywhere;
    \State $\alpha(\Gamma_0) \coloneqq 1$; $D(\Gamma_0) = \emptyset$, $W \coloneqq \{ (\Gamma_0,\Delta) \mid   (\Gamma_0,\Delta)  \in E\}$;
    \While{$W \not= \emptyset$}
    \State Pick $e = (\Gamma,\Delta) \in W$;
    \State $W \coloneqq W - \{e\}$;
     \If{$ \exists \Gamma' \in \Delta .\, \text{($\alpha(\Gamma')$ is undefined)}$}  \Comment{Expansion step}
    \State Pick non-empty $U \subseteq \{\Gamma' \in \Delta \mid  \alpha(\Gamma') \mbox{ undefined}\}$;
    \State For each $\Gamma' \in U$ put $\alpha(\Gamma') \coloneqq 1$, $D(\Gamma') \coloneqq \emptyset$, $W = W \cup \{ (\Gamma',\Delta') \mid   (\Gamma',\Delta') \in E \}$;
    \EndIf
    \If{$e \in E_P$}  \Comment{Propagation step}
    \If{$ \forall \Gamma' \in \Delta.\, \alpha(\Gamma') = 0$} \Comment{Case $\Gamma \not\in \States$}
    \State $\alpha(\Gamma) \coloneqq 0$; $W \coloneqq W \cup D(\Gamma)$; $D(\Gamma) \coloneqq \emptyset$;
    \ElsIf{$ \exists \Gamma' \in \Delta.\, \alpha(\Gamma') = 1$}
    \State pick $\Gamma' \in \Delta$ s.t. $\alpha(\Gamma') = 1$ and put $D(\Gamma') \coloneqq D(\Gamma') \cup \{(\Gamma,\Delta)\}$;
    \State $W \coloneqq W - \{(\Gamma'',\Delta'') \in W \mid \Gamma'' == \Gamma \}$;
    \EndIf
    \ElsIf{$e \in E_M$}  \Comment{Propagation step}
    \State $S_0 \coloneqq \{ \Gamma' \in \Delta \mid \alpha(\Gamma') ==0 \}$ ; $S_1 \coloneqq \{ \Gamma' \in \Delta \mid \alpha(\Gamma') == 1\}$ \Comment{Case $\Gamma \in \States$}
    \If{$\Delta == S_0 \cup S_1$ and $(\phi_\Gamma,\eta_{S_1})$ is not one-step satisfiable}
    \State $\alpha(\Gamma) \coloneqq 0$; $W \coloneqq W \cup D(\Gamma)$; $D(\Gamma) \coloneqq \emptyset$;
    \ElsIf{$(\phi_\Gamma,\eta_{S_1})$ is one-step satisfiable}
    \For{$\Gamma' \in S_1$} 
    $D(\Gamma') \coloneqq D(\Gamma') \cup \{(\Gamma,\Delta)\}$;
    \EndFor
   \ElsIf{$\Delta \not= S_0 \cup S_1$}  $W \coloneqq W \cup \{e\}$;
    \EndIf
    \EndIf
    \EndWhile
\end{algorithmic}
\end{alg}
\begin{rem}
  In Algorithm~\ref{alg:concrete}, hyperedges should be understood as
  represented symbolically, i.e.\ either by describing matches of
  propositional rules or by marking a hyperedge as modal (which determines
  the hyperedge uniquely given the source node). This serves in particular
  to avoid having to create all of the exponentially many children of
  a state node at once. Target nodes $\Gamma'\in\Delta$ of hyperedges
  $(\Gamma,\Delta)$ are generated explicitly only once they are picked
  from~$\Delta$ in the expansion step (the propagation step only
  accesses nodes that are already generated).
\end{rem}
\noindent We proceed to show correctness of
Algorithm~\ref{alg:concrete} and establish a precise connection to our
global caching algorithm. First we need a couple of lemmas that
establish key invariants of the algorithm. Note that the current state
of a run of the algorithm can be characterized by the triple
$(\alpha,D,W)$ where $\alpha$ is the current (partial) labelling of
sequents, $D$ assigns to any given sequent $\Gamma$ a set of
hyperedges that need to be investigated if the $\alpha$-value of $\Gamma$
changes, and $W$ contains the set of hyperedges that the algorithm
still has to check. The algorithm terminates when it reaches a state
of the form $(\alpha,D,\emptyset)$, i.e.\ when there are no edges left
to be checked.  Given a state $s = (\alpha,D,W)$ of the algorithm, we
put $G^s_i \coloneqq \{ \Gamma \in \Seqs \mid \alpha(\Gamma) = i \}$
for $i=0,1$, and $G^s = G^s_0 \cup G^s_1$ (so $G^s$ is the domain of
definition of~$\alpha$).

\begin{lemma}\label{lem:unsat}
    Let $\Gamma \in \Seqs$ and suppose $s=(\alpha,D,W)$ is a state reached during execution of the algorithm.  Then 
    $\alpha(\Gamma) = 0$ implies that $\Gamma \in \mu X.\, \CA_{G^s}(X)$ and therefore, by  Lemma~\ref{lem:no-propagation}(5) and Lemma~\ref{lem:caching-soundness},  the sequent  $\Gamma$ is not 
    $\psi$-satisfiable.
\end{lemma}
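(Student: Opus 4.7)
My plan is to prove by induction on the execution steps of Algorithm~\ref{alg:concrete} the strengthened invariant that at every reachable state $s = (\alpha, D, W)$, every $\Gamma$ with $\alpha(\Gamma) = 0$ lies in $\mu X.\, \CA_{G^s}(X)$. The unsatisfiability conclusion then follows immediately: since $G^s \subseteq \Gf$, Lemma~\ref{lem:functionals-monotone}(\ref{item:fps-monotone}) gives $\mu X.\, \CA_{G^s}(X) \subseteq \mu X.\, \CA_{\Gf}(X)$, and combining Lemma~\ref{lem:no-propagation}(\ref{item:inv-AE-Gf}) with Lemma~\ref{lem:caching-soundness} yields that every sequent in $\mu X.\, \CA_{\Gf}(X)$ is not $\psi$-satisfiable. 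The base case is vacuous, since initially no sequent is assigned~$0$.

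For the inductive step, I would observe that $\alpha$ is assigned the value $0$ in exactly two branches of the main loop: the propositional case, where $e = (\Gamma, \Delta) \in E_P$ with $\alpha(\Gamma') = 0$ for all $\Gamma' \in \Delta$; and the modal case, where $\Gamma \in \States$, $e = (\Gamma, \Delta) \in E_M$, $\Delta = S_0 \cup S_1$, and $(\phi_\Gamma, \eta_{S_1})$ is not one-step satisfiable. Let $s'$ denote the state just before such an assignment and $s$ the state just after. Since $G^{s'} \subseteq G^s$, the inductive hypothesis together with Lemma~\ref{lem:functionals-monotone}(\ref{item:fps-monotone}) yields $\{\Gamma' \mid \alpha(\Gamma') = 0 \text{ at } s'\} \subseteq \mu X.\, \CA_{G^s}(X)$.

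In the propositional case, every $\Gamma' \in \Delta$ has $\alpha(\Gamma') = 0$ at $s'$, so by induction $\Delta \subseteq \mu X.\, \CA_{G^s}(X)$, and the definition of $\CA_{G^s}$ on non-state sequents gives $\Gamma \in \CA_{G^s}(\mu X.\, \CA_{G^s}(X)) = \mu X.\, \CA_{G^s}(X)$. In the modal case, set $T = \mu X.\, \CA_{G^s}(X)$. Because $\Delta = S_0 \cup S_1 = \children{\Gamma}$ was fully realized, $\children{\Gamma} \subseteq G^s$, satisfying the first clause in the definition of $\CA_{G^s}$ on states. The inductive hypothesis gives $S_0 \subseteq T$, hence $\children{\Gamma} \setminus T \subseteq \children{\Gamma} \setminus S_0 = S_1$, so the DNF $\eta_{\children{\Gamma} \setminus T}$ collects a subset of the disjuncts of $\eta_{S_1}$ and therefore propositionally entails $\eta_{S_1}$. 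Any one-step model $(X,\tau,t)$ of $(\phi_\Gamma, \eta_{\children{\Gamma} \setminus T})$ thus also satisfies $(\phi_\Gamma, \eta_{S_1})$; since the latter is one-step unsatisfiable by the case hypothesis, so is the former, and we conclude $\Gamma \in \CA_{G^s}(T) = T$.

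The principal obstacle, and the only non-bookkeeping step, is the modal case: the algorithm decides unsatisfiability \emph{locally}, testing $(\phi_\Gamma, \eta_{S_1})$ against the currently-known positive children of $\Gamma$, whereas the specification of $\CA_{G^s}$ asks about $(\phi_\Gamma, \eta_{\children{\Gamma} \setminus T})$, which is parametrized by the least fixpoint $T$ itself. The reconciliation rests on the anti-monotonicity of $\eta_{(-)}$ under passage to subsets together with the inclusion $S_0 \subseteq T$ from the inductive hypothesis, which together guarantee that the algorithm's local test is in fact \emph{stronger} than the condition required by $\CA_{G^s}$.
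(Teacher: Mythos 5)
Your proof is correct and follows essentially the same route as the paper's: an induction over execution steps that reduces, via monotonicity of $\mu X.\,\CA_{G}(X)$ in $G$ (Lemma~\ref{lem:functionals-monotone}), to the moment at which $\alpha(\Gamma)$ is set to~$0$, and then applies monotonicity of $\CA_{G^s}$ to conclude $\Gamma\in\CA_{G^s}(\mu X.\,\CA_{G^s}(X))$. The only difference is that you work out the modal case explicitly (via the observation that $\children{\Gamma}\setminus T\subseteq S_1$ makes $\eta_{\children{\Gamma}\setminus T}$ propositionally stronger than $\eta_{S_1}$), which the paper dismisses as ``completely analogous'' to the propositional case.
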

\begin{proof}
  First note that once $\alpha(\Gamma) = 0$ for some sequent $\Gamma$,
  the value $\alpha(\Gamma)$ will not change any more throughout the
  run of the algorithm, as the only moment when a sequent~$\Gamma$ is
  assigned value~$1$ is when~$\Gamma$ is newly added to the domain of
  $\alpha$. Since $G_s$ can only grow during a run of the algorithm
  and by Lemma~\ref{lem:functionals-monotone},
  $\Gamma \in \mu X.\, \CA_{G^s}(X)$ depends monotonically on $G_s$,
  it suffices to establish the invariant for the point where
  $\alpha(\Gamma)$ is set to~$0$. So suppose that this happens while
  $e = (\Gamma,\Delta)$ is processed, with the state being
  $s=(\alpha,D,W)$ before and $s'=(\alpha',D',W')$ after
  processing~$e$. Suppose that~$s$ satisfies the claimed invariant; we
  have to show that~$s'$ satisfies it as well. We do this for the case
  where $e \in E_P$; the case $e \in E_M$ is completely
  analogous. 
  Since $e \in E_P$, the reason for setting $\alpha'(\Gamma) = 0$ is
  that for all $\Gamma' \in \Delta$ we have $\alpha(\Gamma') = 0$ --
  in other words, we have $\Gamma \in \CA_{G^{s}}(G^{s}_0)$. This
  implies $\Gamma \in \CA_{G^{s'}}(G^{s}_0)$ by
  Lemma~\ref{lem:functionals-monotone} as $G^{s} \subseteq G^{s'}$.
  By assumption on~$s$, we have
  $G^{s}_0 \subseteq \mu X.\, \CA_{G^{s}}(X) \subseteq\mu X.\,
  \CA_{G^{s'}}(X)$, again using
  Lemma~\ref{lem:functionals-monotone} in the second
  step. Monotonicity of $\CA_{G^{s'}}$ now yields
  $$\Gamma \in  \CA_{G^{s'}}(G^{s}_0) \subseteq  \CA_{G^{s'}}( \mu X.\, \CA_{G^{s'}}(X)) = \mu X.\, \CA_{G^{s'}}(X) $$
  as required.
\end{proof}
\noindent 
The following technical lemma follows by inspecting the details of the
algorithm:
\begin{lemma}\label{lem:inv}
    Suppose  $s=(\alpha,D,W)$ is a state reached during execution of the algorithm.
    Then for all $\Gamma \in G^s_1$ and all $(\Gamma,\Delta) \in E$ precisely one of the following holds:
    \begin{itemize}
     \item $(\Gamma,\Delta) \in W$ or 
     \item $\Gamma \not\in \States$ and there is $(\Gamma,\Delta') \in E_P$ with $(\Gamma,\Delta') \in D(\Gamma'')$ for some $\Gamma'' \in  \Delta'$ or
     \item $\Gamma \in \States$  and $(\phi_\Gamma,\eta_{S})$ is one-step satisfiable with
     $S= \{\Gamma' \in \Delta \mid (\Gamma,\Delta) \in D(\Gamma')\}$
     \end{itemize}
     We also note that $D(\Gamma) \not= \emptyset$ implies $\alpha(\Gamma) = 1$.
\end{lemma}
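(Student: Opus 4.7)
The plan is to proceed by induction on the number of iterations of the \textbf{while} loop, maintaining the claimed invariants as a conjunction. For the base case, immediately after initialization we have $G^s_1=\{\Gamma_0\}$, $D(\Gamma_0)=\emptyset$, and $W$ contains exactly the hyperedges with source $\Gamma_0$; thus the first bullet of the disjunction is satisfied for every $(\Gamma_0,\Delta)\in E$, and the auxiliary claim about $D$ holds vacuously. For the inductive step, I would fix a state $s=(\alpha,D,W)$ satisfying the invariants, pick an edge $e=(\Gamma,\Delta)\in W$, remove it from $W$, and then analyse how the expansion step followed by one of the branches of the propagation step affects each node whose status could be altered.

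Concretely, I would first handle the expansion step: for every $\Gamma'\in U$ newly assigned $\alpha(\Gamma')=1$, the algorithm initialises $D(\Gamma')=\emptyset$ and inserts all hyperedges $(\Gamma',\Delta')\in E$ into $W$, so the first bullet holds for them and the auxiliary claim is vacuous. For pre-existing nodes, only $W$ has grown, which can only turn a non-witnessing situation into one where the first bullet holds, so the invariants are preserved. Next I would treat the propagation step for $e\in E_P$. In the branch where all targets have value~$0$, we set $\alpha(\Gamma):=0$ (so $\Gamma$ leaves $G^s_1$ and the invariant becomes vacuous at~$\Gamma$), and we move $D(\Gamma)$ back to $W$ and reset $D(\Gamma)=\emptyset$, thereby restoring the first bullet for all edges that were previously witnessed by $D(\Gamma)$ (and preserving $D(\Gamma)\neq\emptyset\Rightarrow\alpha(\Gamma)=1$). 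In the branch where some target $\Gamma'\in\Delta$ has $\alpha(\Gamma')=1$, we add $e$ to $D(\Gamma')$ and delete every remaining edge with source $\Gamma$ from $W$; the chosen $e\in D(\Gamma')$ then serves as the single witness required by the second bullet for \emph{all} edges out of $\Gamma$, while the auxiliary claim is maintained because we only add to $D$ of nodes already labelled~$1$.

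The main obstacle is the $E_M$ branch, which has three sub-cases and requires more care because the third bullet constrains the satisfiability of $(\phi_\Gamma,\eta_S)$ in terms of the \emph{current} $D$. If all children are defined and $(\phi_\Gamma,\eta_{S_1})$ is unsatisfiable, the analysis mirrors the corresponding $E_P$ case: $\Gamma$ leaves $G^s_1$ and $D(\Gamma)$ is flushed into $W$. If $(\phi_\Gamma,\eta_{S_1})$ is satisfiable, we add $e$ to $D(\Gamma')$ for every $\Gamma'\in S_1$. Here I need to verify that the set $S=\{\Gamma'\in\Delta\mid e\in D(\Gamma')\}$ obtained after the update equals $S_1$; this uses the auxiliary claim $D(\Gamma')\neq\emptyset\Rightarrow\alpha(\Gamma')=1$ applied in state~$s$ together with the fact that states have a unique outgoing hyperedge, so no stale entries outside~$S_1$ can occur. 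Finally, if $(\phi_\Gamma,\eta_{S_1})$ is unsatisfiable but $\Delta\neq S_0\cup S_1$, the edge~$e$ is re-inserted into~$W$, immediately restoring the first bullet. In each case the auxiliary claim is preserved because additions to~$D$ go only to nodes currently labelled~$1$, and whenever a node's label switches to~$0$ its $D$ is simultaneously emptied; this completes the induction.
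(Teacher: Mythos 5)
Your overall strategy---induction on the iterations of the while loop with a case analysis over the branches of the algorithm---is the right one; the paper itself offers no proof beyond the remark that the lemma ``follows by inspecting the details of the algorithm'', so a careful induction of exactly this shape is what is called for, and most of your case analysis is sound (in particular the use of the auxiliary claim $D(\Gamma')\neq\emptyset\Rightarrow\alpha(\Gamma')=1$, together with uniqueness of the modal hyperedge out of a state, to show that the recorded set $S$ coincides with $S_1$ in the modal branch).

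There is, however, a genuine gap in your treatment of the two ``flush'' branches, i.e.\ the places where some $\alpha(\Gamma')$ is reset to $0$ and $D(\Gamma')$ is emptied into $W$. You claim this ``restores the first bullet for all edges that were previously witnessed by $D(\Gamma')$'', but only the hyperedges literally contained in $D(\Gamma')$ are re-inserted into $W$. Consider a non-state $\Gamma^0$ with two propositional hyperedges $e_0=(\Gamma^0,\Delta^0)$ and $e_1=(\Gamma^0,\Delta^1)$: when $e_0$ is processed and a witness $\Gamma'\in\Delta^0$ with $\alpha(\Gamma')=1$ is found, the algorithm records $e_0\in D(\Gamma')$ and removes \emph{every} hyperedge with source $\Gamma^0$ from $W$, so that both $e_0$ and $e_1$ subsequently rely on the second bullet via the single witness $e_0\in D(\Gamma')$. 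If $\alpha(\Gamma')$ later flips to $0$, only $e_0$ returns to $W$; at that moment $e_1$ is in no $D$-set, is not in $W$, and its source still has $\alpha(\Gamma^0)=1$, so none of the three bullets holds for $e_1$. Your induction therefore does not close at this step---and indeed the invariant is false as literally stated. The repair is to weaken the first bullet to ``some hyperedge with source $\Gamma$ is in $W$'' (equivalently, to phrase the invariant per source node rather than per hyperedge); this weakened invariant \emph{is} preserved by the flush, since $e_0$ re-enters $W$, it is what your argument actually establishes in the remaining cases, and it still suffices for the correctness theorem, which only invokes the lemma at termination when $W=\emptyset$.
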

\noindent Correctness of the algorithm is established in the following
theorem.
\begin{theorem}
  When Algorithm~\ref{alg:concrete} terminates at
  $s=(\alpha,D,\emptyset)$ then for all $\Gamma \in \Seqs$ we have:
  \begin{enumerate}
  \item $\alpha(\Gamma) = 0$ implies $\Gamma \in \mu X.\, \CA_{G^s}(X)$ and thus
    $\Gamma$ is not $\psi$-satisfiable.
  \item $\alpha(\Gamma) = 1$ implies $\Gamma \in \nu X.\, \CE_{G^s}(X)$ and thus
    $\Gamma$ is $\psi$-satisfiable.
  \end{enumerate}
\end{theorem}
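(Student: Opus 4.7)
The plan is to reduce Claim~(1) to Lemma~\ref{lem:unsat} (which holds at every state of the algorithm, hence at termination, and whose stated conclusion already delivers non-$\psi$-satisfiability of $\Gamma$ via Lemma~\ref{lem:no-propagation}(5) and Lemma~\ref{lem:caching-soundness}), and to reduce Claim~(2) to showing that $G^s_1 := \{\Gamma \mid \alpha(\Gamma) = 1\}$ is a postfixpoint of $\CE_{G^s}$. Once the latter is established, Knaster--Tarski yields $G^s_1 \subseteq \nu X.\,\CE_{G^s}(X)$, which by Lemma~\ref{lem:functionals-monotone}(2) is contained in $\nu X.\,\CE_{\Gf}(X)$, and then $\psi$-satisfiability of every $\Gamma \in G^s_1$ follows from Lemma~\ref{lem:caching-completeness} applied to this postfixpoint of $\CE_{\Gf}$.

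The bulk of the work is thus to show $G^s_1 \subseteq \CE_{G^s}(G^s_1)$ at termination, which I would carry out by case analysis on $\Gamma \in G^s_1$ using Lemma~\ref{lem:inv}. Since $W = \emptyset$ at termination, the first alternative of Lemma~\ref{lem:inv} is excluded for every edge $(\Gamma,\Delta) \in E$. In the case $\Gamma \notin \States$, the second alternative of the lemma supplies some propositional edge $(\Gamma, \Delta') \in E_P$ together with $\Gamma'' \in \Delta'$ such that $(\Gamma, \Delta') \in D(\Gamma'')$; the last sentence of Lemma~\ref{lem:inv} then forces $\alpha(\Gamma'') = 1$, i.e., $\Gamma'' \in G^s_1 \cap \children{\Gamma}$, and so $\Gamma \in \CE_{G^s}(G^s_1)$ by the non-state clause of~$\CE_{G^s}$. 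In the case $\Gamma \in \States$, applying Lemma~\ref{lem:inv} to the unique modal edge $(\Gamma, \children{\Gamma})$ lands us in the third alternative, yielding a subset $S \subseteq \children{\Gamma}$ such that $(\phi_\Gamma, \eta_S)$ is one-step satisfiable and $(\Gamma, \children{\Gamma}) \in D(\Gamma')$ for every $\Gamma' \in S$; once more, the last sentence of the lemma gives $\alpha(\Gamma') = 1$ for each $\Gamma' \in S$, so $S \subseteq G^s_1 \cap \children{\Gamma}$.

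The main obstacle is the state case: one must verify that the one-step satisfiability certificate retained by the algorithm for this possibly strict subset $S$ of children lifts to the full set $G^s_1 \cap \children{\Gamma}$ prescribed by the state clause of $\CE_{G^s}$. This hinges on the fact that enlarging a subset of children only adds disjuncts to the corresponding DNF and hence weakens $\eta$ propositionally; any one-step model of $(\phi_\Gamma, \eta_S)$ therefore also witnesses one-step satisfiability of $(\phi_\Gamma, \eta_{G^s_1 \cap \children{\Gamma}})$. This is the same observation already used in the monotonicity proof for~$\CE$ immediately after the definition~\eqref{eq:elim-functional}, and with it in hand the reduction to Lemma~\ref{lem:caching-completeness} closes the proof.
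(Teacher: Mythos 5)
Your proposal is correct and follows essentially the same route as the paper: Claim~(1) via Lemma~\ref{lem:unsat}, and Claim~(2) by showing that $G^s_1$ is a postfixpoint of $\CE_{G^s}$ using Lemma~\ref{lem:inv} with $W=\emptyset$ and the observation that $D(\Gamma')\neq\emptyset$ forces $\alpha(\Gamma')=1$, then concluding via Lemmas~\ref{lem:functionals-monotone} and~\ref{lem:caching-completeness}. The paper states the postfixpoint step as immediate; you have merely spelled out the case analysis, including the (correct) monotonicity point that enlarging the child set only weakens the DNF $\eta$ in the state case.
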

\begin{proof}
 The first claim is immediate by Lemma~\ref{lem:unsat}. 
 For the second claim it suffices to prove that~$G^s_1$ is included in the greatest fixpoint of $\CE_{G^s}(X)$  - the claim concerning $\psi$-satisfiability of~$\Gamma$ then follows
 from Lemmas~\ref{lem:functionals-monotone} and~\ref{lem:caching-completeness} in the previous section. 
 It suffices to show that $G^s_1$ is a post-fixpoint of $\CE_{G^{s}}$ -- but this follows immediately from
 Lemma~\ref{lem:inv} together with $W = \emptyset$ and $\{ \Gamma \mid D(\Gamma) \not= \emptyset \} \subseteq G^s_1$.
 \end{proof}

 \noindent Algorithm~\ref{alg:concrete} is closely related to
 Algorithm~\ref{alg:global-caching}: Both algorithms explore the
 collection of sequents that are ``reachable'' from $\Gamma_0$, making
 non-deterministic choices concerning which sequents to expand next. A
 crucial difference to Algorithm~\ref{alg:global-caching} is that
 Algorithm~\ref{alg:concrete} contains a concrete description of how
 to compute the fixpoints of $\CE$ and $\CA$ by successively updating
 the labelling function; to this end, it imposes a more definite
 strategy regarding propagation by enforcing a propagation step after
 every expansion step. We conclude by providing an estimate of the
 complexity of the algorithm:

\begin{proposition}
       If the strict one-step satisfiability problem of~$\Lambda$ is in
  \ExpTime, then Algorithm~\ref{alg:concrete} decides  satisfiability
  under global assumptions in exponential time. 
\end{proposition}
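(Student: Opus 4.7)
The plan is to bound the total work of Algorithm~\ref{alg:concrete} by a counting argument that views the algorithm as controlled exploration of the hypergraph on $\Seqs$. Set $n = |\phi_0| + |\psi|$, so that $|\Sigma| = O(n)$ and $|\Seqs| \leq 2^{O(n)}$. Each state contributes one modal hyperedge to $E_M$, and each non-state contributes at most polynomially many propositional hyperedges to $E_P$ (one per applicable propositional rule, each with $|\Delta| \leq 2$); thus $|E| \leq 2^{O(n)}$. Modal hyperedges can have $|\Delta| \leq 2^{|\Sigma|} = 2^{O(n)}$, but as noted in the remark they are represented symbolically and their children are materialized only on demand in the expansion step.

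Next I would observe that for each $\Gamma \in \Seqs$ the value $\alpha(\Gamma)$ undergoes at most two transitions during the run: from undefined to $1$ on first expansion, and at most once from $1$ to $0$ during propagation; by Lemma~\ref{lem:unsat} (or direct inspection of the code) the value $0$ is permanent. Summed over sequents this gives at most $2 \cdot 2^{O(n)}$ labelling events. For each hyperedge $(\Gamma,\Delta)$ I would then bound the number of times it appears in $W$ by one (initial insertion when $\Gamma$ is first expanded), plus at most one per child $\Gamma' \in \Delta$ whose $\alpha$-value drops to $0$ (which flushes $D(\Gamma')$ and re-inserts the edge), plus, for modal edges, at most one per child transitioning from undefined to $1$ (justifying the ``\emph{else if} $\Delta \neq S_0 \cup S_1$'' re-addition: each subsequent pickup of the edge must perform an expansion to make progress, and thus moves at least one undefined child into $\alpha = 1$). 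Each of these events is distinct and can occur at most once per child, so every hyperedge is processed $O(|\Delta|) = 2^{O(n)}$ times.

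Each processing of a propositional edge costs $\mathrm{poly}(n)$, spent on inspecting the $\alpha$-values of the at most two children and updating $D$ and $W$. Each processing of a modal edge is dominated by the strict one-step satisfiability check on $(\phi_\Gamma,\eta_{S_1})$: since $|\phi_\Gamma| = O(n)$, the hypothesis of the proposition bounds this by $2^{O(n)}$. Maintenance of the bit-vector representation of $\eta_{S_1}$ fits in the same bound because its relevant bits are indexed by elements of $S_1$, which has size at most $2^{O(n)}$.

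Multiplying these bounds yields a total runtime of $|E| \cdot O(|\Delta|) \cdot 2^{O(n)} = 2^{O(n)}$, as required. The main obstacle is the bookkeeping for the fall-through re-additions of modal edges: one must verify that the ``\emph{else if}'' branch cannot loop without triggering further expansions, so that each re-addition of a modal edge is tied to a distinct undefined-to-$1$ transition on one of its children; once this tying is established, the potential geometric blowup collapses to $O(|\Delta|)$ processings per edge and the exponential-time bound follows.
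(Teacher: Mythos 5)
Your proof is correct and follows essentially the same route as the paper's: bound the number of times each hyperedge is processed by $O(|\Delta|)$ (the paper counts $2\cdot|\Delta|$ for modal edges and $|\Delta|+1$ for propositional ones, tying fall-through re-insertions to newly defined children and $D$-flushes to status changes exactly as you do), multiply by the exponential size of the hypergraph and by the per-check cost dominated by strict one-step satisfiability. The only cosmetic slip is writing $2^{O(n)}$ throughout where \ExpTime only guarantees $2^{n^{O(1)}}$; the product of the bounds is still exponential, so the conclusion stands.
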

\begin{proof}
  To get the upper bound, we observe first that each hyperedge
  $e=(\Gamma, \Delta) \in E_M$ will be checked at most
  $2\cdot|\Delta|$ times by the algorithm: after $e$ has been added to
  $W$ it could be tested up to $|\Delta|$ times (in the worst case,
  until all of the children in~$\Delta$ have been added to the domain
  of $\alpha$) and then again each time the status of one of the
  children in~$\Delta$ changes. 
  Similarly, each hyperedge $e=(\Gamma, \Delta) \in E_P$
  will be checked at most $|\Delta|+1$ times (each time when
  the status of one of the children changes).
  The \ExpTime bound then follows from
  the observation that (i) the hypergraph is exponential in the size of the input, (ii) for $\Gamma \in \States$ there is exactly one
  edge $(\Gamma, \Delta) \in E_M$ and (iii) for each $\Gamma \in \Seqs \setminus \States$ the algorithm only verifies one hyperedge of the 
  form $(\Gamma,\Delta) \in E_P$. 
\end{proof}

\section{Nominals}\label{sec:nominals}

A key feature of \emph{hybrid logic}~\cite{ArecesTenCate07} as an
extension of modal logic are \emph{nominals}, which are special atomic
predicates that are semantically restricted to hold in exactly one
state, and hence uniquely designate a state. Nominals form part of
many relational description logics (recognizable by the
letter~$\mathcal O$ in the standard naming scheme)~\cite{BaaderEA03},
where they serve as expressive means to express facts involving
specific individuals -- for instance, using nominals, concepts over an
ontology of music can not only speak about the notion of composer in
general, but also concretely about Mozart and Stockhausen. We proceed
to discuss how to extend some of the above results to cover
coalgebraic hybrid logic, i.e.\ the extension of coalgebraic modal
logic with nominals in the standard sense. Specifically, we show that
the generic \ExpTime upper bound for reasoning under global
assumptions (Theorem~\ref{thm:exptime}) remains true in presence of
nominals; we leave the design of a global caching algorithm for this
setting as an open problem (for the case where a complete set of modal
tableau rules in the sense recalled in Remark~\ref{rem:rules} is
available, we have presented such an algorithm in previous
work~\cite{GoreEA10b}).

\textbf{Syntactically}, we introduce a set $\Noms$ of \emph{nominals}
$i,j,\dots$, i.e.\ names for individual states, and work with an
extended set $\FLang(\Noms,\Lambda)$ of \emph{hybrid}
formulae~$\phi,\psi$, defined by the grammar
\begin{equation*}
  \FLang(\Noms,\Lambda)\owns\phi,\psi::= \bot\mid \phi\land\psi\mid
  \neg\phi\mid \hearts(\phi_1,\dots,\phi_n)\mid i\mid @_i\phi\qquad (\hearts\in\Lambda\text{ $n$-ary}, i\in\Noms);
\end{equation*}
that is, nominals may be used as atomic formulae and within
\emph{satisfaction operators} $@_i$, with $@_i\phi$ stating that the
state denoted by $i$ satisfies $\phi$. (We explicitly do not include
local binding $\downarrow$, with formulae ${\downarrow}i.\,\phi$ read
`$\phi$ holds if~$i$ is changed to denote the present state', which
would lead to undecidability~\cite{ArecesEA99}.)

\textbf{Semantically}, we work with \emph{hybrid models} $\CM=(C,\pi)$
consisting of a $T$-coalgebra $C=(X,\gamma)$ and an assignment of a
singleton set $\pi(i)\subseteq X$ to each nominal $i\in\Noms$. We
write $\models_\CM$ for the satisfaction relation between states~$x$
in hybrid models~$\CM=(C,\pi)$ and hybrid formulae, defined by
\begin{align*}
  x &\models_\CM i&& \hspace{-4em}\text{iff}\quad x\in\pi(i)\\
  x &\models_\CM @_i\phi && \hspace{-4em}\text{iff}
      \quad y\models_\CM\phi\quad\text{for the unique $y\in\pi(i)$},
\end{align*}
and otherwise the same clauses as $\models_C$
(Section~\ref{sec:colog}). Similarly as for the purely modal logic, we
sometimes refer to these data just as the coalgebraic hybrid
logic~$\Lambda$.
\begin{example}
  We illustrate how the presence of nominals impacts on logical
  consequence.
  \begin{enumerate}
  \item In Presburger modal logic, the formula
  \begin{equation*}
    @_i(\sharp(i)>\sharp(p)),
  \end{equation*}
  with~$i$ a nominal and~$p$ a propositional atom, says that state~$i$
  has higher transition weight to itself than to states
  satisfying~$p$. One consequence of this formula is
  \begin{equation*}
    @_i\neg p.
  \end{equation*}
\item In probabilistic modal logic, the formula
  \begin{equation*}
    @_i(w(j)>w(\neg j)\land w(k)\ge w(\neg k)),
  \end{equation*}
  with nominals $i,j,k$, says that from state~$i$, we reach state~$j$
  with probability strictly greater than $1/2$, and state~$k$ with
  probability at least~$1/2$. From this, we conclude that $j=k$, i.e.\
  \begin{equation*}
    @_jk.
  \end{equation*}
\end{enumerate}
\end{example}

\begin{rem}
  In the presence of nominals, the equivalence of the Kripke semantics
  and multigraph semantics of Presburger modal logic
  (Lemma~\ref{lem:multi-vs-kripke}) breaks down: For a nominal~$i$,
  the formula $\sharp(i)>1$ is satisfiable in multigraph semantics but
  not in Kripke semantics. Using global assumptions, we can however
  encode Kripke semantics into multigraph semantics, by extending the
  global assumption~$\psi$ with additional conjuncts $\sharp(i)\le 1$
  for all nominals $i$ appearing either in $\psi$ or in the target
  formula~$\phi_0$. We therefore continue to use multigraph semantics
  for Presburger hybrid logic.
\end{rem}

\begin{rem}\label{rem:univ-mod-hybrid}
  As in the case of coalgebraic modal logic
  (Remark~\ref{rem:univ-mod}), satisfiability under global assumptions
  in coalgebraic hybrid logic is mutually reducible with plain
  satisfiability in an extended logic featuring the universal
  modality~$\univbox$, with the same syntax and semantics as in
  Remark~\ref{rem:univ-mod}. The non-trivial reduction (from the
  universal modality to global assumptions) works slightly differently
  than in the modal case, due to the fact that we cannot just take
  disjoint unions of hybrid models: Like before, let
  $\univbox\psi_1,\dots,\univbox\psi_n$ be the $\univbox$-subformulae
  of the target formula~$\phi$ (now in coalgebraic hybrid logic with
  the universal modality), and guess a subset
  $U\subseteq\{1,\dots,n\}$, inducing a map $\chi\mapsto\chi[U]$
  eliminating $\univbox$ from subformulae~$\chi$ of~$\phi$ as in
  Remark~\ref{rem:univ-mod}. Then check that $\phi[U]$ is satisfiable
  under the global assumption
  \begin{equation*}
    \psi_U = \Land_{k\in U}\psi_k[U]\land\Land_{k\in\{1,\dots,n\}\setminus U} (i_k\to\neg\psi_k[U])
  \end{equation*}
  where the $i_k$ are fresh nominals. It is easy to see that this
  non-deterministic reduction is correct, i.e.\ that $\phi$ is
  satisfiable iff $\phi[U]$ is $\psi_U$-satisfiable for some~$U$.
\end{rem}
\noindent 
A consequence of Remark~\ref{rem:univ-mod-hybrid} is that for purposes
of estimating the complexity of satisfiability under global
assumptions, we can eliminate satisfaction operators: Using the
universal modality $\univbox$, we can express $@_i\phi$ as
$\univbox(i\to\phi)$. We will thus consider only the language without
satisfaction operators in the following.  For a further reduction, we
say that the global assumption~$\psi$ is \emph{globally satisfiable}
if $\top$ is $\psi$-satisfiable, i.e.\ if there exists a non-empty
$\psi$-model. Then note that $\phi_0$ is $\psi$-satisfiable iff
$\psi\land(i\to\phi_0)$ is globally satisfiable for a fresh
nominal~$i$; so we can forget about the target formula and just
consider global satisfiability.

We proceed to adapt the type elimination algorithm of
Section~\ref{sec:type-elim} to this setting. Fix a global
assumption~$\psi$ to be checked for global satisfiability, and
let~$\Sigma$ be the closure of $\{\psi\}$.
\begin{definition}
  For $i\in\Noms\cap\Sigma$ and $\type\in\types{\psi}$, we say that
  \emph{$i$ has type~$\Gamma$} in a hybrid model $(C,\pi)$ if
  $y\models \type$ for the unique $y\in\pi(i)$.

  A \emph{type assignment} (for~$\Sigma$) is a map
  \begin{equation*}
    \beta\colon \Noms\cap\Sigma\to\types{\psi}.
  \end{equation*}
  We say that~$\beta$ is \emph{consistent} if for all
  $i,j\in\Noms\cap\Sigma$, we have $i\in\beta(j)$ iff
  $\beta(i)=\beta(j)$ (in particular, $i\in\beta(i)$ for
  all~$i\in\Noms\cap\Sigma$). A hybrid model $\CM$
  \emph{satisfies}~$\beta$ if every $i\in\Noms\cap\Sigma$ has
  type~$\beta(i)$ in~$\CM$;~$\beta$ is \emph{$\psi$-satisfiable} if
  there exists a hybrid $\psi$-model that satisfies~$\beta$.
\end{definition}
\noindent (In description logic terminology, we may think of type
assignments as complete ABoxes.) We note the following obvious
properties:
\begin{fact}
  \begin{enumerate}
  \item The formula $\psi$ is globally satisfiable iff there exists a
    $\psi$-satisfiable type assignment for~$\Sigma$.
  \item There are at most exponentially many type
    assignments for~$\Sigma$.
  \item All satisfiable type assignments are consistent.
  \item Consistency of a type assignment can be checked in polynomial
    time.
\end{enumerate}
\end{fact}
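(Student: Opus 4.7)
The Fact bundles four essentially routine observations; the plan is to dispatch (2) and (4) by direct counting/complexity arguments, and to prove (1) and (3) by unpacking the definitions of type assignment, consistency, and satisfaction.

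For (2), I would bound the number of type assignments by $|\types{\psi}|^{|\Noms\cap\Sigma|}\le(2^{|\Sigma|})^{|\Sigma|}=2^{|\Sigma|^2}$, which is exponential in $|\Sigma|$. For (4), I would observe that a type assignment $\beta$ is specified by polynomially many subsets of $\Sigma$, and that consistency requires verifying, for each of the quadratically many pairs $(i,j)\in(\Noms\cap\Sigma)^2$, the biconditional $i\in\beta(j)\iff\beta(i)=\beta(j)$, each instance of which reduces to elementary set operations on subsets of $\Sigma$.

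For (1), the plan is as follows. Given a non-empty hybrid $\psi$-model $\CM=(C,\pi)$ witnessing global satisfiability, I would define $\beta(i)=\{\phi\in\Sigma\mid y_i\models_\CM\phi\}$ for each $i\in\Noms\cap\Sigma$, where $y_i$ is the unique element of $\pi(i)$. Closure of $\Sigma$ under subformulae and normalized negation, together with the fact that $y_i\models_\CM\psi$, guarantees that each $\beta(i)$ is a $\psi$-type, and $\CM$ satisfies $\beta$ by construction. Conversely, any hybrid $\psi$-model satisfying a type assignment is non-empty provided $\Noms\cap\Sigma\neq\emptyset$, since it contains every $\pi(i)$, and hence witnesses global satisfiability of $\psi$; in the boundary case $\Noms\cap\Sigma=\emptyset$, one preprocesses by adjoining a fresh nominal to $\Sigma$, which is essentially the same trick already used immediately above the Fact to reduce target-formula satisfiability to global satisfiability.

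Item (3) is the only one I expect to require a brief argument, and I would regard it as the main, if minor, obstacle. The key observation I would exploit is that $i\in\beta(i)$ holds for every nominal $i\in\Noms\cap\Sigma$, since $y_i\in\pi(i)$ gives $y_i\models_\CM i$ trivially. From this, both directions of the consistency biconditional fall out: $i\in\beta(j)\iff y_j\models_\CM i\iff y_j=y_i$ (by the semantic clause for nominals and singleton-valuedness of $\pi$), and $y_j=y_i$ immediately forces $\beta(i)=\beta(j)$, since types are determined by the underlying state; conversely, $\beta(i)=\beta(j)$ combined with $i\in\beta(i)$ yields $i\in\beta(j)$. The argument is thus essentially bookkeeping around the semantic clause for nominals and the singleton constraint on $\pi$.
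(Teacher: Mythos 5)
Your proposal is correct; the paper in fact states this as a list of ``obvious properties'' with no proof given, and your arguments are exactly the intended routine verifications, including the one mildly non-trivial point in~(3) that a state determines its type (types decide every formula of~$\Sigma$, so $y_i=y_j$ forces $\beta(i)=\beta(j)$), combined with $i\in\beta(i)$. The only remark worth adding is that the non-emptiness needed for the backward direction of~(1) comes for free in the paper's setting, since the reduction immediately preceding the Fact always introduces a fresh nominal into~$\Sigma$, so your boundary-case preprocessing for $\Noms\cap\Sigma=\emptyset$ is harmless but never actually invoked.
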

\noindent To obtain an upper bound \ExpTime for global satisfiability
of~$\psi$, it thus suffices to show that we can decide in \ExpTime
whether a given consistent type assignment~$\beta$ is
$\psi$-satisfiable. To this end, we form the set
\begin{equation*}
  \types{\beta,\psi}=\beta[\Noms\cap\Sigma]\cup \{\type\in\types{\psi}\mid \type\cap\Noms=\emptyset\}
\end{equation*}
of types -- that is, $\types{\beta,\psi}$ includes the assigned
types~$\beta(i)$ for all nominals~$i\in\Noms\cap\Sigma$, and moreover
all types that do not specify any nominal to be locally satisfied. To
check whether $\beta$ is $\psi$-satisfiable, we then run type
elimination on~$\types{\beta,\psi}$; that is, we compute
$\nu\CE_\beta$ by fixpoint iteration starting from
$\types{\beta,\psi}$, where
\begin{equation*}
 \begin{array}{lcll}
   \CE_\beta\colon &\Pow(\types{\beta,\psi})&\to & \Pow(\types{\beta,\psi})\\[0.3ex]
     & S & \mapsto & \{\type \in S \mid (\phi_\type,\eta_S)\text{ is one-step satisfiable}\}
 \end{array}
\end{equation*}
(in analogy to the functional~$\CE$ according
to~\eqref{eq:elim-functional} as used in the type elimination
algorithm for the purely modal case). We answer `yes' if
$\beta[\Noms\cap\Sigma]\subseteq\nu\CE_\beta$, i.e.\ if no
type~$\beta(i)$ is eliminated, and `no' otherwise.

By the same analysis as in Lemma~\ref{lem:type-elim-time}, we see that
the computation of $\nu\CE_\beta$ runs in exponential time if the
strict one-step satisfiability problem of~$\Lambda$ is in
\ExpTime. Correctness of the algorithm is immediate from the following
fact.
\begin{lemma}
  Let $\beta$ be a consistent type assignment. Then $\beta$ is
  $\psi$-satisfiable iff
  $\beta[\Noms\cap\Sigma]\subseteq\nu\CE_\beta$.
\end{lemma}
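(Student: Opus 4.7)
The plan is to prove both directions separately, following the template of Lemma~\ref{lem:realization} and Lemma~\ref{lem:ex-truth} from the purely modal setting, with adjustments to keep track of nominals. The overall strategy is that consistency of $\beta$ plus the design of $\types{\beta,\psi}$ together pin down a unique realization of each nominal in any relevant type-based model.

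For the \emph{only-if} direction, suppose $\CM=((X,\gamma),\pi)$ is a hybrid $\psi$-model satisfying $\beta$. For each $x\in X$, let $\Gamma_x=\{\rho\in\Sigma\mid x\models_\CM\rho\}$; this is a $\psi$-type. I would first observe that if $\Gamma_x$ contains some nominal~$i\in\Sigma$, then $x$ is the unique element of $\pi(i)$, hence $\Gamma_x=\beta(i)$; in all other cases $\Gamma_x$ contains no nominal. Thus $R:=\{\Gamma_x\mid x\in X\}\subseteq\types{\beta,\psi}$. Then I would show that $R$ is a postfixpoint of $\CE_\beta$ by exactly the argument of Lemma~\ref{lem:realization}: for $\Gamma_x\in R$, the one-step model $(X,\tau,\gamma(x))$ with $\tau(a_{\hearts\rho})=\Sem{\rho}_\CM$ witnesses satisfiability of $(\phi_{\Gamma_x},\eta_R)$, the clause of $\eta_R$ witnessing $y\in X$ being $\Gamma_y$. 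Hence $R\subseteq\nu\CE_\beta$, and since $\beta(i)=\Gamma_{\pi(i)}\in R$ for every $i\in\Noms\cap\Sigma$, we obtain $\beta[\Noms\cap\Sigma]\subseteq\nu\CE_\beta$.

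For the \emph{if} direction, let $S=\nu\CE_\beta$, so that $S$ is a postfixpoint of $\CE$ restricted to $\types{\beta,\psi}$; what matters for the construction is only that $(\phi_\Gamma,\eta_S)$ is one-step satisfiable for every $\Gamma\in S$. I would apply the construction of Lemma~\ref{lem:ex-truth} verbatim to obtain a $T$-coalgebra $C=(S,\gamma)$ with $\Sem{\rho}_C=\hat\rho\cap S$ for all $\rho\in\Sigma$; this automatically makes $C$ a $\psi$-model since $\psi\in\Gamma$ for all $\Gamma\in S$. To turn $C$ into a hybrid model, define $\pi(i)=\{\beta(i)\}$ for $i\in\Noms\cap\Sigma$ (well-defined by the hypothesis $\beta[\Noms\cap\Sigma]\subseteq S$), and pick an arbitrary singleton in $S$ for the remaining nominals.

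The key step, and where a little care is needed beyond the modal case, is extending the truth lemma to nominals, i.e.\ showing that for $i\in\Noms\cap\Sigma$ and $\Gamma\in S$, $\Gamma\in\pi(i)$ iff $i\in\Gamma$. The forward direction uses $i\in\beta(i)$, which holds by consistency of $\beta$. The backward direction is the crucial use of the definition of $\types{\beta,\psi}$: if $i\in\Gamma$, then $\Gamma$ contains a nominal and therefore $\Gamma\in\beta[\Noms\cap\Sigma]$, so $\Gamma=\beta(j)$ for some $j$; from $i\in\beta(j)$, consistency of $\beta$ yields $\beta(i)=\beta(j)=\Gamma$, so $\Gamma\in\pi(i)$. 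Once the truth lemma is established for nominals, the model clearly satisfies $\beta$, completing the proof. The main obstacle is thus correctly setting up $\types{\beta,\psi}$ so that the backward direction above goes through; this is precisely why the set is defined to include only assigned types and nominal-free types, and why consistency of~$\beta$ is assumed.
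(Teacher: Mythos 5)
Your proposal is correct and follows essentially the same route as the paper: the `only if' direction exhibits a postfixpoint of $\CE_\beta$ consisting of types realized in a hybrid $\psi$-model satisfying $\beta$ (the paper uses all types satisfiable in some such model, which is an immaterial variation), and the `if' direction reuses the model construction of Lemma~\ref{lem:ex-truth} with $\pi(i)=\{\beta(i)\}$, settling the nominal case of the truth lemma exactly as the paper does, via consistency of $\beta$ together with the fact that the only types in $\types{\beta,\psi}$ containing a nominal positively are the assigned ones.
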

\begin{proof}
  Soundness (`only if') follows from
  \begin{equation*}
    R_\beta=\{\type\in \types{\beta,\psi}\mid \type\text{ satisfiable in a hybrid $\psi$-model satisfying~$\beta$}\}
  \end{equation*}
  being a postfixpoint of~$\CE_\beta$; the proof is completely
  analogous to that of Lemma~\ref{lem:realization}.

  To see completeness (`if'), construct a $T$-coalgebra
  $C=(\nu\CE_\beta,\gamma)$ in the same way as in the proof of
  Lemma~\ref{lem:ex-truth}. We turn $C$ into a hybrid model
  $\CM=(C,\pi)$ by putting
  $\pi(i)=\{\type\in\nu\CE_\beta\mid i\in\type\}$, noting that
  $\pi(i)$ is really the singleton $\{\beta(i)\}$ because (i) $\beta$
  is consistent and no type in $\types{\beta,\psi}$ other than the
  $\beta(j)$ (for $j\in\Noms\cap\Sigma$) contains a nominal positively,
  and (ii) $\beta(i)\in\nu\CE_\beta$ by assumption. The truth lemma
  \begin{equation*}
    \Sem{\rho}_C=\hat\rho\cap\nu\CE_\beta=\{\type\in\nu\CE_\beta\mid\rho\in\type\}
  \end{equation*}
  is shown by induction on~$\rho\in\Sigma$. All cases are as in the
  proof of Lemma~\ref{lem:ex-truth}, except for the new case
  $\rho=i\in\Noms$; this case is by construction of~$\pi$.  The truth
  lemma implies that~$\CM$ is a $\psi$-model and satisfies~$\beta$.
\end{proof}
\noindent In summary, we obtain
\begin{theorem}
  If the strict one-step satisfiability problem of~$\Lambda$ is in
  \ExpTime, then satisfiability with global assumptions in the
  coalgebraic hybrid logic~$\Lambda$ is \ExpTime-complete.
\end{theorem}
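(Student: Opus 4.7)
The plan is to assemble the pieces already developed in this section, so the proof is really an orchestration rather than new mathematics. First, I would eliminate satisfaction operators by expressing $@_i\phi$ as $\univbox(i\to\phi)$, and then apply the non-deterministic reduction of Remark~\ref{rem:univ-mod-hybrid} to cast satisfiability of a hybrid formula with the universal modality as satisfiability under global assumptions in a hybrid formula with no $\univbox$. Second, I would reduce a target-formula/global-assumption pair $(\phi_0,\psi)$ to pure global satisfiability via the observation, noted just above the theorem statement, that $\phi_0$ is $\psi$-satisfiable iff $\psi\land(i\to\phi_0)$ is globally satisfiable for a fresh nominal~$i$. Both reductions are NPMV-reductions and hence preserve membership in \ExpTime.

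For the core upper bound it then suffices to decide global satisfiability of a given $\psi$ in exponential time. The plan here is to enumerate the at most exponentially many type assignments $\beta\colon\Noms\cap\Sigma\to\types{\psi}$, discard the inconsistent ones (consistency being polynomial-time checkable), and for each surviving $\beta$ run the restricted type-elimination iterating the functional~$\CE_\beta$ on the set $\types{\beta,\psi}$ until a fixpoint is reached, accepting as soon as $\beta[\Noms\cap\Sigma]\subseteq\nu\CE_\beta$ for some~$\beta$. By the stated Fact, global satisfiability of~$\psi$ is equivalent to the existence of a $\psi$-satisfiable type assignment, and by the lemma immediately preceding the theorem, $\beta$ is $\psi$-satisfiable iff $\beta[\Noms\cap\Sigma]\subseteq\nu\CE_\beta$; this gives correctness outright. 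For the complexity, the outer loop over type assignments is exponential, and each inner fixpoint iteration takes exponential time by exactly the argument of Lemma~\ref{lem:type-elim-time}, since $\types{\beta,\psi}$ has exponential size, each iteration performs at most exponentially many one-step satisfiability checks on inputs $(\phi_\Gamma,\eta_S)$ of linear respectively exponential size, and by assumption strict one-step satisfiability of $\Lambda$ is in \ExpTime.

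Finally, the matching lower bound is inherited: satisfiability under global assumptions is already \ExpTime-hard for the basic relational modal logic~$K$~\cite{FischerLadner79}, which embeds into the coalgebraic hybrid setting whenever $T=\Pow$ and $\Diamond\in\Lambda$ and is in any case irrelevant to the upper-bound claim of the theorem — one simply cites it to turn \ExpTime containment into \ExpTime-completeness.

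The only point where anything could go wrong is bookkeeping: one must ensure that the composition of the two preparatory reductions (satisfaction operators to $\univbox$, target formula to global assumption, $\univbox$ to global assumptions) keeps the input size polynomially bounded and that the existential guess of $U\subseteq\{1,\dots,n\}$ from Remark~\ref{rem:univ-mod-hybrid} can be absorbed into the outer enumeration over type assignments without blowing past \ExpTime. Since both guesses are of at most exponentially many options and the per-guess cost is \ExpTime, this is routine; no genuine obstacle arises.
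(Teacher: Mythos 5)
Your proposal is correct and follows essentially the same route as the paper: eliminate satisfaction operators via the universal modality and Remark~\ref{rem:univ-mod-hybrid}, reduce to global satisfiability with a fresh nominal, enumerate consistent type assignments~$\beta$, and decide each by type elimination over $\types{\beta,\psi}$ using $\CE_\beta$, with correctness given by the preceding lemma and the complexity analysis of Lemma~\ref{lem:type-elim-time}. The bookkeeping points you flag (polynomial size of the composed reductions, absorbing the exponentially many guesses of~$U$ and of~$\beta$ into the outer loop) are indeed routine and handled the same way in the paper.
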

\begin{rem}
  The \ExpTime algorithm described above is not, of course, one that
  one would wish to use in practice. Specifically, while the
  computation of $\nu\CE_\beta$ for a given consistent type assignment
  can be made practical along the lines of the global caching
  algorithm for the nominal-free case discussed in
  Sections~\ref{sec:caching} and~\ref{sec:concrete-alg}, the initial
  reductions -- elimination of satisfaction operators and, more
  importantly, going through all consistent type assignments -- will
  consistently incur exponential cost. We leave the design of a more
  practical algorithm for coalgebraic hybrid logic with global
  assumptions for future work. In particular, adapting the global
  caching algorithm described in Section~\ref{sec:caching} to this
  setting remains an unsolved challenge: e.g.\ types such as
  $\{i,\phi\}$ and $\{i,\neg \phi\}$, where $i$ is a nominal and
  $\phi$ is any formula such that both $\phi$ and $\neg\phi$ are
  satisfiable, are clearly both satisfiable but cannot both form part
  of a model. The generic algorithm we presented in earlier work with
  Gor\'e~\cite{GoreEA10b} solves this problem by gathering up ABoxes
  along strategies in a tableau game (so that no strategy will win
  that uses both types mentioned above); however, the algorithm
  requires a complete set of tableau-style rules, which is not
  currently available for our two main examples.
\end{rem}
\noindent We record the instantiation of the generic result to our key
examples explicitly:
\begin{example}
  Reasoning with global assumptions in \emph{Presburger hybrid logic}
  and in \emph{probabilistic hybrid logic with polynomial
    inequalities}, i.e.\ in the extensions with nominals of the
  corresponding modal logics as defined in
  Sections~\ref{sec:presburger} and~\ref{sec:prob}, is in \ExpTime.
\end{example}

\section{Conclusions}

We have proved a generic upper bound \ExpTime for reasoning with
global assumptions in coalgebraic modal and hybrid logics, based on a
semantic approach centered around \emph{one-step satisfiability
  checking}. This approach is particularly suitable for logics for
which no tractable sets of modal tableau rules are known; our core
examples of this type are Presburger modal logic and probabilistic
modal logic with polynomial inequalities. The upper complexity bounds
that we obtain for these logics by instantiating our generic results
appear to be new. The upper bound is based on a type elimination
algorithm; additionally, for the purely modal case (i.e.\ in the
absence of nominals), we have designed a global caching algorithm that
offers a perspective for efficient reasoning in practice.

In earlier work on upper bounds \PSpace for plain satisfiability
checking (i.e.~reasoning in the absence of global
assumptions)~\cite{SchroderPattinson08d}, we have used the more
general setting of coalgebraic modal logic over \emph{copointed}
functors. This has allowed covering logics with frame conditions that
are non-iterative~\cite{Lewis74}, i.e.~do not nest modal operators but
possibly have top-level propositional variables, such as the $T$-axiom
$\Box a\to a$ that defines reflexive relational frames; an important
example of this type is Elgesem's logic of agency~\cite{Elgesem97}. We
leave a corresponding generalization of our present results to future
work. A further key point that remains for future research is to
extend the global caching algorithm to cover nominals and satisfaction
operators, combining the methods developed in the present paper with
ideas underlying the existing rule-based global caching algorithm for
coalgebraic hybrid logic~\cite{GoreEA10b}.

\begin{acks}
  We wish to thank Erwin R.\ Catesbeiana for remarks on
  unsatisfiability. Work of the third author supported by the
  \grantsponsor{dfg}{DFG}{www.dfg.de} under the research grant
  \grantnum{dfg}{ProbDL2 (SCHR 1118/6-2)}. 
\end{acks}

\bibliographystyle{ACM-Reference-Format}
\bibliography{coalgml}


\begin{thebibliography}{52}


\ifx \showCODEN    \undefined \def \showCODEN     #1{\unskip}     \fi
\ifx \showDOI      \undefined \def \showDOI       #1{#1}\fi
\ifx \showISBNx    \undefined \def \showISBNx     #1{\unskip}     \fi
\ifx \showISBNxiii \undefined \def \showISBNxiii  #1{\unskip}     \fi
\ifx \showISSN     \undefined \def \showISSN      #1{\unskip}     \fi
\ifx \showLCCN     \undefined \def \showLCCN      #1{\unskip}     \fi
\ifx \shownote     \undefined \def \shownote      #1{#1}          \fi
\ifx \showarticletitle \undefined \def \showarticletitle #1{#1}   \fi
\ifx \showURL      \undefined \def \showURL       {\relax}        \fi
\providecommand\bibfield[2]{#2}
\providecommand\bibinfo[2]{#2}
\providecommand\natexlab[1]{#1}
\providecommand\showeprint[2][]{arXiv:#2}

\bibitem[\protect\citeauthoryear{Areces, Blackburn, and Marx}{Areces
  et~al\mbox{.}}{1999}]%
        {ArecesEA99}
\bibfield{author}{\bibinfo{person}{Carlos Areces}, \bibinfo{person}{Patrick
  Blackburn}, {and} \bibinfo{person}{Maarten Marx}.}
  \bibinfo{year}{1999}\natexlab{}.
\newblock \showarticletitle{A Road-Map on Complexity for Hybrid Logics}. In
  \bibinfo{booktitle}{\emph{Computer Science Logic, {CSL} 1999}}
  \emph{(\bibinfo{series}{LNCS})},
  \bibfield{editor}{\bibinfo{person}{J{\"{o}}rg Flum} {and}
  \bibinfo{person}{Mario Rodr{\'{\i}}guez{-}Artalejo}} (Eds.),
  Vol.~\bibinfo{volume}{1683}. \bibinfo{publisher}{Springer},
  \bibinfo{pages}{307--321}.
\newblock


\bibitem[\protect\citeauthoryear{Areces and ten Cate}{Areces and ten
  Cate}{2007}]%
        {ArecesTenCate07}
\bibfield{author}{\bibinfo{person}{Carlos Areces} {and} \bibinfo{person}{Balder
  ten Cate}.} \bibinfo{year}{2007}\natexlab{}.
\newblock \showarticletitle{Hybrid logics}.
\newblock In \bibinfo{booktitle}{\emph{Handbook of Modal Logic}},
  \bibfield{editor}{\bibinfo{person}{P.~Blackburn}, \bibinfo{person}{J.~van
  Benthem}, {and} \bibinfo{person}{F.~Wolter}} (Eds.).
  \bibinfo{publisher}{Elsevier}, \bibinfo{pages}{821--868}.
\newblock


\bibitem[\protect\citeauthoryear{Awodey}{Awodey}{2010}]%
        {Awodey10}
\bibfield{author}{\bibinfo{person}{Steve Awodey}.}
  \bibinfo{year}{2010}\natexlab{}.
\newblock \bibinfo{booktitle}{\emph{Category Theory} (\bibinfo{edition}{2nd}
  ed.)}.
\newblock \bibinfo{publisher}{Oxford University Press}.
\newblock


\bibitem[\protect\citeauthoryear{Baader, Calvanese, McGuinness, Nardi, and
  Patel-Schneider}{Baader et~al\mbox{.}}{2003}]%
        {BaaderEA03}
\bibfield{editor}{\bibinfo{person}{Franz Baader}, \bibinfo{person}{Diego
  Calvanese}, \bibinfo{person}{Deborah McGuinness}, \bibinfo{person}{Daniele
  Nardi}, {and} \bibinfo{person}{Peter Patel-Schneider}} (Eds.).
  \bibinfo{year}{2003}\natexlab{}.
\newblock \bibinfo{booktitle}{\emph{The Description Logic Handbook}}.
\newblock \bibinfo{publisher}{Cambridge University Press}.
\newblock
\showISBNx{0-521-78176-0}


\bibitem[\protect\citeauthoryear{Baader and Sattler}{Baader and
  Sattler}{1996}]%
        {BaaderSattler96}
\bibfield{author}{\bibinfo{person}{Franz Baader} {and} \bibinfo{person}{Ulrike
  Sattler}.} \bibinfo{year}{1996}\natexlab{}.
\newblock \showarticletitle{Description Logics with Symbolic Number
  Restrictions}. In \bibinfo{booktitle}{\emph{European Conf.\ Artificial
  Intelligence, ECAI 1996}}, \bibfield{editor}{\bibinfo{person}{Wolfgang
  Wahlster}} (Ed.). \bibinfo{publisher}{Wiley}, \bibinfo{pages}{283--287}.
\newblock


\bibitem[\protect\citeauthoryear{B{\'{a}}rcenas and Lavalle}{B{\'{a}}rcenas and
  Lavalle}{2013}]%
        {BarcenasLavalle13}
\bibfield{author}{\bibinfo{person}{Everardo B{\'{a}}rcenas} {and}
  \bibinfo{person}{Jes{\'{u}}s Lavalle}.} \bibinfo{year}{2013}\natexlab{}.
\newblock \showarticletitle{Expressive Reasoning on Tree Structures: Recursion,
  Inverse Programs, {P}resburger Constraints and Nominals}. In
  \bibinfo{booktitle}{\emph{Advances in Artificial Intelligence and its
  Applications, {MICAI} 2013}} \emph{(\bibinfo{series}{LNCS})},
  \bibfield{editor}{\bibinfo{person}{F{\'{e}}lix~Castro Espinoza},
  \bibinfo{person}{Alexander~F. Gelbukh}, {and} \bibinfo{person}{Miguel
  Gonz{\'{a}}lez}} (Eds.), Vol.~\bibinfo{volume}{8265}.
  \bibinfo{publisher}{Springer}, \bibinfo{pages}{80--91}.
\newblock
\showISBNx{978-3-642-45113-3}


\bibitem[\protect\citeauthoryear{Blackburn, de~Rijke, and Venema}{Blackburn
  et~al\mbox{.}}{2001}]%
        {BlackburnEA01}
\bibfield{author}{\bibinfo{person}{Patrick Blackburn}, \bibinfo{person}{Maarten
  de Rijke}, {and} \bibinfo{person}{Yde Venema}.}
  \bibinfo{year}{2001}\natexlab{}.
\newblock \bibinfo{booktitle}{\emph{Modal Logic}}.
\newblock \bibinfo{publisher}{Cambridge University Press}.
\newblock


\bibitem[\protect\citeauthoryear{Book, Long, and Selman}{Book
  et~al\mbox{.}}{1984}]%
        {book-long-selman:npmv}
\bibfield{author}{\bibinfo{person}{Ronald Book}, \bibinfo{person}{Timothy
  Long}, {and} \bibinfo{person}{Alan Selman}.} \bibinfo{year}{1984}\natexlab{}.
\newblock \showarticletitle{Quantitative Relativizations of Complexity
  Classes}.
\newblock \bibinfo{journal}{\emph{{SIAM} J.\ Comput.}} \bibinfo{volume}{13},
  \bibinfo{number}{3} (\bibinfo{year}{1984}), \bibinfo{pages}{461--487}.
\newblock


\bibitem[\protect\citeauthoryear{Canny}{Canny}{1988}]%
        {Canny88}
\bibfield{author}{\bibinfo{person}{John Canny}.}
  \bibinfo{year}{1988}\natexlab{}.
\newblock \showarticletitle{Some Algebraic and Geometric Computations in
  {PSPACE}}. In \bibinfo{booktitle}{\emph{Symposium on Theory of Computing,
  STOC 1988}}. \bibinfo{publisher}{ACM}, \bibinfo{pages}{460--467}.
\newblock


\bibitem[\protect\citeauthoryear{C{\^{\i}}rstea, Kurz, Pattinson,
  Schr{\"{o}}der, and Venema}{C{\^{\i}}rstea et~al\mbox{.}}{2011}]%
        {CirsteaEA11}
\bibfield{author}{\bibinfo{person}{Corina C{\^{\i}}rstea},
  \bibinfo{person}{Alexander Kurz}, \bibinfo{person}{Dirk Pattinson},
  \bibinfo{person}{Lutz Schr{\"{o}}der}, {and} \bibinfo{person}{Yde Venema}.}
  \bibinfo{year}{2011}\natexlab{}.
\newblock \showarticletitle{Modal Logics are Coalgebraic}.
\newblock \bibinfo{journal}{\emph{Comput.\ J.}} \bibinfo{volume}{54},
  \bibinfo{number}{1} (\bibinfo{year}{2011}), \bibinfo{pages}{31--41}.
\newblock


\bibitem[\protect\citeauthoryear{D'Agostino and Visser}{D'Agostino and
  Visser}{2002}]%
        {DAgostinoVisser02}
\bibfield{author}{\bibinfo{person}{Giovanna D'Agostino} {and}
  \bibinfo{person}{Albert Visser}.} \bibinfo{year}{2002}\natexlab{}.
\newblock \showarticletitle{Finality regained: {A} coalgebraic study of
  Scott-sets and multisets}.
\newblock \bibinfo{journal}{\emph{Arch.\ Math.\ Log.}} \bibinfo{volume}{41},
  \bibinfo{number}{3} (\bibinfo{year}{2002}), \bibinfo{pages}{267--298}.
\newblock


\bibitem[\protect\citeauthoryear{Demri and Lugiez}{Demri and Lugiez}{2006}]%
        {DemriLugiez06}
\bibfield{author}{\bibinfo{person}{St{\'{e}}phane Demri} {and}
  \bibinfo{person}{Denis Lugiez}.} \bibinfo{year}{2006}\natexlab{}.
\newblock \showarticletitle{Presburger Modal Logic Is PSPACE-Complete}. In
  \bibinfo{booktitle}{\emph{Automated Reasoning, {IJCAR} 2006}}
  \emph{(\bibinfo{series}{LNCS})}, \bibfield{editor}{\bibinfo{person}{Ulrich
  Furbach} {and} \bibinfo{person}{Natarajan Shankar}} (Eds.),
  Vol.~\bibinfo{volume}{4130}. \bibinfo{publisher}{Springer},
  \bibinfo{pages}{541--556}.
\newblock
\showISBNx{3-540-37187-7}


\bibitem[\protect\citeauthoryear{Demri and Lugiez}{Demri and Lugiez}{2010}]%
        {DemriLugiez10}
\bibfield{author}{\bibinfo{person}{St{\'e}phane Demri} {and}
  \bibinfo{person}{Denis Lugiez}.} \bibinfo{year}{2010}\natexlab{}.
\newblock \showarticletitle{Complexity of Modal Logics with {P}resburger
  Constraints}.
\newblock \bibinfo{journal}{\emph{J.\ Applied Logic}}  \bibinfo{volume}{8}
  (\bibinfo{year}{2010}), \bibinfo{pages}{233--252}.
\newblock


\bibitem[\protect\citeauthoryear{Eisenbrand and Shmonin}{Eisenbrand and
  Shmonin}{2006}]%
        {EisenbrandShmonin06}
\bibfield{author}{\bibinfo{person}{Friedrich Eisenbrand} {and}
  \bibinfo{person}{Gennady Shmonin}.} \bibinfo{year}{2006}\natexlab{}.
\newblock \showarticletitle{Carath{\'e}odory bounds for integer cones}.
\newblock \bibinfo{journal}{\emph{Oper.\ Res.\ Lett.}} \bibinfo{volume}{34},
  \bibinfo{number}{5} (\bibinfo{year}{2006}), \bibinfo{pages}{564--568}.
\newblock


\bibitem[\protect\citeauthoryear{Elgesem}{Elgesem}{1997}]%
        {Elgesem97}
\bibfield{author}{\bibinfo{person}{Dag Elgesem}.}
  \bibinfo{year}{1997}\natexlab{}.
\newblock \showarticletitle{The modal logic of agency}.
\newblock \bibinfo{journal}{\emph{Nordic J.\ Philos.\ Logic}}
  \bibinfo{volume}{2} (\bibinfo{year}{1997}), \bibinfo{pages}{1--46}.
\newblock


\bibitem[\protect\citeauthoryear{Fagin and Halpern}{Fagin and Halpern}{1994}]%
        {FaginHalpern94}
\bibfield{author}{\bibinfo{person}{Ronald Fagin} {and} \bibinfo{person}{Joseph
  Halpern}.} \bibinfo{year}{1994}\natexlab{}.
\newblock \showarticletitle{Reasoning about knowledge and probability}.
\newblock \bibinfo{journal}{\emph{J.\ ACM}} \bibinfo{volume}{41},
  \bibinfo{number}{2} (\bibinfo{year}{1994}), \bibinfo{pages}{340--367}.
\newblock


\bibitem[\protect\citeauthoryear{Fagin, Halpern, and Megiddo}{Fagin
  et~al\mbox{.}}{1990}]%
        {FaginHalpernMegiddo90}
\bibfield{author}{\bibinfo{person}{Ronald Fagin}, \bibinfo{person}{Joseph
  Halpern}, {and} \bibinfo{person}{Nimrod Megiddo}.}
  \bibinfo{year}{1990}\natexlab{}.
\newblock \showarticletitle{A logic for reasoning about probabilities}.
\newblock \bibinfo{journal}{\emph{Inform.\ Comput.}}  \bibinfo{volume}{87}
  (\bibinfo{year}{1990}), \bibinfo{pages}{78--128}.
\newblock


\bibitem[\protect\citeauthoryear{Fine}{Fine}{1972}]%
        {Fine72}
\bibfield{author}{\bibinfo{person}{Kit Fine}.} \bibinfo{year}{1972}\natexlab{}.
\newblock \showarticletitle{In so many possible worlds}.
\newblock \bibinfo{journal}{\emph{Notre Dame J.\ Form.\ Log.}}
  \bibinfo{volume}{13} (\bibinfo{year}{1972}), \bibinfo{pages}{516--520}.
\newblock


\bibitem[\protect\citeauthoryear{Fischer and Ladner}{Fischer and
  Ladner}{1979}]%
        {FischerLadner79}
\bibfield{author}{\bibinfo{person}{Michael Fischer} {and}
  \bibinfo{person}{Richard Ladner}.} \bibinfo{year}{1979}\natexlab{}.
\newblock \showarticletitle{Propositional Dynamic Logic of Regular Programs}.
\newblock \bibinfo{journal}{\emph{J.\ Comput.\ Syst.\ Sci.}}
  \bibinfo{volume}{18}, \bibinfo{number}{2} (\bibinfo{year}{1979}),
  \bibinfo{pages}{194--211}.
\newblock


\bibitem[\protect\citeauthoryear{Fischer and Rosenberg}{Fischer and
  Rosenberg}{1968}]%
        {FischerRosenberg68}
\bibfield{author}{\bibinfo{person}{Michael Fischer} {and}
  \bibinfo{person}{Arnold Rosenberg}.} \bibinfo{year}{1968}\natexlab{}.
\newblock \showarticletitle{Limited Random Access Turing Machines}. In
  \bibinfo{booktitle}{\emph{Switching and Automata Theory, SWAT (FOCS) 1968}}.
  \bibinfo{publisher}{{IEEE} Computer Society}, \bibinfo{pages}{356--367}.
\newblock


\bibitem[\protect\citeauthoryear{Goranko and Passy}{Goranko and Passy}{1992}]%
        {GorankoPassy92}
\bibfield{author}{\bibinfo{person}{Valentin Goranko} {and}
  \bibinfo{person}{Solomon Passy}.} \bibinfo{year}{1992}\natexlab{}.
\newblock \showarticletitle{Using the Universal Modality: Gains and Questions}.
\newblock \bibinfo{journal}{\emph{J.\ Log.\ Comput.}}  \bibinfo{volume}{2}
  (\bibinfo{year}{1992}), \bibinfo{pages}{5--30}.
\newblock


\bibitem[\protect\citeauthoryear{Gor{\'e}, Kupke, and Pattinson}{Gor{\'e}
  et~al\mbox{.}}{2010a}]%
        {GoreEA10a}
\bibfield{author}{\bibinfo{person}{Rajeev Gor{\'e}}, \bibinfo{person}{Clemens
  Kupke}, {and} \bibinfo{person}{Dirk Pattinson}.}
  \bibinfo{year}{2010}\natexlab{a}.
\newblock \showarticletitle{Optimal Tableau Algorithms for Coalgebraic Logics}.
  In \bibinfo{booktitle}{\emph{Tools and Algorithms for the Construction and
  Analysis of Systems, TACAS 2010}} \emph{(\bibinfo{series}{LNCS})},
  Vol.~\bibinfo{volume}{6015}. \bibinfo{publisher}{Springer},
  \bibinfo{pages}{114--128}.
\newblock


\bibitem[\protect\citeauthoryear{Gor{\'e}, Kupke, Pattinson, and
  Schr{\"o}der}{Gor{\'e} et~al\mbox{.}}{2010b}]%
        {GoreEA10b}
\bibfield{author}{\bibinfo{person}{Rajeev Gor{\'e}}, \bibinfo{person}{Clemens
  Kupke}, \bibinfo{person}{Dirk Pattinson}, {and} \bibinfo{person}{Lutz
  Schr{\"o}der}.} \bibinfo{year}{2010}\natexlab{b}.
\newblock \showarticletitle{Global Caching for Coalgebraic Description Logics}.
  In \bibinfo{booktitle}{\emph{Automated Reasoning, IJCAR 2010}}
  \emph{(\bibinfo{series}{LNCS})},
  \bibfield{editor}{\bibinfo{person}{J{\"u}rgen Giesl} {and}
  \bibinfo{person}{Reiner H{\"a}hnle}} (Eds.), Vol.~\bibinfo{volume}{6173}.
  \bibinfo{publisher}{Springer}, \bibinfo{pages}{46--60}.
\newblock
\showISBNx{978-3-642-14202-4}


\bibitem[\protect\citeauthoryear{Gor{\'{e}} and Nguyen}{Gor{\'{e}} and
  Nguyen}{2013}]%
        {GoreNguyen13}
\bibfield{author}{\bibinfo{person}{Rajeev Gor{\'{e}}} {and}
  \bibinfo{person}{Linh~Anh Nguyen}.} \bibinfo{year}{2013}\natexlab{}.
\newblock \showarticletitle{ExpTime Tableaux for {$\mathcal{ALC}$} Using Sound
  Global Caching}.
\newblock \bibinfo{journal}{\emph{J.\ Autom.\ Reasoning}} \bibinfo{volume}{50},
  \bibinfo{number}{4} (\bibinfo{year}{2013}), \bibinfo{pages}{355--381}.
\newblock


\bibitem[\protect\citeauthoryear{Gor{\'{e}} and Postniece}{Gor{\'{e}} and
  Postniece}{2008}]%
        {Gore:2008:EEG}
\bibfield{author}{\bibinfo{person}{Rajeev Gor{\'{e}}} {and}
  \bibinfo{person}{Linda Postniece}.} \bibinfo{year}{2008}\natexlab{}.
\newblock \showarticletitle{An Experimental Evaluation of Global Caching for
  {$\mathcal{ALC}$} (System Description)}. In
  \bibinfo{booktitle}{\emph{Automated Reasoning, {IJCAR} 2008}}
  \emph{(\bibinfo{series}{LNCS})},
  \bibfield{editor}{\bibinfo{person}{Alessandro Armando},
  \bibinfo{person}{Peter Baumgartner}, {and} \bibinfo{person}{Gilles Dowek}}
  (Eds.), Vol.~\bibinfo{volume}{5195}. \bibinfo{publisher}{Springer},
  \bibinfo{pages}{299--305}.
\newblock


\bibitem[\protect\citeauthoryear{Guti{\'{e}}rrez{-}Basulto, Jung, Lutz, and
  Schr{\"{o}}der}{Guti{\'{e}}rrez{-}Basulto et~al\mbox{.}}{2017}]%
        {GutierrezBasultoEA17}
\bibfield{author}{\bibinfo{person}{V{\'{\i}}ctor Guti{\'{e}}rrez{-}Basulto},
  \bibinfo{person}{Jean~Christoph Jung}, \bibinfo{person}{Carsten Lutz}, {and}
  \bibinfo{person}{Lutz Schr{\"{o}}der}.} \bibinfo{year}{2017}\natexlab{}.
\newblock \showarticletitle{Probabilistic Description Logics for Subjective
  Uncertainty}.
\newblock \bibinfo{journal}{\emph{J.\ Artif.\ Intell.\ Res.}}
  \bibinfo{volume}{58} (\bibinfo{year}{2017}), \bibinfo{pages}{1--66}.
\newblock
\urldef\tempurl%
\url{https://doi.org/10.1613/jair.5222}
\showDOI{\tempurl}


\bibitem[\protect\citeauthoryear{Hausmann and Schr{\"{o}}der}{Hausmann and
  Schr{\"{o}}der}{2019}]%
        {HausmannSchroder19}
\bibfield{author}{\bibinfo{person}{Daniel Hausmann} {and} \bibinfo{person}{Lutz
  Schr{\"{o}}der}.} \bibinfo{year}{2019}\natexlab{}.
\newblock \showarticletitle{Optimal Satisfiability Checking for Arithmetic
  {$\mu$}--Calculi}. In \bibinfo{booktitle}{\emph{Foundations of Software
  Science and Computation Structures, {FOSSACS} 2019}}
  \emph{(\bibinfo{series}{LNCS})}, \bibfield{editor}{\bibinfo{person}{Mikolaj
  Bojanczyk} {and} \bibinfo{person}{Alex Simpson}} (Eds.),
  Vol.~\bibinfo{volume}{11425}. \bibinfo{publisher}{Springer},
  \bibinfo{pages}{277--294}.
\newblock


\bibitem[\protect\citeauthoryear{Heifetz and Mongin}{Heifetz and
  Mongin}{2001}]%
        {HeifetzMongin01}
\bibfield{author}{\bibinfo{person}{Aviad Heifetz} {and}
  \bibinfo{person}{Philippe Mongin}.} \bibinfo{year}{2001}\natexlab{}.
\newblock \showarticletitle{Probabilistic logic for type spaces}.
\newblock \bibinfo{journal}{\emph{Games Econ.\ Behav.}}  \bibinfo{volume}{35}
  (\bibinfo{year}{2001}), \bibinfo{pages}{31--53}.
\newblock


\bibitem[\protect\citeauthoryear{Kupke and Pattinson}{Kupke and
  Pattinson}{2010}]%
        {Kupke:2010:MLL}
\bibfield{author}{\bibinfo{person}{Clemens Kupke} {and} \bibinfo{person}{Dirk
  Pattinson}.} \bibinfo{year}{2010}\natexlab{}.
\newblock \showarticletitle{On Modal Logics of Linear Inequalities}. In
  \bibinfo{booktitle}{\emph{Advances in Modal Logic, AiML 2010}},
  \bibfield{editor}{\bibinfo{person}{Lev Beklemishev},
  \bibinfo{person}{Valentin Goranko}, {and} \bibinfo{person}{Valentin
  Shehtman}} (Eds.). \bibinfo{publisher}{College Publications},
  \bibinfo{pages}{235--255}.
\newblock
\showISBNx{978-1-84890-013-4}


\bibitem[\protect\citeauthoryear{Kupke, Pattinson, and Schr{\"{o}}der}{Kupke
  et~al\mbox{.}}{2015}]%
        {KupkeEA15}
\bibfield{author}{\bibinfo{person}{Clemens Kupke}, \bibinfo{person}{Dirk
  Pattinson}, {and} \bibinfo{person}{Lutz Schr{\"{o}}der}.}
  \bibinfo{year}{2015}\natexlab{}.
\newblock \showarticletitle{Reasoning with Global Assumptions in Arithmetic
  Modal Logics}. In \bibinfo{booktitle}{\emph{Fundamentals of Computation
  Theory, {FCT} 2015}} \emph{(\bibinfo{series}{LNCS})},
  \bibfield{editor}{\bibinfo{person}{Adrian Kosowski} {and}
  \bibinfo{person}{Igor Walukiewicz}} (Eds.), Vol.~\bibinfo{volume}{9210}.
  \bibinfo{publisher}{Springer}, \bibinfo{pages}{367--380}.
\newblock


\bibitem[\protect\citeauthoryear{Ladner}{Ladner}{1977}]%
        {Ladner77}
\bibfield{author}{\bibinfo{person}{Richard Ladner}.}
  \bibinfo{year}{1977}\natexlab{}.
\newblock \showarticletitle{The Computational Complexity of Provability in
  Systems of Modal Propositional Logic}.
\newblock \bibinfo{journal}{\emph{{SIAM} J.\ Comput.}} \bibinfo{volume}{6},
  \bibinfo{number}{3} (\bibinfo{year}{1977}), \bibinfo{pages}{467--480}.
\newblock
\urldef\tempurl%
\url{https://doi.org/10.1137/0206033}
\showDOI{\tempurl}


\bibitem[\protect\citeauthoryear{Larsen and Skou}{Larsen and Skou}{1991}]%
        {LarsenSkou91}
\bibfield{author}{\bibinfo{person}{Kim Larsen} {and} \bibinfo{person}{Arne
  Skou}.} \bibinfo{year}{1991}\natexlab{}.
\newblock \showarticletitle{Bisimulation through probabilistic testing}.
\newblock \bibinfo{journal}{\emph{Inf.\ Comput.}} \bibinfo{volume}{94},
  \bibinfo{number}{1} (\bibinfo{year}{1991}), \bibinfo{pages}{1--28}.
\newblock
\showISSN{0890-5401}


\bibitem[\protect\citeauthoryear{Lewis}{Lewis}{1974}]%
        {Lewis74}
\bibfield{author}{\bibinfo{person}{David Lewis}.}
  \bibinfo{year}{1974}\natexlab{}.
\newblock \showarticletitle{Intensional logics without iterative axioms}.
\newblock \bibinfo{journal}{\emph{J.\ Philos.\ Log.}} \bibinfo{volume}{3},
  \bibinfo{number}{4} (\bibinfo{year}{1974}), \bibinfo{pages}{457--466}.
\newblock


\bibitem[\protect\citeauthoryear{Liu and Smolka}{Liu and Smolka}{1998}]%
        {lism98:simp}
\bibfield{author}{\bibinfo{person}{Xinxin Liu} {and} \bibinfo{person}{Scott
  Smolka}.} \bibinfo{year}{1998}\natexlab{}.
\newblock \showarticletitle{Simple linear-time algorithms for minimal fixed
  points}. In \bibinfo{booktitle}{\emph{Automata, Languages and Programming,
  ICALP 1998}} \emph{(\bibinfo{series}{LNCS})},
  \bibfield{editor}{\bibinfo{person}{Kim Larsen}, \bibinfo{person}{Sven Skyum},
  {and} \bibinfo{person}{Glynn Winskel}} (Eds.), Vol.~\bibinfo{volume}{1443}.
  \bibinfo{publisher}{Springer}, \bibinfo{pages}{53--66}.
\newblock


\bibitem[\protect\citeauthoryear{Mio}{Mio}{2011}]%
        {Mio11}
\bibfield{author}{\bibinfo{person}{Matteo Mio}.}
  \bibinfo{year}{2011}\natexlab{}.
\newblock \showarticletitle{Probabilistic Modal {$\mu$}-Calculus with
  Independent Product}. In \bibinfo{booktitle}{\emph{Foundations of Software
  Science and Computational Structures, {FOSSACS} 2011}}
  \emph{(\bibinfo{series}{LNCS})}, \bibfield{editor}{\bibinfo{person}{Martin
  Hofmann}} (Ed.), Vol.~\bibinfo{volume}{6604}. \bibinfo{publisher}{Springer},
  \bibinfo{pages}{290--304}.
\newblock


\bibitem[\protect\citeauthoryear{Myers, Pattinson, and Schr{\"o}der}{Myers
  et~al\mbox{.}}{2009}]%
        {MyersEA09}
\bibfield{author}{\bibinfo{person}{Rob Myers}, \bibinfo{person}{Dirk
  Pattinson}, {and} \bibinfo{person}{Lutz Schr{\"o}der}.}
  \bibinfo{year}{2009}\natexlab{}.
\newblock \showarticletitle{Coalgebraic Hybrid Logic}. In
  \bibinfo{booktitle}{\emph{Foundations of Software Science and Computation
  Structures, FoSSaCS 2009}} \emph{(\bibinfo{series}{LNCS})},
  \bibfield{editor}{\bibinfo{person}{Luca de~Alfaro}} (Ed.),
  Vol.~\bibinfo{volume}{5504}. \bibinfo{publisher}{Springer},
  \bibinfo{pages}{137--151}.
\newblock


\bibitem[\protect\citeauthoryear{Pacuit and Salame}{Pacuit and Salame}{2004}]%
        {PacuitSalame04}
\bibfield{author}{\bibinfo{person}{Eric Pacuit} {and} \bibinfo{person}{Samer
  Salame}.} \bibinfo{year}{2004}\natexlab{}.
\newblock \showarticletitle{Majority Logic}. In
  \bibinfo{booktitle}{\emph{Principles of Knowledge Representation and
  Reasoning, KR 2004}}, \bibfield{editor}{\bibinfo{person}{Didier Dubois},
  \bibinfo{person}{Christopher~A. Welty}, {and} \bibinfo{person}{Mary-Anne
  Williams}} (Eds.). \bibinfo{publisher}{AAAI Press},
  \bibinfo{pages}{598--605}.
\newblock


\bibitem[\protect\citeauthoryear{Papadimitriou}{Papadimitriou}{1981}]%
        {Papadimitriou81}
\bibfield{author}{\bibinfo{person}{Christos Papadimitriou}.}
  \bibinfo{year}{1981}\natexlab{}.
\newblock \showarticletitle{On the complexity of integer programming}.
\newblock \bibinfo{journal}{\emph{J.\ ACM}}  \bibinfo{volume}{28}
  (\bibinfo{year}{1981}), \bibinfo{pages}{765--768}.
\newblock


\bibitem[\protect\citeauthoryear{Pattinson}{Pattinson}{2004}]%
        {Pattinson04}
\bibfield{author}{\bibinfo{person}{Dirk Pattinson}.}
  \bibinfo{year}{2004}\natexlab{}.
\newblock \showarticletitle{Expressive Logics for Coalgebras via Terminal
  Sequence Induction}.
\newblock \bibinfo{journal}{\emph{Notre Dame J.\ Formal Logic}}
  \bibinfo{volume}{45} (\bibinfo{year}{2004}), \bibinfo{pages}{19--33}.
\newblock


\bibitem[\protect\citeauthoryear{Pratt}{Pratt}{1979}]%
        {Pratt79}
\bibfield{author}{\bibinfo{person}{Vaughan Pratt}.}
  \bibinfo{year}{1979}\natexlab{}.
\newblock \showarticletitle{Models of Program Logics}. In
  \bibinfo{booktitle}{\emph{Foundations of Computer Science, FOCS 1979}}.
  \bibinfo{publisher}{{IEEE} Comp.\ Soc.}, \bibinfo{pages}{115--122}.
\newblock


\bibitem[\protect\citeauthoryear{Rutten}{Rutten}{2000}]%
        {Rutten00}
\bibfield{author}{\bibinfo{person}{Jan Rutten}.}
  \bibinfo{year}{2000}\natexlab{}.
\newblock \showarticletitle{Universal Coalgebra: A Theory of Systems}.
\newblock \bibinfo{journal}{\emph{Theor.\ Comput.\ Sci.}}
  \bibinfo{volume}{249} (\bibinfo{year}{2000}), \bibinfo{pages}{3--80}.
\newblock


\bibitem[\protect\citeauthoryear{Schrijver}{Schrijver}{1986}]%
        {Schrijver86}
\bibfield{author}{\bibinfo{person}{Alexander Schrijver}.}
  \bibinfo{year}{1986}\natexlab{}.
\newblock \bibinfo{booktitle}{\emph{Theory of linear and integer programming}}.
\newblock \bibinfo{publisher}{Wiley Interscience}.
\newblock


\bibitem[\protect\citeauthoryear{Schr{\"o}der}{Schr{\"o}der}{2007}]%
        {Schroder07}
\bibfield{author}{\bibinfo{person}{Lutz Schr{\"o}der}.}
  \bibinfo{year}{2007}\natexlab{}.
\newblock \showarticletitle{A Finite Model Construction for Coalgebraic Modal
  Logic}.
\newblock \bibinfo{journal}{\emph{J.\ Log.\ Algebr.\ Prog.}}
  \bibinfo{volume}{73} (\bibinfo{year}{2007}), \bibinfo{pages}{97--110}.
\newblock


\bibitem[\protect\citeauthoryear{Schr{\"{o}}der}{Schr{\"{o}}der}{2008}]%
        {Schroder08}
\bibfield{author}{\bibinfo{person}{Lutz Schr{\"{o}}der}.}
  \bibinfo{year}{2008}\natexlab{}.
\newblock \showarticletitle{Expressivity of coalgebraic modal logic: The limits
  and beyond}.
\newblock \bibinfo{journal}{\emph{Theor.\ Comput.\ Sci.}}
  \bibinfo{volume}{390}, \bibinfo{number}{2-3} (\bibinfo{year}{2008}),
  \bibinfo{pages}{230--247}.
\newblock


\bibitem[\protect\citeauthoryear{Schr{\"{o}}der and Pattinson}{Schr{\"{o}}der
  and Pattinson}{2006}]%
        {SchroderPattinson06}
\bibfield{author}{\bibinfo{person}{Lutz Schr{\"{o}}der} {and}
  \bibinfo{person}{Dirk Pattinson}.} \bibinfo{year}{2006}\natexlab{}.
\newblock \showarticletitle{{PSPACE} Bounds for Rank-1 Modal Logics}. In
  \bibinfo{booktitle}{\emph{Logic in Computer Science, LICS 2006}}.
  \bibinfo{publisher}{{IEEE} Comp.\ Soc.}, \bibinfo{pages}{231--242}.
\newblock


\bibitem[\protect\citeauthoryear{Schr{\"o}der and Pattinson}{Schr{\"o}der and
  Pattinson}{2008}]%
        {SchroderPattinson08d}
\bibfield{author}{\bibinfo{person}{Lutz Schr{\"o}der} {and}
  \bibinfo{person}{Dirk Pattinson}.} \bibinfo{year}{2008}\natexlab{}.
\newblock \showarticletitle{Shallow models for non-iterative modal logics}. In
  \bibinfo{booktitle}{\emph{Advances in Artificial Intelligence, KI 2008}}
  \emph{(\bibinfo{series}{LNAI})}, \bibfield{editor}{\bibinfo{person}{Andreas
  Dengel}, \bibinfo{person}{Karsten Berns}, \bibinfo{person}{Thomas Breuel},
  \bibinfo{person}{Frank Bomarius}, {and} \bibinfo{person}{Thomas
  Roth{-}Berghofer}} (Eds.), Vol.~\bibinfo{volume}{5243}.
  \bibinfo{publisher}{Springer}, \bibinfo{pages}{324--331}.
\newblock


\bibitem[\protect\citeauthoryear{Schr{\"o}der and Pattinson}{Schr{\"o}der and
  Pattinson}{2009}]%
        {SchroderPattinson09a}
\bibfield{author}{\bibinfo{person}{Lutz Schr{\"o}der} {and}
  \bibinfo{person}{Dirk Pattinson}.} \bibinfo{year}{2009}\natexlab{}.
\newblock \showarticletitle{{PSPACE} Bounds for Rank-1 Modal Logics}.
\newblock \bibinfo{journal}{\emph{ACM Trans.\ Comput.\ Log.}}
  \bibinfo{volume}{10} (\bibinfo{year}{2009}), \bibinfo{pages}{13:1--13:33}.
\newblock


\bibitem[\protect\citeauthoryear{Schr{\"o}der and Pattinson}{Schr{\"o}der and
  Pattinson}{2011}]%
        {SchroderPattinson11MSCS}
\bibfield{author}{\bibinfo{person}{Lutz Schr{\"o}der} {and}
  \bibinfo{person}{Dirk Pattinson}.} \bibinfo{year}{2011}\natexlab{}.
\newblock \showarticletitle{Modular algorithms for heterogeneous modal logics
  via multi-sorted coalgebra}.
\newblock \bibinfo{journal}{\emph{Math.\ Struct.\ Comput.\ Sci.}}
  \bibinfo{volume}{21} (\bibinfo{year}{2011}), \bibinfo{pages}{235--266}.
\newblock


\bibitem[\protect\citeauthoryear{Schr{\"{o}}der, Pattinson, and
  Kupke}{Schr{\"{o}}der et~al\mbox{.}}{2009}]%
        {SchroderEA09}
\bibfield{author}{\bibinfo{person}{Lutz Schr{\"{o}}der}, \bibinfo{person}{Dirk
  Pattinson}, {and} \bibinfo{person}{Clemens Kupke}.}
  \bibinfo{year}{2009}\natexlab{}.
\newblock \showarticletitle{Nominals for Everyone}. In
  \bibinfo{booktitle}{\emph{Int.\ Joint Conf.\ Artificial Intelligence, {IJCAI}
  2009}}, \bibfield{editor}{\bibinfo{person}{Craig Boutilier}} (Ed.).
  \bibinfo{pages}{917--922}.
\newblock


\bibitem[\protect\citeauthoryear{Schr{\"{o}}der and Venema}{Schr{\"{o}}der and
  Venema}{2018}]%
        {SchroderVenema18}
\bibfield{author}{\bibinfo{person}{Lutz Schr{\"{o}}der} {and}
  \bibinfo{person}{Yde Venema}.} \bibinfo{year}{2018}\natexlab{}.
\newblock \showarticletitle{Completeness of Flat Coalgebraic Fixpoint Logics}.
\newblock \bibinfo{journal}{\emph{{ACM} Trans.\ Comput.\ Log.}}
  \bibinfo{volume}{19}, \bibinfo{number}{1} (\bibinfo{year}{2018}),
  \bibinfo{pages}{4:1--4:34}.
\newblock


\bibitem[\protect\citeauthoryear{Seidl, Schwentick, and Muscholl}{Seidl
  et~al\mbox{.}}{2008}]%
        {SeidlEA08}
\bibfield{author}{\bibinfo{person}{Helmut Seidl}, \bibinfo{person}{Thomas
  Schwentick}, {and} \bibinfo{person}{Anca Muscholl}.}
  \bibinfo{year}{2008}\natexlab{}.
\newblock \showarticletitle{Counting in trees}. In
  \bibinfo{booktitle}{\emph{Logic and Automata: History and Perspectives [in
  Honor of Wolfgang Thomas]}}, \bibfield{editor}{\bibinfo{person}{J{\"{o}}rg
  Flum}, \bibinfo{person}{Erich Gr{\"{a}}del}, {and} \bibinfo{person}{Thomas
  Wilke}} (Eds.). \bibinfo{publisher}{Amsterdam Univ.\ Press},
  \bibinfo{pages}{575--612}.
\newblock
\showISBNx{978-90-5356-576-6}


\bibitem[\protect\citeauthoryear{Tobies}{Tobies}{2001}]%
        {TobiesThesis}
\bibfield{author}{\bibinfo{person}{Stephan Tobies}.}
  \bibinfo{year}{2001}\natexlab{}.
\newblock \emph{\bibinfo{title}{Complexity results and practical algorithms for
  logics in Knowledge Representation}}.
\newblock \bibinfo{thesistype}{Ph.D. Dissertation}. \bibinfo{school}{RWTH
  Aachen}.
\newblock


\end{thebibliography}

\end{document}